\documentclass[english,leqno]{article}
\usepackage[letterpaper]{geometry}
\usepackage{amsmath,amssymb,amsthm}
\usepackage{graphicx}
\usepackage{nicefrac}
\usepackage{comment}
\usepackage{multirow}
\usepackage{adjustbox}
\usepackage{enumitem}
\usepackage[disable]{todonotes}
\usepackage{algorithm}
\usepackage[noend]{algpseudocode}
\usepackage{optidef}
\usepackage{wrapfig}
\usepackage{mdframed}
\usepackage[bottom]{footmisc} % Footnote at the bottom of page
\usepackage{thm-restate}
\usepackage[procnumbered,ruled,vlined,linesnumbered, algo2e]{algorithm2e}
\DontPrintSemicolon
\SetKw{KwAnd}{and}
\SetProcNameSty{textsc}
\SetFuncSty{textsc}
\usepackage{bm}
\let\oldnl\nl% Store \nl in \oldnl
\let\nl\oldnl% Remove line number for one line
\usepackage{hyperref}
\usepackage{comment}
\usepackage{xcolor}
\usepackage{xspace}
\usepackage{xfrac}
\usepackage[normalem]{ulem}

\usepackage{amsfonts}
\usepackage{fullpage}
\usepackage{epsfig}
\usepackage{xspace}
\usepackage{caption}
\usepackage{environ}
\usepackage{thmtools,thm-restate}
\usepackage{textcase}
\usepackage{xparse}

\definecolor{ForestGreen}{rgb}{0.1333,0.5451,0.1333}
\definecolor{DarkRed}{rgb}{0.8,0,0}
\definecolor{Red}{rgb}{1,0,0}

\declaretheorem[numberwithin=section]{theorem}
\declaretheorem[numberlike=theorem]{lemma}
\declaretheorem[numberlike=theorem]{corollary}

\declaretheorem[numberlike=theorem]{claim}
\declaretheorem[numberlike=theorem]{definition}

\usepackage[capitalize]{cleveref}
\crefname{claim}{Claim}{Claims}



\newcommand\set[1]{\{\,#1\,\}}

\newcommand{\XC}{\mathcal{C}}

\newcommand{\cost}[3]{\ensuremath{\mathsf{cost}_{#1}(#2#3)}}

\NewDocumentCommand{\COST}{ m m o }{%
	\ensuremath{%
		\IfValueTF{#3}{%
			\mathsf{cost}(#1#2#3)  % 3 arguments
		}{%
			\mathsf{cost}(#1#2)     % 2 arguments
		}%
	}%
}

\newcommand{\lp}[3]{\ensuremath{\mathsf{lp}_{#1}(#2#3)}}

\NewDocumentCommand{\LP}{ m m o }{%
	\ensuremath{%
		\IfValueTF{#3}{%
			\mathsf{lp}(#1#2#3)  % 3 arguments
		}{%
			\mathsf{lp}(#1#2)     % 2 arguments
		}%
	}%
}

\newcommand{\expected}[1]{\mathbb{E}\left[#1\right]}

\newcommand{\Cost}[1]{\ensuremath{\mathsf{cost}(#1)}}

\newcommand{\Lp}[1]{\ensuremath{\mathsf{lp}(#1)}}

\newcommand{\incl}[3]{f^{#1}(x_{{#2}{#3}})}

\newcommand{\REM}[1]{}

\newcommand{\cupdot}{\mathbin{\mathaccent\cdot\cup}}

\newcommand{\CC}{Correlation Clustering}
\newcommand{\CCC}{Constrained Correlation Clustering}
\newcommand{\charge}{\ensuremath{\kappa}}
\newcommand{\slackBad}{\ensuremath{\eta}}

\newcommand{\ALG}{\mathsf{ALG}}

\begin{document}
\title{\Large Constrained Correlation Clustering: Towards Optimality}

\author{
Sina Azizeddin \thanks{Sharif University of Technology, Iran, sina.azizeddin@gmail.com}
\and
Evangelos Kipouridis \thanks{Max Planck Institute for Informatics \&  Saarland Informatics Campus, Saarbrücken, Germany,
kipouridis@mpi-inf.mpg.de} 
\and
Nithin Varma \thanks{University of Cologne, Germany, nithvarma@gmail.com}} 
\date{}

\maketitle

\begin{abstract}
In the Correlation Clustering problem, we are given an undirected graph and are tasked with computing a clustering (partition of the nodes) that minimizes the number of violated pairs (edges across different clusters plus non-edges within clusters).
In the constrained version of this problem, the goal is to compute a clustering that satisfies additional hard constraints mandating certain pairs to be in the same cluster and certain pairs to be in different clusters.

In this work, we identify Constrained Correlation Clustering as a variant of Correlation Clustering for which optimal approximations might be within reach, and make progress towards this front.
Constrained Correlation Clustering is APX-Hard, and the optimal approximation factor is known to lie in $(\frac{24}{23},3]$.
We significantly tighten this gap, by showing that the optimal approximation factor lies in $[2,\frac{16}{7}-\gamma)$ for a small constant $\gamma>0$.
Our lower bound of $2$ shows a separation between Correlation Clustering (which admits an $1.485+\epsilon$ approximation) and Constrained Correlation Clustering\footnote{The same hardness result was obtained independently by Cao and Xu~\cite{cao2026clusterdeletionhardapproximate}.}.
Our upper bound of $\frac{16}{7}-\gamma$ uses the Sherali-Adams relaxation and goes beyond straightforward Triangle-Based analysis; more precisely, our algorithm belongs to a natural class of pivoting algorithms for which we prove that a straightforward Triangle-Based analysis cannot prove a better-than-$\frac{16}{7}$ approximation.

Finally, as a byproduct of our techniques, we completely resolve the approximability of Cluster Deletion.
Cluster Deletion is a well-studied special case of Constrained Correlation Clustering for which a $2$-approximation algorithm is known.
We show that this is optimal, as our lower bound holds even for this special case.

\end{abstract}

\thispagestyle{empty} %hide page number
\clearpage

\setcounter{page}{1} %set page number to 1
\section{Introduction}

%%Reworded INTRO

Clustering is a central problem in unsupervised learning, with many applications to machine learning and data mining. The high-level goal is to partition a set of elements into \emph{clusters} so that similar elements fall into the same cluster and dissimilar elements fall into different clusters. Among the various formalizations of clustering, one particularly elegant and natural formulation is that of \CC{}, proposed by Bansal, Blum, and Chawla~\cite{Bansal}.

\CC{} captures pairwise similarities and dissimilarities among a set $V$ of elements using a complete unweighted undirected labeled graph $(V, E^+ \uplus E^-)$, where the plus edges in $E^+$ represent similar pairs and the minus edges in $E^-$ represent dissimilar pairs. The objective is to determine a clustering that minimizes the total number of plus edges whose endpoints lie in different clusters and minus edges whose endpoints lie in the same cluster. \CC{} stands out among other clustering formulations because it does not require specifying the number of clusters as part of the input. Since its introduction, the problem has been widely studied and successfully applied to automated labeling~\cite{autoLabelAgrawal,autoLabelChakrabarti}, clustering ensembles~\cite{clusteringEnsembles}, community detection~\cite{commDet, communityVeldt}, disambiguation tasks~\cite{disambiguation}, duplicate detection~\cite{duplicate}, and image segmentation~\cite{image1,image2}, among others.

Real-world clustering tasks often come with prior knowledge that certain nodes must always be grouped together, while others must be kept apart. Several constrained clustering formulations have been proposed to model such requirements, including constrained $k$-means~\cite{cons1}, spectral clustering with constraints~\cite{Cons3}, constrained ranking and clustering~\cite{deterministicPivoting}, and constrained fuzzy clustering~\cite{cons4}. Constrained clustering is particularly valuable as it improves the quality of the resulting clusters, aligns outcomes with domain-specific knowledge, and reduces sensitivity to noisy data.

In this paper, we study \emph{Constrained Correlation Clustering}, a variant of \CC{} introduced by van Zuylen and Williamson~\cite{deterministicPivoting} that incorporates the notion of critical pairs of nodes. The input to \CCC{} consists of the labeled graph $(V, E^+ \uplus E^-)$ together with a collection of hard pairwise constraints: a \emph{must-link} constraint requires two nodes to be in the same cluster, while a \emph{cannot-link} constraint requires them to be in different clusters. Only clusterings that satisfy all hard constraints are considered valid, and the goal is to find an optimal valid clustering. Practical applications of \CCC{} include, for example, clustering news articles about the same event across different languages~\cite{CrosslingualGaelZ07}, where the hard constraints ensure that articles covering different events within the same language do not end up in the same cluster. There are also applications in DNA clone classification~\cite{dnaVertexDeletion}. In fact, in these cases there are no must-link constraints, and all minus edges are cannot-link constraints (also known as the \emph{Cluster Deletion} problem).

\subsection{Previous Results}
Bansal, Blum, and Chawla~\cite{Bansal}, who defined \CC{}, also proved that it is NP-Hard. They further provided a deterministic constant-factor approximation, where the constant is larger than 15,000. The approximation guarantee was subsequently improved further, by relying on LP-rounding techniques. Charikar, Guruswami and Wirth~\cite{cutRadius} gave a deterministic $4$-approximation, while also showing that the problem is APX-Hard. Ailon, Charikar and Newman~\cite{pivoting} presented a randomized $2.5$-approximation, and Chawla, Makarychev, Schramm and Yaroslavtsev~\cite{NearOptimal2} gave a deterministic $2.06$-approximation. The last result is nearly optimal among algorithms that round the natural LP formulation, since its integrality gap is at least $2$. A breakthrough result by Cohen-Addad, Lee and Newman~\cite{sub2} used the Sherali-Adams relaxation to obtain a $(1.994+\epsilon)$-approximation and broke the $2$ barrier. The approximation guarantee was later improved to $1.73+\epsilon$ by Cohen-Addad, Lee, Li and Newman \cite{apx173}, and even to $1.485+\epsilon$ by~\cite{clusterLP}. 
The current-state of the art is an adaptation of the $1.485+\epsilon$ approximation that runs in linear~\cite{sublinearMikkelSTOC}, and even sublinear~\cite{sublinearMikkelSTOC} time. 
There is also a combinatorial $1.847$-approximation by~\cite{combinatorialMikkel} for the problem.

Correlation Clustering has also been studied in different settings such as dynamic algorithms~\cite{dynNik}, sublinear and streaming algorithms~\cite{sublinear, dynamicStream, BehMany, BehSingle, MakarySingle, dynamicStream}, parameterized algorithms~\cite{fpt}, massively parallel computation~(MPC) algorithms~\cite{parallel, sub3parallel}, and differentially private algorithms~\cite{differentialPrivacy}.

The approximation of the Cluster Deletion problem (special case of \CCC{} where there are no must-link constraints, and all minus edges are cannot-link constraints) has been extensively studied \cite{cutRadius, deterministicPivoting, dnaVertexDeletion, local, communityVeldt, deterministic, veldtFaster2, clusterDeletionSpecialClassesBoundedClique}.
The current best is a $2$ approximation, by Veldt, Gleich, and Wirth \cite{communityVeldt}.
Subsequent research focused on faster algorithms even at the expense of worse approximation factors~\cite{deterministic, veldtFaster2}, and better-than-$2$ approximations for particular graph classes~\cite{clusterDeletionSpecialClassesBoundedClique}. 

For \CCC{} the state-of-the-art is a deterministic $3$-approximation, presented by van Zuylen and Williamson~\cite{deterministicPivoting}, who introduced the problem.
Due to the significance of the problem, a follow-up work by Fischer, Klausen, Kipouridis and Thorup~\cite{EvangelosSTACS} focused on faster algorithms for \CCC{}, even at the expense of the approximation guarantee ($16$); this was improved by Veldt to a $3+\varepsilon$ approximation \cite{VeldtConstrained}.
Finally, in \cite{kalavas} the authors show how an adaptation of the Linear Program (Cluster LP) solved in \cite{clusterLP} would imply a better-than-$2$ approximation for \CCC{}. 

As a final remark, we point out that it is not possible, in general, to handle the must-link constraints by simply merging the nodes in connected components induced by these constraints into single nodes, as this results in a weighted instance outside the problem.

\subsection{Our Contribution and Techniques}
The optimal approximation factor for Constrained Correlation Clustering is known to lie in $(\frac{24}{23},3]$ \cite{clusterLP, deterministicPivoting}.
In this work, we tighten the gap to $[2,\frac{16}{7}-\gamma)$ for some constant $\gamma>0$, hinting that $2$ might be the optimal approximation for \CCC{}\footnote{On a more informal note, for both Correlation Clustering and Chromatic Correlation Clustering the only known lower bound is $\nicefrac{24}{23}$~\cite{clusterLP}, while they both admit better-than-$2$ approximations~\cite{clusterLP,chromaticClusterLP}, making it difficult to conjecture the actual optimal approximation factor. To the best of our knowledge, all other variants of Correlation Clustering (such as Pseudometric-Weighted Correlation Clustering~\cite{barrierPseudometric}) also lack ``natural'' lower bounds.}.

\paragraph{Lower Bound}
Our lower bound of $2$ shows a separation between Correlation Clustering (which admits an $1.485+\epsilon$ approximation \cite{clusterLP}) and Constrained Correlation Clustering. It is conditioned on the Unique Games~\cite{ugc}.
Furthermore, combined with the work of \cite{kalavas}, our work implies that the Cluster LP (the most important tool in the state of the art approximation for Correlation Clustering) cannot be adapted in the constrained setting.

\begin{restatable}{theorem}{hardness} \label{thm:hardness}
    Under the Unique Games Conjecture, there is no $(2 - \varepsilon)$-factor approximation algorithm for \CCC{}, for any constant $\varepsilon>0$.
    Furthermore, under the Unique Games Conjecture, there is no $(2 - \varepsilon)$-factor approximation algorithm for Cluster Deletion, for any constant $\varepsilon>0$.
\end{restatable}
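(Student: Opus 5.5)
The plan is to prove the Cluster Deletion statement. The \CCC{} statement then follows at once, since Cluster Deletion is the special case with no must-link constraints and every minus edge a cannot-link constraint. The reduction I would use is gap-preserving and starts from \emph{Vertex Cover} under the Unique Games Conjecture. By Khot and Regev, for every $\delta>0$ it is UGC-hard to decide whether an $n$-vertex graph $G$ has a vertex cover of size at most $(\frac12+\delta)n$ or every vertex cover has size at least $(1-\delta)n$. I would also use the standard fact that one may take $G$ to be $d$-regular up to lower-order terms and essentially triangle-free, by taking products or sparsifying the hard instances. The target is a Cluster Deletion instance $H$ with the following property: in the completeness case $\mathrm{OPT}(H)\approx A+B\cdot(\frac12)n$, in the soundness case $\mathrm{OPT}(H)\gtrsim A+B\cdot n$, and $A$ is negligible compared with $Bn$. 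Then the ratio tends to $2$.

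The gadget I would try first is the following. Each vertex $v$ of $G$ gets a node $v$ together with a private clique $Q_v$ of size $k$, with all plus edges between $v$ and $Q_v$. Each edge of $G$ becomes a plus edge between the corresponding nodes. All remaining pairs are minus edges, i.e.\ cannot-links. A valid clustering is a partition of $H$ into cliques, and its cost is the number of plus edges that are cut. I would first argue that we may assume, without loss of generality, that each $Q_v$ is never split and is clustered either alone or together with $v$: splitting $Q_v$ costs $\Omega(k)$, which beats every alternative once $k$ is large. For each $v$ there are then two options. If $v$ stays with $Q_v$, all $d$ of its $G$-edges are cut. If $v$ leaves $Q_v$, this costs $k$, and $v$ can then share a cluster only with a clique of $G$, which by triangle-freeness is at most a single neighbor. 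So the cost is governed by the set $S$ of vertices that leave, together with a matching inside $G[S]$, and the $G$-edges cut are exactly those not saved by that matching.

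The key step, and the one I expect to be the main obstacle, is calibrating $k$ against $d$ so that the optimal cost becomes an affine function of the minimum vertex cover size with the right slope ratio. The intended relationship uses complements. The vertices that stay with their cliques should form an independent set $I=V\setminus C$; every edge of $G$ is incident to $C$ and must then be paid for essentially once, while independence of $I$ means no edge is paid for twice. The saving relative to the trivial clustering should be proportional to $|I|$. I would tune the parameters so that, after normalization, the cost is roughly $k\,|C|$ plus lower-order terms; this amounts to balancing $k$ against the edge cost of $d$ per vertex. It may instead require replacing the single node $v$ by a cloud of nodes, so that the $G$-edges carry weight comparable to $k$. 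If the single-node gadget gives a ratio strictly below $2$, I would first replace each vertex $v$ by a blown-up independent set of copies, with plus edges between the copies of adjacent vertices; this scales the edge term to match the clique term.

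With the calibration fixed, the remaining steps are routine. Completeness: from a vertex cover $C$ of size $(\frac12+\delta)n$, exhibit a clustering of cost at most $(\frac12+O(\delta))Bn+A$. Soundness: from any clustering, use the normalization above to extract a set $C$, and show that $C$ is a vertex cover of size at most $(\mathrm{cost}-A)/B$, up to $O(\delta)$ corrections coming from the matched edges. Since $G$ is triangle-free and of bounded degree, those corrections should be handled by a simple charging argument. Taking $\delta\to 0$ and $k,d$ large then shows that a $(2-\varepsilon)$-approximation would decide the Vertex Cover gap.
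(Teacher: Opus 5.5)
\textbf{There is a genuine gap: your gadget does not encode Vertex Cover.} The plan fails at its core, not at the calibration step. In Cluster Deletion every cluster is a clique of the plus graph, and your plus edges inside $V$ are exactly the edges of $G$. Hence the projection onto $V$ of every cluster is a clique of $G$, which is a vertex or an edge when $G$ is triangle-free.

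A $G$-edge can be uncut only if both endpoints have left their private cliques, each paying $k$, and these uncut edges form a matching $M$. So every clustering costs at least $|E|-|M|+2k|M|\ge |E|$, and the trivial clustering $\{\{v\}\cup Q_v\}_{v\in V}$ is optimal. Thus $\mathrm{OPT}=|E|$ regardless of $\mathrm{vc}(G)$. For $d$-regular $G$ it equals $dn/2$ in both the completeness and the soundness case, so the gap is $1$. Even without the private cliques, Cluster Deletion on a triangle-free plus graph is maximum matching, which is polynomial-time solvable.

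The error is hidden in your accounting. Nothing forces the vertices that stay with their cliques to be independent, since all of them can stay. Also, no edge is ever ``paid for twice'': each pair is simply cut or not. Blowing vertices up into clouds of copies, whether the clouds are independent sets or cliques, does not help. The projection of every cluster onto $V$ is still a clique of $G$, so the optimum remains a matching-type quantity rather than a function of $\mathrm{vc}(G)$. No choice of $k$ and $d$ fixes this. Encoding $G$'s edges as plus edges ties clusters to cliques of $G$, whereas Vertex Cover is about independent sets.

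\textbf{The missing idea is to encode $G$ through its complement.} The paper makes the edges of $G$ the cannot-link (minus) pairs and all other pairs plus, so the $V$-part of every cluster is an independent set of $G$. It then adds a set $V''$ of $n^3$ nodes joined by plus edges to everything, with no hard constraints on $V''$.

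\emph{Upper bound on the optimum.} Cluster a maximum independent set together with $V''$ and make every other vertex a singleton. This costs at most $n^3\,\mathrm{vc}(G)+\binom{n}{2}$.

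\emph{Extracting a cover.} A $(2-\varepsilon)$-approximate clustering must keep at least $0.99n^3$ nodes of $V''$ in a single cluster $C_{big}$, since otherwise it cuts $\Omega(n^6)$ plus edges. Moving all of $V''$ into $C_{big}$ does not increase the cost. After that move, every vertex of $V\setminus C_{big}$ pays at least $n^3$. Moreover, $V\setminus C_{big}$ is a vertex cover because $V\cap C_{big}$ is independent.

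Combining the two gives $|V\setminus C_{big}|\le (2-\varepsilon)\bigl(\mathrm{vc}(G)+\binom{n}{2}/n^3\bigr)\le (2-\frac{\varepsilon}{2})\,\mathrm{vc}(G)$ for large $n$. This is a direct approximation-preserving reduction from Khot--Regev. It needs no gap instances, no regularity, and no triangle-freeness; the last of these is an assumption you did not justify in any case.
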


It is known that Cluster Deletion is approximable within a factor $2$~\cite{communityVeldt}.
Our lower bound immediately implies that this is optimal.

\begin{corollary}
There exists a polynomial time algorithm for Cluster Deletion whose approximation factor is (the optimal) $2$.
\end{corollary}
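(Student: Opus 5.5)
The corollary combines two facts. The first is the known upper bound: Veldt, Gleich, and Wirth~\cite{communityVeldt} give a polynomial-time $2$-approximation for Cluster Deletion. The second is the matching lower bound from the second part of \cref{thm:hardness}.

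The plan has two steps.

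First, I would restate the guarantee of~\cite{communityVeldt}: on every Cluster Deletion instance, their algorithm runs in polynomial time and returns a valid clustering of cost at most $2\cdot\mathrm{OPT}$. I would check that their setting matches ours. In Cluster Deletion every minus edge is a cannot-link constraint and there are no must-link constraints, so a valid clustering is exactly a partition of $V$ into cliques of $E^+$. The cost is then the number of plus edges cut, which is precisely the objective they approximate via LP rounding.

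Second, I would invoke \cref{thm:hardness}. Under the Unique Games Conjecture, no polynomial-time algorithm achieves a factor $2-\varepsilon$ for Cluster Deletion, for any constant $\varepsilon>0$. Hence no constant below $2$ is achievable. Because a factor of exactly $2$ is attained, $2$ is the optimal approximation factor, and the corollary follows.

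The main obstacle is not technical; it is one of interpretation. The word ``optimal'' must be read as conditional on the Unique Games Conjecture, since that is the assumption under which \cref{thm:hardness} holds. I would also stress that the algorithm achieves exactly $2$ rather than $2+\epsilon$, so the upper and lower bounds coincide with no gap.
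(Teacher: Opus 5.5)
Your proposal is correct and follows the paper's argument: the paper likewise cites the polynomial-time $2$-approximation of \cite{communityVeldt} and notes that the Cluster Deletion part of \Cref{thm:hardness} shows it is optimal. You are also right that ``optimal'' here holds only under the Unique Games Conjecture.
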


We note that the same lower bound result was obtained independently by Cao and Xu~\cite{cao2026clusterdeletionhardapproximate}.

\paragraph{The barrier of $\frac{16}{7}$}
A particular type of algorithms, called pivoting algorithms, have been extensively used for Correlation Clustering~\cite{pivoting, NearOptimal2, sub2, apx173, clusterLP}.
A common type of analysis for such algorithms is the Triangle-Based analysis. From a high-level view, in this analysis we bound the ratio of the cost of the algorithm to the budget given by a Linear Program (LP, \Cref{fig:LP}), independently for every triplet of nodes (triangle).
The overall approximation is upper bounded by the maximum such ratio.

To address hard constraints, a natural class of pivoting algorithms is presented in \Cref{alg:strictPivot}. We call these algorithms \emph{strict pivoting} algorithms. Let us use the term \emph{supernode} to refer to a maximal set of nodes connected with must-link constraints. From a high-level view:
\begin{itemize}
    \item The LP relaxation in \Cref{fig:LP} gives a distance in $[0,1]$ for every pair of nodes, such that must-link pairs have distance $0$ and cannot-link pairs have distance $1$.
    \item At every step of the algorithm, until all nodes are clustered, we select a supernode $p$ (pivot) at random, with probability proportional to the number of nodes it contains. 
    \item For every unclustered supernode $s$, we decide whether to include it in the cluster of $p$ with probability related to the LP distance of $p$ and $s$, and the type of edges connecting the supernode of $p$ and $s$.
    Crucially, if the LP distance is $0.5$ or larger, we do not cluster $s$ together with $p$ (hence the term \emph{strict pivoting}).
\end{itemize}
Notice that hard constraints are always satisfied.
This follows trivially for must-link constraints.
For cannot-link constraints, notice that within a cluster every node is at distance less than $0.5$ from the pivot.
Therefore, by triangle inequality, the diameter of any cluster is less than $1$; as cannot-link pairs have distance exactly $1$, this means that we do not violate cannot-link constraints.

We show that the straightforward Triangle-Based analysis used for such algorithms cannot give a better-than-$\frac{16}{7}$ approximation, while also presenting an algorithm that achieves $\frac{16}{7}$ approximation.
It is worth noting that for Correlation Clustering, a similar barrier exists ($2.025$), but no algorithm within the corresponding framework achieves this bound (best known approximation is $\approx 2.06$); see~\cite{NearOptimal2}.
Similarly, for Chromatic Correlation Clustering, within the corresponding framework the lower bound $2.11$ and upper bound $2.15$ do not match~\cite{barrierPseudometric}. We are only aware of Pseudometric-Weighted Correlation Clustering where within its corresponding framework an optimal $\nicefrac{10}{3}$ approximation has been designed (however, in this case, no better algorithm is known, while for \CCC{} we show a $\frac{16}{7}-\gamma$ approximation).

\paragraph{Approximation beyond $\frac{16}{7}$}
Our upper bound of $\frac{16}{7}-\gamma$ (for a small $\gamma>0$) goes beyond the standard Triangle-Based analysis.
We note however that the algorithm itself uses the framework of \Cref{alg:strictPivot}.
Unlike our analysis for the $\frac{16}{7}$-approximation algorithm, for which it is sufficient to use the properties of the standard LP (\Cref{fig:LP}), our $\frac{16}{7}-\gamma$ approximation requires the usage of the Sherali-Adams relaxation (\Cref{fig:SA}).

The argument uses a global charging scheme, inspired from the first better-than-$2$ approximation for Correlation Clustering~\cite{sub2}\footnote{
For the knowledgeable reader, the core argument in ~\cite{sub2} shows that we cannot have too many ``bad'' triangles in any subgraph of $4$ nodes; in our case the argument gets more technical and requires subgraphs of $6$ nodes.
}

\begin{restatable}{theorem}{mainTheorem}
\label{thm:mainTheorem}
There exists a $\frac{16}{7}-\gamma$ approximation for \CCC{}, for some constant $\gamma>0$.
\end{restatable}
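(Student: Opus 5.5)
We prove the theorem in two stages. First we obtain an exact $\frac{16}{7}$-approximation inside the strict pivoting framework of \Cref{alg:strictPivot}. Then we use the Sherali-Adams relaxation (\Cref{fig:SA}) to show that the triangles on which the $\frac{16}{7}$ analysis is tight cannot carry a constant fraction of the LP budget. That yields the gain $\gamma$.

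\textbf{Stage 1: the $\frac{16}{7}$ algorithm.} We fix rounding functions $f^+(x)$ and $f^-(x)$, chosen separately for plus and minus pairs. Each is a piecewise-linear or threshold function with $f(x)=0$ for $x\ge 1/2$, so that cannot-link constraints are automatically respected by the diameter argument given after \Cref{alg:strictPivot}. We treat supernodes as weighted points. Since the pivot is chosen with probability proportional to size, a pair of supernodes of sizes $a$ and $b$ contributes $ab$ pairs, and every pair of nodes is an ordinary node-pair. Hence the usual charging argument of~\cite{pivoting,NearOptimal2} goes through: for every triangle $(p,u,v)$ it suffices to prove
\[
\mathrm{cost}(u,v\mid p)\le \tfrac{16}{7}\,\mathrm{lp}(u,v\mid p),
\]
where both sides are computed from $x_{pu},x_{pv},x_{uv}$ and the edge signs, with the triangle inequality on $x$. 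Must-link pairs have $x=0$ and cannot-link pairs have $x=1$, and both are forced to cost $0$, so they only help. This step is a case analysis over the sign patterns $(+++,++-,+--,---)$ and over the regions where each $x$ lies below or above $1/2$ and the breakpoints of $f$. We would pick the breakpoints by solving the linear program that equalizes the worst cases. We expect the tight configurations to be a small number of explicit points, for example $+{+}-$ triangles with $x$ values near $\{0,1/2,1/2\}$ or near a threshold. The matching barrier example indicates which configurations these are.

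\textbf{Stage 2: beating $\frac{16}{7}$.} We isolate the set $\mathcal{B}$ of \emph{bad} triangles, meaning those whose ratio exceeds $\frac{16}{7}-\delta$. By continuity, every bad triangle lies within $\eta(\delta)$ of one of the tight configurations. If bad triangles account for at most a $(1-c)$ fraction of the total LP charge, we are done with $\gamma=\Theta(c\delta)$. Otherwise we run a global charging argument in the spirit of~\cite{sub2}. Each bad triangle is charged to a small subgraph of $6$ nodes containing it. Using the Sherali-Adams variables of level $O(1)$, the local distributions on these $6$ nodes are genuine distributions over clusterings. We then show that in any such distribution not all of the $\binom{6}{3}$ triangles can be near-tight at the same time: the near-tight pattern forces, for instance, probability about $1/2$ on incompatible joint events. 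So each $6$-set contains a constant fraction of good triangles, or the LP value on it exceeds what the bad triangles alone would charge. Averaging over all $6$-sets, with the appropriate pivot-weighted counting, transfers a constant amount of slack to the bad triangles.

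\textbf{Main obstacle.} We expect the main difficulty to be the combinatorial core of Stage 2: identifying exactly which local Sherali-Adams distributions are ruled out, and designing a charging scheme, with a weighting over $6$-node configurations, under which every bad triangle receives slack proportional to its LP charge without double counting. We would first enumerate the tight configurations from Stage 1. Next we would write the $6$-point local distribution as a small LP and verify infeasibility of the all-tight pattern, in a computer-assisted way if necessary. Finally we would set $\delta$, $\eta$ and $\gamma$ small enough to absorb the continuity errors. Polynomial time follows from solving Sherali-Adams at constant level.
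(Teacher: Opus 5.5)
Your architecture (strict pivoting, then a $6$-node Sherali--Adams charging in the style of \cite{sub2}) matches the paper's. Both stages, however, have a genuine gap.

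\textbf{Stage 1: the supernode rounding rule is missing.} A supernode $W$ joins the pivot's cluster as a unit, with a single probability. The pairs between the pivot's supernode $U$ and $W$ may be a mix of plus and minus pairs, so that probability cannot equal $f^+(x)$ for the plus pairs and $f^-(x)$ for the minus ones. Hence the cost of a node triangle is \emph{not} determined by its three $x$-values and three signs. Contracting supernodes to weighted points gives pairs carrying both plus and minus weight, which is outside the problem, and the node-level analysis of \cite{pivoting,NearOptimal2} does not apply as is.

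You need an explicit rule and a reduction lemma. The paper includes $W$ with probability $\beta_{UW}f^+(x)+(1-\beta_{UW})f^-(x)$, where $\beta_{UW}$ is the fraction of plus pairs. It then proves (\Cref{lem:edgesSame}) that the worst case has every $\beta\in\{0,1\}$. With three distinct supernodes, cost and LP are multilinear in the $\beta$'s. Now take $u\in U$ and $w,z\in W$, which is exactly the must-link case you dismiss as ``only helping''. There the aggregated cost is quadratic in $\beta_{UW}$ with leading coefficient proportional to $2(f^-(x)-f^+(x))$, while the LP is linear. So the reduction needs $f^+\le f^-$ pointwise (\Cref{lemma:funcsInequality}). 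In words: at equal distance, a minus pair must be merged at least as readily as a plus pair, which is the opposite of the behaviour of the functions in \cite{NearOptimal2}.

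The functions are also more constrained than ``piecewise linear with LP-tuned breakpoints'' suggests.
\begin{itemize}
\item The tight configuration is $x=(\frac14,\frac14,\frac12)$ with both $\frac14$-edges positive, not $\{0,\frac12,\frac12\}$. For independent rounding and ratio $\frac{16}{7}$ this forces $f^+(\frac14)=\frac23$.
\item Small equilateral $+++$ triangles have ratio tending to $2|(f^+)'(0)|$, which forces the initial slope to be at most $\frac87$. So a single linear piece on $[0,\frac14]$ already fails.
\item Your inequality must be summed over the three pivots of a triangle. For a single pivot at an endpoint of the $\frac12$-edge of the tight configuration, the ratio is $4$.
\end{itemize}

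\textbf{Stage 2: the target statement is too weak.} ``Not all $\binom{6}{3}$ triangles of a $6$-set are near-tight'' does not produce slack. A non-tight triangle may carry almost no LP budget (all distances near $0$), or be unrelated to the bad one, so moving a constant charge onto it can ruin its own ratio. The absorbing triangles need three things: LP value bounded below, ratio bounded away from $\frac{16}{7}$, and controlled multiplicity.

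The paper gets all three by centering.
\begin{itemize}
\item A bad triangle $uvw$ has $x_{uv},x_{uw}\in\frac14\pm\eta$ and $x_{vw}\in\frac12\pm\eta$.
\item For each $u$, let $G_u$ have vertex set $\{v: x_{uv}\in\frac14\pm\eta\}$ and edges $\{vw: x_{vw}\in\frac12\pm\eta\}$.
\item Bad triangles centered at $u$ are the edges of $G_u$. The chargeable triangles are its non-edges and its vertices (degenerate pairs $uv$).
\item Chargeable triangles have LP value above $0.2$ (\Cref{lem:enoughBudget}) and ratio bounded away from $\frac{16}{7}$.
\end{itemize}

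The $6$-node Sherali--Adams step is applied to $u,v_1,\dots,v_5$. The events ``$v_i$ not with $u$'' each have mass about $\frac14$ and pairwise overlap $O(\eta)$, so $G_u$ is $K_5$-free. Tur\'an then gives $|E_u|\le 3\bigl(\binom{|V_u|}{2}-|E_u|+|V_u|\bigr)$; the vertex terms are needed, since $G_u$ may be a $K_4$. Each chargeable triangle has at most three centers. Hence the weights $-\kappa$ on bad triangles and $+9\kappa$ on chargeable ones have nonnegative sum (\Cref{lem:hPositive}). These weights depend only on $x$-values, which are constant within a supernode class, so they survive the $\beta\in\{0,1\}$ reduction as an additive constant.
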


\paragraph{Comparison with Correlation Clustering}
There is a big literature on (unconstrained) Correlation Clustering.
It is therefore imperative to discuss if/how techniques that worked in the unconstrained setting can be expected to provide results for the constrained setting as well.
We focus on techniques that led to better-than-$3$ approximations, as a $3$-approximation for \CCC{} was already known before our work.

\begin{itemize}
    \item \textit{Cluster LP}: The Cluster LP is an exponentially large LP that has been shown to be solvable in polynomial time~\cite{clusterLP}. It has been used to achieve the state of the art $1.485+\epsilon$ approximation for Correlation Clustering.
    However, as shown in \cite{kalavas}, solving this LP in the constrained setting would imply a better-than-$2$ approximation for \CCC{}.
    As per our \Cref{thm:hardness}, this would contradict the Unique Games Conjecture.
    \item \textit{Correlated Rounding}: The first algorithm achieving a $2$-approximation for Correlation Clustering (which was improved to $1.994$ in the same paper~\cite{sub2}) used a pivoting algorithm and correlated rounding. Specifically, for any two nodes $u,v$ that differ from the pivot $p$, the decision on whether to include $u$ in the cluster containing $p$ was not independent from the decision regarding $v$. Unfortunately, our $\frac{16}{7}$ barrier holds even if correlated rounding is used. Furthermore, implementing correlated rounding that would respect the hard constraints seems to be the main barrier in solving the Constrained Cluster LP, and is thus unlikely.
    \item \textit{Set-based Rounding}: An important ingredient to improve the approximation from $1.994$ to $1.73$ was the set-based clustering~\cite{apx173}. According to the authors, this uses the ``the power of [...] the correlated rounding technique'', which we already highlighted as a challenge in the constrained setting.
    \item \textit{Local Search}: The local search technique from~\cite{combinatorialMikkel} assumes access to some ``near-optimal'' clusters and provides a better-than-$2$ approximation.
    Even though obtaining such clusters is possible for Correlation Clustering, in the constrained setting it would automatically imply a better-than-$2$ approximation~(\cite{kalavas}), which would contradict the Unique Games Conjecture~\cite{ugc}.
    \item \textit{Specialized Rounding}: Pairing the pivoting algorithm with specialized rounding functions has been used extensively~\cite{pivoting,NearOptimal2,sub2,apx173,clusterLP} for \CC{}.
    However, for \CCC{}, only a very simple rounding (rounding to the closest integer) has been employed~\cite{deterministicPivoting} so far.
    A possible reason is that the must-link constraints enforce (a special case of) correlated rounding, which turned out to be a significant challenge for our work as well.
    Concretely, all previous approaches use two different rounding functions, depending on whether a plus or a minus edge is rounded. In order to satisfy must-link constraints however, we need to round sets of multiple edges simultaneously, while it is not necessary for all of them to be plus (resp.\ minus).
    
    We show that using a convex combination of two particular functions (one for plus and one for minus edges), with the coefficients depending on the number of plus (resp. minus) edges in the set of edges we round, it suffices to consider cases where all edges are plus or all are minus\footnote{On a technical note which may only be clear after reading \Cref{sec:barrier}: this is again non-trivial, exactly because of the must-link constraints, which are the sole reason a non-linear (in the coefficients of the convex combination) behavior is introduced.}.
    
    Perhaps surprisingly, the key property of the particular functions we use, that allows for the aforementioned argument, is that if a plus and a minus edge have the same LP distance, then we are more likely to cluster together the endpoints of the minus edge, rather than the endpoints of the plus edge.
    This was not the case, for example, in the work of~\cite{NearOptimal2} that introduced the specialized rounding framework.
\end{itemize}

\paragraph{Summing up}
To summarize the contributions of our work, we identify \CCC{} as a variant of \CC{} for which an optimal approximation might be within reach.
We conjecture the optimal approximation to be $2$, and prove that it lies in $[2,\frac{16}{7})$.
At the same time, our results demonstrate that the Cluster LP cannot be $(1+\varepsilon)$ approximated in the constrained setting, and is thus unlikely to help in designing an optimal approximation.
Furthermore, the optimality cannot be achieved using specialized rounding and a straightforward Triangle-Based analysis, even if the actual optimal is larger than $2$; that is because we show that, unlike our upper bound, this framework cannot prove better-than-$\frac{16}{7}$ approximations.
Finally, as a byproduct of our work, we completely settle the approximability of the Cluster Deletion problem, showing that the optimal approximation factor is $2$.

\subsection{Related Work}
Closely related to the problems considered by us are hierarchical clustering problems under similar constrained settings.
In particular, for each pair $\{u,v\}$ we are given an upper and a lower bound regarding the distance of $u$ and $v$ in the output (Farach, Kannan, and Warnow~\cite{robust}).
Ailon and Charikar, in~\cite{charikarTree}, make progress on constrained hierarchical clustering, but optimal algorithms are still elusive.
We note that \CCC{} is a special case of constrained hierarchical clustering (where the hierarchy is trivial).
Tree reconstruction problems with lower bound constraints (similar to the cannot-link constraints) have also been studied \cite{Ma3Infinity, l0TreeLower}.

%Ranking problems have also been studied under constrained settings similar to ours (see~\cite{deterministicPivoting} and references therein). In particular, their hard constraints are pairs that must have a particular ordering in the output.

\subsection{Organization}
The rest of the paper has the following structure.
In \Cref{sec:prelims}, we introduce some necessary definitions and results.
In \Cref{sec:hardness}, we present our lower bound of $2$.
In \Cref{sec:barrier}, we present our algorithmic framework and the $\frac{16}{7}$ barrier.
In \Cref{sec:analysis}, we present our algorithm and prove that it achieves a $\frac{16}{7}$ approximation, while in \Cref{sec:beyond}, we prove a $\frac{16}{7}-\gamma$ approximation, for a small constant $\gamma>0$.

\section{Preliminaries}\label{sec:prelims}
The graphs $(V,E^+\uplus E^-)$ we consider in this paper are complete, unweighted, undirected, and every edge is labeled plus or minus (it belongs to one of $E^+$ or $E^-$, respectively).
Throughout the paper we refer to a set $F \subseteq \binom{V}{2}$ as the friendly (must-link) constraints, and to a set $H \subseteq \binom{V}{2}$ as the hostile (cannot-link) constraints. Together, they constitute the so-called \emph{hard constraints}. 

We typically set~$n = |V|$.
Let $\binom{A}{2}$ denote the set of all size-$2$ subsets of a set $A$.
We often abbreviate the (unordered) set $\set{u, v}$ by $uv$, and the (unordered) set $\set{u, v, w}$ by $uvw$.
We use $\cupdot$ to denote unions of disjoint sets.

A clustering $\mathcal{C}=\set{C_1,\ldots, C_k}$ is a partition of $V$; each set $C_i$ is called a cluster.
We denote by $\mathcal{C}(u)$ the cluster of $\mathcal{C}$ containing $u$.

\begin{definition}[Constrained Correlation Clustering]
Given $(V,E^+\uplus E^-,F,H)$, where $E^+, E^-, F, H \subseteq \binom{V}{2}$, compute a partition of $V$ (clustering) $\mathcal{C}=\set{C_1,\ldots, C_k}$ such that:
\begin{itemize}
    \item if $uv \in F$, then both $u$ and $v$ are in the same cluster $C_i$;
    \item if $uv \in H$, then $u$ and $v$ are in different clusters $C_i\ni u, C_j\ni v$, where $i \neq j$;
    \item $\XC$ minimizes $|\set{uv\in E^+ \mid \XC(u) \ne \XC(v)}| + |\set{uv\in E^- \mid \XC(u) = \XC(v)}|$ out of all possible clusterings that satisfy the above two conditions.
\end{itemize}
\end{definition}

As per~\cite{deterministicPivoting}, who introduced \CCC{}, we assume that the constraints are consistent:
\begin{itemize}
    \item if $ab,bc\in F$, then also $ac\in F$ ($F$ is transitive);
    \item if $ab\in F$ and $bc\in H$ then $ac\in H$;
    \item $F\cap H = \emptyset$;
    \item if $ab\in F$ then $ab\in E^+$; similarly if $ab\in H$ then $ab\in E^-$.
\end{itemize}
It is easy to see that the set of feasible clusterings does not change if we minimally extend $F,H$ to satisfy the first two conditions.
It is also straightforward that given the first two conditions, the third condition holds if and only if the instance is satisfiable (there exists a feasible clustering).
Finally, if an algorithm $\alpha$ approximates satisfiable instances that satisfy the fourth condition, then it $\alpha$ approximates any satisfiable instance.

From now on, we use the term \emph{supernode} to denote a maximal set of nodes that are pairwise friendly.
For a node $u\in V$, we use $s(u)$ to denote the supernode containing $u$, and $s(V)$ to denote the set of all supernodes.

\paragraph{The standard LP:}
In the standard LP (\Cref{fig:LP}), we have a variable $x_{uv}$ for every pair $uv$ which can be interpreted as the distance between $u,v$.
In an integral clustering, if $u,v$ are in the same cluster then $x_{uv}=0$, and if they are in different clusters then $x_{uv}=1$.
The objective function of this LP (\Cref{fig:LP}) follows directly by the definition of \CCC{}.
We enforce triangle inequality (which means that if $uw$ are in the same cluster, and $vw$ are also in the same cluster, then $uv$ are in the same cluster) and hardcode the values $0$ and $1$ in the case of hard constraints.

\begin{figure}
    \centering
    \begin{align*}
\min \quad & \sum_{(i, j) \in E^+} x_{uv} + \sum_{(i, j) \in E^-} (1-x_{uv})  \\
\mbox{s.t. } & x_{uv} \leq x_{uw} + x_{vw} && \forall u,v,w \in V \\
& x_{uv} = 0
&& \forall uv \in F \\
& x_{uv} = 1
&& \forall uv \in H\\
& x_{uv} \in [0, 1]
&& \forall u,v \in V
\end{align*}
    \caption{The standard LP relaxation for \CCC{}}
    \label{fig:LP}
\end{figure}

\paragraph{Sherali-Adams relaxation:}
Similarly to the standard LP, we have the Sherali-Adams relaxation.
Here it is convenient to introduce the variables $y_{uv} = 1-x_{uv}$.
We note that any solution to the Sherali-Adams relaxation is also a feasible solution to the standard LP; we highlight this by adding the triangle-inequality on the $x$ variables, even though it is implied by the rest of the constraints.
Furthermore, the optimal cost of the Sherali-Adams relaxation is also a lower bound for the optimal cost of the \CCC{} instance (although maybe higher than the optimal cost of the standard LP).

More formally, the definition of the $r$-rounds of the Sherali-Adams relaxation is in \Cref{fig:SA}.
In our paper, we use $r=6$.
For disjoint sets $S_1, \dots,
S_{\ell} \subseteq V$ such that $\sum_{i=1}^{\ell} |S_i| \leq r=6$, we
have a variable $y_{S_1 | S_2 | \dots | S_{\ell}}$ indicating whether the optimal partition induced by $S_1 \cup \dots \cup
S_{\ell}$ is exactly $(S_1, \dots, S_{\ell})$. For example, for two vertices
$u$ and $v$, $y_{u|v}$ indicates whether
$u$ and $v$ are in different clusters in the optimal solution and
$y_{uv}$ indicates whether they are in the same cluster,
so that $y_{u|v} + y_{uv} = 1$.
The main benefit of the Sherali-Adams relaxation is that it allows us to naturally argue about the relation of more than $2$ nodes (but only up to $r=6$ nodes).
For example, we can use arguments of the form $y_{uv} = y_{uv|w} + y_{uvw}$.

\begin{figure}
\begin{align*}
  \min \quad & \sum_{ij \in E^+} x_{ij}  + \sum_{ij \in E^-} (1-x_{ij}) \\
\mbox{s.t.} \quad & y_{T_1|\ldots|T_k}   = \sum_{\substack{S_1,\ldots, S_{\ell}: \\ S = S_1 \cupdot \ldots \cupdot S_{\ell} \\ \text{and } T_i = S_i \cap T~ \forall i \in [k]}} y_{S_1|S_2|\ldots|S_{\ell}} \quad && \forall 
T \subseteq  S \subseteq V, |S| \leq r, \text{ and } T= T_1 \cupdot \ldots \cupdot T_k\\
  &y_{\emptyset} = 1  \\
  & x_{uv} \leq x_{uw} + x_{vw} && \forall u,v,w \in V \\
& x_{uv} = 0
&& \forall uv \in F \\
& x_{uv} = 1
&& \forall uv \in H\\
  &y  \geq 0 
\end{align*}
    \caption{The $r$-rounds of the Sherali-Adams relaxation. The last constraint requires the $y$-variables across all possible subscripts to be nonnegative. In this paper we use $r=6$.} \label{fig:SA}
\end{figure}

\paragraph{Inclusion probabilities:}
We define two functions $f^+, f^-: [0, 1] \to [0, 1]$ that we use throughout the paper.
They determine the inclusion probabilities in our pivot based algorithms.
$$
f^+(x) = \begin{cases}
	\sqrt{1 - \frac{20}{9} x} & \text{if } x < 0.25 \\
	\frac{13}{12} - \frac{5}{3} x & \text{if } 0.25 \leq x < 0.5,
	\qquad
	f^-(x) = \begin{cases}
		1 - x & \text{if } x < 0.5 \\
		0 & \text{if } x \geq 0.5
	\end{cases} \\
	0 & \text{if } x \geq 0.5
\end{cases}
$$

We now state several basic analytic properties of the functions $f^+,f^-$ that are used in the approximation analysis (their proofs are found in \Cref{app:proofsForFunctions}).

\begin{restatable}{lemma}{derCont}\label{lemma:der-cont}
	The function $f^+(x)$ is continuous on $[0,0.5)$ and has a continuous first derivative on this domain.
\end{restatable}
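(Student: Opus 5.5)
The plan is to treat $f^+$ piecewise on $[0,0.5)$, using the two open pieces $[0,0.25)$ and $(0.25,0.5)$, and then check that the pieces glue together at the breakpoint $x=0.25$. On each open piece the function is clearly smooth.

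On $[0,0.25)$ we have $f^+(x)=\sqrt{1-\frac{20}{9}x}$.
\begin{itemize}
\item The argument $1-\frac{20}{9}x$ decreases from $1$ to $1-\frac{5}{9}=\frac{4}{9}>0$.
\item So the square root is applied to a quantity bounded away from $0$, and $f^+$ is continuously differentiable there.
\item Its derivative is $-\frac{10}{9}\bigl(1-\frac{20}{9}x\bigr)^{-1/2}$, which is continuous on this range.
\item At $x=0$ the derivative is understood one-sidedly (right derivative), which is all the statement requires on the half-open domain.
\end{itemize}
On $(0.25,0.5)$ the function is affine, namely $\frac{13}{12}-\frac{5}{3}x$. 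It is trivially smooth, with constant derivative $-\frac{5}{3}$.

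The only real content is the breakpoint $x=0.25$, where we check continuity of the function and then of the derivative.

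For the value of the function:
\begin{itemize}
\item The left limit is $\sqrt{1-\frac{20}{9}\cdot\frac14}=\sqrt{\frac{4}{9}}=\frac23$.
\item The value of the affine piece is $\frac{13}{12}-\frac{5}{12}=\frac{8}{12}=\frac23$.
\end{itemize}
These agree, so $f^+$ is continuous at $0.25$.

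For the derivative:
\begin{itemize}
\item The left derivative is $-\frac{10}{9}\cdot\bigl(\frac49\bigr)^{-1/2}=-\frac{10}{9}\cdot\frac32=-\frac{5}{3}$.
\item This equals the derivative $-\frac53$ of the affine piece.
\end{itemize}

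Since both one-sided derivatives exist and coincide, $f^+$ is differentiable at $0.25$ with $(f^+)'(0.25)=-\frac53$. The derivative is then continuous at $0.25$ because its left limit is $-\frac53$ (by continuity of the expression above as $x\to 0.25^-$) and its right limit is $-\frac53$.

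To be fully rigorous about differentiability at the junction, I would invoke the standard fact that if a continuous function has one-sided derivatives from each side that agree, it is differentiable there. Here that fact is immediate because each side is given by a smooth formula extending past the point.

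Nothing here is a genuine obstacle. The only step requiring care is confirming that the two constants $\frac{13}{12}$ and $\frac53$ were chosen precisely to match both the value $\frac23$ and the slope $-\frac53$ of the square-root branch at $x=\frac14$.
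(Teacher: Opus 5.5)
Your proof is correct and follows essentially the same route as the paper: treat $f^+$ piecewise, then check that both the values ($\frac{2}{3}$) and the derivatives ($-\frac{5}{3}$) of the two branches agree at the junction $x=\frac14$. Your remark that the one-sided derivatives of $f^+$ itself coincide at $\frac14$ (rather than only the limit of $f'$) makes the differentiability at the breakpoint slightly more explicit than the paper does.
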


\begin{restatable}{lemma}{funcIneq}\label{lemma:func-ineq}
	For all $x$:
	\begin{itemize}
		\item[$(i)$] If $x\in[0,0.5)$, then $\sqrt{1-\frac{20}{9} x} \leq f^+(x) \leq \frac{13}{12} - \frac{5}{3}x$.
		\item[$(ii)$] If $x\in[0,0.5)$, then $1 - \frac{10}{9}x \ge f^+(x) \ge 1 - \frac{3}{2}x$.
	\end{itemize}
\end{restatable}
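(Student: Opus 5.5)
Both parts of \Cref{lemma:func-ineq} reduce to elementary one-variable inequalities, checked separately on the two pieces $[0,0.25)$ and $[0.25,0.5)$ of the definition of $f^+$. Set $g(x)=\sqrt{1-\frac{20}{9}x}$ and $h(x)=\frac{13}{12}-\frac{5}{3}x$. For $x\in[0,0.5)$ the quantity $1-\frac{20}{9}x$ lies in $(-\frac19,1]$, so $g$ is only defined for $x\le \frac{9}{20}$. For part $(i)$ on $[0.25,0.5)$ I would therefore read the lower bound as vacuous, or as using $\sqrt{\max(0,\cdot)}$, whenever the radicand is negative. I would state this convention explicitly at the start of the proof.

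\textbf{Part $(i)$.} I expect this to be the step needing the most care, because of the definedness issue just mentioned. The plan is to show $g\le h$ on all of $[0,0.45]$, and $h\ge 0$ on $[0,0.5)$. The first claim then gives $f^+=g\le h$ on $[0,0.25)$ and $g\le f^+=h$ on $[0.25,0.45]$. The second claim covers $(0.45,0.5)$, where $g$ is zero or undefined.

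To prove $g\le h$, note that $h(x)\ge h(0.5)=\frac14>0$ on the whole range. So it suffices to compare squares: $h(x)^2-g(x)^2=\left(\frac{13}{12}-\frac53x\right)^2-1+\frac{20}{9}x$. Expanding gives
\[
h(x)^2-g(x)^2=\tfrac{25}{9}x^2-\tfrac{65}{18}x+\tfrac{169}{144}-1+\tfrac{20}{9}x=\tfrac{25}{9}x^2-\tfrac{25}{18}x+\tfrac{25}{144}=\tfrac{25}{9}\left(x-\tfrac14\right)^2\ge 0.
\]
This perfect square shows that $h$ is exactly the tangent line to $g$ at $x=0.25$. That is also the content of \Cref{lemma:der-cont}, and the concavity of $g$ gives the same conclusion.

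\textbf{Part $(ii)$.} The plan again splits into the two pieces.

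On $[0,0.25)$, I would square both comparisons, using that $1-\frac{10}{9}x$ and $1-\frac32x$ are positive there. For the upper bound, $(1-\frac{10}{9}x)^2=1-\frac{20}{9}x+\frac{100}{81}x^2\ge 1-\frac{20}{9}x$. For the lower bound, $(1-\frac32x)^2=1-3x+\frac94x^2\le 1-\frac{20}{9}x$ is equivalent to $\frac94x^2\le\frac79x$, that is, $x\le\frac{28}{81}$, which holds since $x<0.25$.

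On $[0.25,0.5)$, all three functions are linear, so it suffices to compare them at the endpoints $0.25$ and $0.5$. At $x=0.25$: $1-\frac{10}{36}=\frac{13}{18}\approx0.722$, $h=\frac23\approx0.667$, and $1-\frac38=0.625$. At $x=0.5$: $1-\frac59=\frac49$, $h=\frac14$, and $1-\frac34=\frac14$. Since $h$ lies between the two bounds at both endpoints, it lies between them on the whole interval.
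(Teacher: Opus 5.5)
Your proof is correct. For part $(i)$ you follow essentially the paper's argument: both reduce to $h(x)^2-g(x)^2=\frac{25}{9}(x-\frac14)^2\ge 0$. Your explicit handling of $(0.45,0.5)$, where $1-\frac{20}{9}x<0$, is more careful than the paper, which just asserts that both sides are nonnegative.

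For part $(ii)$ your route is different. The paper argues globally. Both branches are concave, and $f^+$ is $C^1$ at $0.25$ by \Cref{lemma:der-cont}, so $f^+$ is concave on all of $[0,0.5)$. The upper bound $1-\frac{10}{9}x$ is then the tangent line at $0$, since $f'^+(0)=-\frac{10}{9}$. The lower bound $1-\frac32 x$ is the chord from $(0,1)$ to $(0.5,f^+(0.5^-))=(0.5,\frac14)$. You instead check each piece directly: squaring on $[0,0.25)$, which needs $x\le\frac{28}{81}$ for the lower bound, and comparing values at $0.25$ and $0.5$ on the linear piece.

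Your version is fully elementary. It does not need the fact that concave pieces with matching derivatives glue to a concave function. The paper's version is shorter and explains where $\frac{10}{9}$ and $\frac32$ come from. It also shows these are the best constants for linear bounds through $(0,1)$. Finally, it rests on the same concavity of $f^+$ (decreasing $f'^+$) that the paper uses again in \Cref{lemma:f-product}.
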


\begin{restatable}{lemma}{fProduct}
	\label{lemma:f-product}
	\REM{For all $x, y \in [0, 0.5)$, we have
	$$
	f(x) \cdot f(y) \leq f(\frac{x+y}{2})^2.
	$$}
    The function $\ln (f^+(x))$ is a concave function on $[0, 0.5)$.
\end{restatable}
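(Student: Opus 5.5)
The plan is to show that $(\ln f^+)'' \le 0$ on each of the two pieces $[0,0.25)$ and $[0.25,0.5)$. We then glue the pieces at $x=0.25$ using \Cref{lemma:der-cont}. The key observation for the gluing is that a function on an interval is concave exactly when its derivative is non-increasing. Since $f^+$ is positive on $[0,0.5)$, the function $\ln f^+$ is well defined there. By \Cref{lemma:der-cont}, $\ln f^+$ is $C^1$ on $[0,0.5)$. It therefore suffices to show that $(\ln f^+)' = (f^+)'/f^+$ is non-increasing on each of $[0,0.25]$ and $[0.25,0.5)$. Continuity of this derivative at $0.25$ then makes it non-increasing on all of $[0,0.5)$, which gives concavity.

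First we check positivity of $f^+$. On $[0,0.25)$, $f^+(x)=\sqrt{1-\frac{20}{9}x} \ge \sqrt{4/9}>0$. On $[0.25,0.5)$, $f^+(x)=\frac{13}{12}-\frac53 x > \frac{13}{12}-\frac{10}{12}=\frac14>0$.

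The first piece is $[0,0.25)$. Here $\ln f^+(x) = \frac12\ln\!\left(1-\frac{20}{9}x\right)$. This is the logarithm of an affine function that stays positive, composed with a factor $\frac12$, so it is concave. Explicitly,
\[
(\ln f^+)'(x) = -\frac{10/9}{1-\frac{20}{9}x},
\]
which is decreasing in $x$.

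The second piece is $[0.25,0.5)$. Here $\ln f^+(x)=\ln\!\left(\frac{13}{12}-\frac53x\right)$, again the logarithm of a positive affine function, hence concave. Its derivative is $-\frac{5/3}{\frac{13}{12}-\frac53x}$, which is decreasing.

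At the junction $x=0.25$, the left derivative of $f^+$ is $-\frac{10}{9}\cdot\frac{1}{\sqrt{4/9}}=-\frac{10}{9}\cdot\frac32=-\frac53$. This equals the slope of the linear piece. The value of $f^+$ is $\frac23$ from both sides, since $\frac{13}{12}-\frac{5}{12}=\frac{8}{12}$. So $(\ln f^+)'$ is continuous at $0.25$, with value $-\frac{5}{3}\cdot\frac32=-\frac52$ from both sides; this is exactly what \Cref{lemma:der-cont} guarantees. Two non-increasing functions on adjacent closed intervals that agree at the common endpoint combine into a non-increasing function. Hence $(\ln f^+)'$ is non-increasing on $[0,0.5)$, and $\ln f^+$ is concave there.

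The only step with real content is the gluing at $x=0.25$. Piecewise concavity alone would not suffice: a downward jump in the function or an upward kink in the slope would break concavity. Continuity of both $f^+$ and $(f^+)'$, supplied by \Cref{lemma:der-cont} and checked directly above, removes this obstacle. Everything else is elementary.
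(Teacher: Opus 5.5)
Your proof is correct and takes essentially the same route as the paper. Both arguments show that $(\ln f^+)'=(f^+)'/f^+$ is non-increasing on $[0,0.5)$, and both rely on the $C^1$ property from \Cref{lemma:der-cont} at the junction $x=0.25$. The only difference is cosmetic: the paper deduces monotonicity of $(f^+)'/f^+$ from $(f^+)'$ being negative and decreasing and $f^+$ being positive and decreasing, whereas you compute $(\ln f^+)'$ explicitly on each piece and glue the pieces at $0.25$.
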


\begin{restatable}{lemma}{funcInequality}\label{lemma:funcsInequality}
    For all $x\in [0, 1]$, we have $f^+(x) \leq f^-(x)$.
\end{restatable}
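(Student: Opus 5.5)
We need to prove that $f^+(x)\le f^-(x)$ for all $x\in[0,1]$. The plan is to split the domain into the three intervals on which $f^+$ is defined piecewise and compare the two functions directly on each piece. The only nontrivial input is Lemma~\ref{lemma:func-ineq}(ii), which holds on $[0,0.5)$.

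\textbf{Case $x\ge 0.5$.} Here both functions are zero by definition, so $f^+(x)=0=f^-(x)$ and the inequality holds with equality.

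\textbf{Case $x\in[0,0.5)$.} Here $f^-(x)=1-x$. Lemma~\ref{lemma:func-ineq}(ii) gives $f^+(x)\le 1-\frac{10}{9}x$. Since $x\ge 0$, we have $\frac{10}{9}x\ge x$, hence $1-\frac{10}{9}x\le 1-x=f^-(x)$, and the claim follows.

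If we prefer not to rely on Lemma~\ref{lemma:func-ineq}, the same case can be checked directly from the two formulas for $f^+$.
\begin{itemize}
\item For $x<0.25$, both $\sqrt{1-\frac{20}{9}x}$ and $1-x$ are nonnegative. Squaring reduces the claim to $1-\frac{20}{9}x\le 1-2x+x^2$, i.e.\ $0\le \frac{2}{9}x+x^2$, which is true for $x\ge 0$.
\item For $0.25\le x<0.5$, the claim $\frac{13}{12}-\frac{5}{3}x\le 1-x$ is equivalent to $\frac{1}{12}\le \frac{2}{3}x$, i.e.\ $x\ge \frac{1}{8}$, which holds since $x\ge 0.25$.
\end{itemize}

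There is no real obstacle in this argument. The only step that needs care is the squaring in the first sub-case, and it is valid because both sides are nonnegative on $[0,0.25)$.
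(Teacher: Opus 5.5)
Your proof is correct, and your fallback argument is exactly the paper's proof. The paper compares each branch of $f^+$ directly with $1-x$: squaring on $[0,0.25)$ to get $0\le \frac{2}{9}x+x^2$, and using $x\ge\frac14>\frac18$ on $[0.25,0.5)$. It never invokes Lemma~\ref{lemma:func-ineq}.

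Your primary route is genuinely different. It handles all of $[0,0.5)$ at once via the tangent-line bound $f^+(x)\le 1-\frac{10}{9}x$ from Lemma~\ref{lemma:func-ineq}(ii). This is legitimate and not circular, since that lemma is proved independently of the present one. It does import more machinery, though. The tangent bound rests on concavity of $f^+$ on $[0,0.5)$, and that concavity in turn depends on the $C^1$ gluing of the two branches at $x=0.25$ (Lemma~\ref{lemma:der-cont}).

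What each route buys:
\begin{itemize}
\item The direct check is fully elementary and self-contained.
\item The tangent route is a one-liner given the earlier lemma. It also yields the quantitative gap $f^-(x)-f^+(x)\ge x/9$ on $[0,0.5)$. Only nonnegativity of $f^--f^+$ is actually used later, in \Cref{lem:edgesSame}, to make the cost convex in $\beta_{UW}$.
\end{itemize}
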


\section{Hardness of Approximation} \label{sec:hardness}

In this section, we show that it is UGC-hard to obtain a better-than-$2$-approximation algorithm for the Constrained Correlation Clustering problem (where UGC is the Unique Games Conjecture).

\hardness*

\begin{proof}
We show that a $(2 - \varepsilon)$ approximation for \CCC{} would imply a $(2 - \frac{\varepsilon}{2})$ approximation for Min Vertex Cover, which is impossible assuming the Unique Games Conjecture (UGC)~\cite{vc-hard-to-approx}.

\paragraph{Reduction:}
Given an instance $G = (V,E)$ of Min Vertex Cover, with $n=|V|$, we transform it into an instance $\mathcal{C}_G = (V',E^+\uplus E^-, F, H)$ of Constrained Correlation Clustering as follows. 
The instance $\mathcal{C}_G$ is defined on a set of nodes $V'=V \uplus V''$, where $|V''| = n^3$. The hostile constraints and the non-edges in $\mathcal{C}_G$ are the edges in $G$, that is $H = E^- = E$.
It follows that the plus edges $E^+$ are the nonedges of $G$ and all the pairs with at least one endpoint in $V''$.
Finally, there are no friendly constraints, that is $F = \emptyset$.

Notice that the instance $\mathcal{C}_G$ is satisfiable, e.g. by making every node a singleton.
Furthermore, as $F=\emptyset$ and $H=E^-$, $\mathcal{C}_G$ is equivalent to the instance $(V',(E^+,E^-))$ of Cluster Deletion. This implies that the lower bound we derive for \CCC{} immediately extends to Cluster Deletion.

Now assume we have a $(2 - \varepsilon)$-approximation algorithm $A$ for Constrained Correlation Clustering. We use this algorithm as a subroutine to obtain a $(2-\varepsilon/2)$-approximation algorithm for Min Vertex Cover.
In particular, given an input graph $G$ for Min Vertex Cover, we first run $A$ on $\mathcal{C}_G$ and obtain clustering $\mathcal{C}$. We return every node in $V$ that is not contained in a cluster of $\mathcal{C}$ with size at least $0.99n^3$ (clearly there can be at most one such large cluster).

Note that the nodes in $V$ that are not in the output are all contained in the same cluster (of size at least $0.99n^3$).
Therefore, by definition of $H$, they form an independent set in graph $G$, meaning that our output is a valid vertex cover.

We now need to prove that the size of our output is at most $(2-\varepsilon/2)$ times the size of the Min Vertex Cover.
We first need the following claims:

\begin{claim} \label{clm:optVertexCover}
The cost of the optimal clustering in $\mathcal{C}_G$ is at most $v \cdot n^3 + \binom{n}{2}$, where $v$ is the size of the minimum vertex cover in $G$.
\end{claim}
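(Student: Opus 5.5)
We prove the claim by exhibiting one explicit feasible clustering of $\mathcal{C}_G$ whose cost is at most $v\cdot n^3+\binom{n}{2}$. Since the optimum is no larger than the cost of any feasible clustering, this suffices.

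Let $S\subseteq V$ be a minimum vertex cover of $G$, so $|S|=v$. Its complement $I=V\setminus S$ is an independent set in $G$. Consider the clustering $\mathcal{C}$ with one large cluster $I\cup V''$, in which every node of $S$ is a singleton.

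First we check feasibility. There are no friendly constraints, so only the hostile constraints $H=E$ matter. A hostile pair is violated only if both endpoints lie in a common cluster. The singletons contain one node each. In the large cluster, $V''$ has no hostile pairs, pairs between $V''$ and $V$ are not in $H$, and pairs inside $I$ are not edges of $G$ because $I$ is independent. Hence no hostile pair lies inside a cluster, and $\mathcal{C}$ is feasible.

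Next we bound the cost by splitting the pairs into three groups.
\begin{itemize}
\item Pairs with both endpoints in $V''$ are all plus edges and all lie in the same cluster, so they contribute $0$.
\item Pairs $\{u,w\}$ with $u\in V$ and $w\in V''$ are all plus edges. Such a pair is cut exactly when $u\in S$. This contributes $|S|\cdot|V''|=v\cdot n^3$.
\item Pairs with both endpoints in $V$ number $\binom{n}{2}$ in total, so however they are clustered they contribute at most $\binom{n}{2}$.
\end{itemize}
Summing the three contributions, the cost of $\mathcal{C}$ is at most $v\cdot n^3+\binom{n}{2}$, which bounds the optimum.

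There is no real obstacle here. The only points needing care are that the independence of $I$ is what makes the large cluster feasible, and that the mass $n^3$ of $V''$ is what makes the cut cost scale with $v$.
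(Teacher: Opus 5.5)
Your proof is correct and follows the paper's argument: the paper also clusters a maximum independent set $I$ together with $V''$ and makes every vertex of the minimum vertex cover $V\setminus I$ a singleton, obtaining the bound $|V\setminus I|\cdot|V''|+\binom{n}{2}$. Your explicit feasibility check and your three-way split of the pairs add detail to that same argument without changing it.
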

\begin{proof}
Consider a maximum independent set $I \subseteq V$ in $G$. Note that $V \setminus I$ is a minimum vertex cover in $G$. Consider a clustering, where $I \cup V''$ is a cluster and each vertex in $V \setminus I$ forms its own singleton cluster. This clustering is valid and has cost at most $|V \setminus I| \cdot |V''| + \binom{n}{2} = v \cdot n^3 + \binom{n}{2}$.
\end{proof}

\begin{claim}
There exists a cluster $C_{big} \in \mathcal{C}$ such that $|C_{big} \cap V''| \ge 0.99|V''|$.
\end{claim}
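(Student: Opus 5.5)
We argue by contradiction, comparing the cost of $\mathcal{C}$ with the upper bound on the optimum from \Cref{clm:optVertexCover}. Since $A$ is a $(2-\varepsilon)$-approximation, we have
\[
\mathrm{cost}(\mathcal{C}) \le (2-\varepsilon)\left(v n^3 + \binom{n}{2}\right) \le 2n\cdot n^3 + n^2 = O(n^4),
\]
using $v \le n$. The plan is to show that if no cluster contains at least $0.99|V''|$ nodes of $V''$, then $\mathcal{C}$ cuts $\Omega(n^6)$ plus edges. For $n$ larger than some absolute constant this exceeds the bound above. Small $n$ can be handled by brute force, or absorbed since the statement is asymptotic.

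The key step is a counting argument on the pairs inside $V''$. All $\binom{n^3}{2}$ such pairs are plus edges, since they have both endpoints in $V''$. Let $a_i = |C_i \cap V''|$, so that $\sum_i a_i = N := n^3$. Assume $\max_i a_i < 0.99N$. The number of plus edges inside $V''$ that $\mathcal{C}$ cuts is
\[
\frac{1}{2}\left(N^2 - \sum_i a_i^2\right).
\]
We have $\sum_i a_i^2 \le (\max_i a_i)\sum_i a_i < 0.99N^2$. Hence the number of cut plus edges exceeds $0.005N^2 = 0.005n^6$.

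It remains to compare: $0.005n^6 > 2n^4 + n^2$ once $n \ge 21$ or so. This contradicts the cost bound, so some cluster $C_{big}$ satisfies $|C_{big}\cap V''| \ge 0.99|V''|$. I do not expect a real obstacle here. The only care needed is to use the convexity bound $\sum_i a_i^2 \le \max_i a_i \cdot N$ correctly, and to note that the claim is only needed for sufficiently large $n$. Constant-size Vertex Cover instances can be solved exactly, so they do not affect the approximation argument.
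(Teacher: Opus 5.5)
Your proof is correct and follows the same route as the paper: both argue by contradiction, showing that $\Omega(n^6)$ plus edges inside $V''$ would be cut, which is incompatible with the $O(n^4)$ bound from \Cref{clm:optVertexCover}. The only difference is the counting step. You count the cut edges directly as $\frac{1}{2}\bigl(N^2-\sum_i a_i^2\bigr)$ and use $\sum_i a_i^2 \le (\max_i a_i)\, N$, whereas the paper passes to a subset of clusters whose share of $V''$ lies strictly between $0.01|V''|$ and $0.99|V''|$ and leaves implicit why such a subset exists; your version is slightly more self-contained.
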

\begin{proof}
It suffices to prove that there cannot exist a subset $\mathcal{S} \subseteq \mathcal{C}$ such that $$0.01|V''| < |(\bigcup_{C \in \mathcal{S}} C) \cap V''| < 0.99|V''|.$$
Indeed, if this were the case, the cost of $\mathcal{C}$ would be at least $(0.01)^2 |V''|^2 = \Omega(n^6)$ due to the plus edges with one endpoint in $V'' \cap (\bigcup_{C \in \mathcal{S}} C)$ and the other endpoint in $V''\setminus (\bigcup_{C \in \mathcal{S}} C)$. This is a factor $\Omega(n^2)$ worse off than the optimal clustering with cost $O(n^4)$ (see~\cref{clm:optVertexCover}). For large enough $n$, this contradicts the assumption that $A$ is a $(2-\varepsilon)$-approximation algorithm for Constrained Correlation Clustering.
\end{proof}

To analyze $|V\setminus C_{big}|$, we first modify $\mathcal{C}$ and obtain $\mathcal{C'}$ as follows: For each cluster $C \neq C_{big}$ satisfying $C \cap V'' \neq \emptyset$, we remove the vertices in $C \cap V''$, meaning that we are left with $C\setminus V''$ instead of $C$.
Furthermore, we replace $C_{big}$ with $C_{big}\cup V''$.
The modified clustering $\mathcal{C'}$ is valid as the only vertices we move across clusters are contained in $V''$, for which there are no hard constraints. 
Now notice that the only edges with possibly different cost contributions in $\mathcal{C}$ and in $\mathcal{C'}$ are edges with one endpoint in $V''\setminus C_{big}$.
For each node $u\in V''\setminus C_{big}$, its cost contribution was at least $|C_{big}|\ge 0.99n^3$, while its new cost contribution is at most $n$.
Thus $\mathcal{C'}$ is at least as good as $\mathcal{C}$. In particular, it is also a $(2-\varepsilon)$-approximate clustering.

The cost of $\mathcal{C'}$ is at least $|V\setminus C_{big}| \cdot n^3$.
By \Cref{clm:optVertexCover} the optimal cost is at most $v\cdot n^3 + \binom{n}{2}$, where $v$ is the size of the minimum vertex cover in $G$.
Therefore $|V\setminus C_{big}| \cdot n^3 \leq (2 - \varepsilon)\cdot (v\cdot n^3 + \binom{n}{2}),$ implying that $|V\setminus C_{big}| \leq (2 - \frac{\varepsilon}{2})\cdot v$, for large enough $n$. This violates the Unique Games Conjecture.
\end{proof}

\section{\texorpdfstring{Algorithmic framework and $16/7$ barrier}{Algorithmic framework and 16/7 barrier}} \label{sec:barrier}

In this section, we present the algorithmic framework we use throughout this paper (\Cref{alg:strictPivot}).
We present it in full generality, using a placeholder function $p(\cdot, \cdot, \cdot)$; this function may differ across different algorithms in the framework.
We prove that for any algorithm in this framework, a particular type of analysis (standard Triangle Based analysis) cannot prove a better-than-$16/7$ approximation.

In \Cref{sec:analysis}, we pick a particular algorithm from this framework (\Cref{alg:pivot-supernodes}), that instantiates the probability $p$ as a convex combination of $f^+,f^-$ (defined in \Cref{sec:prelims}).

\begin{algorithm2e}[!ht]
	 	\DontPrintSemicolon
	 	\SetKwInOut{Input}{Input}
	 	\SetKwInOut{Output}{Output}
	
	 	\caption{Strict-Pivoting Framework}
	 	\label{alg:strictPivot}
	
	 	\Input{Instance $G = (V,E^+\uplus E^-,F,H)$}
	 	\Output{A clustering $\XC$ of $V$}
	
	 	\BlankLine
	 	Obtain $\{x_{uv}\}_{uv \in \binom{V}{2}}$, a feasible solution to the standard LP (\Cref{fig:LP})\;
	
	 	$\XC \gets \set{\emptyset}$\;
	 	\While{$V\neq \emptyset$}{
		 		Pick a pivot $u \in V$ uniformly at random and let $U = s(u)$ be the supernode containing $u$\;
		 		Initialize a new cluster $C\gets U$\;
                \tcp{The following iterations do not necessarily need to be performed independently for each $W$ (one can allow correlations)}
		 		\ForEach{{supernode} $W\ne U$}{
                    Let $E_{UW}$ be the pairs in $E^+$ with one endpoint in $U$ and the other in $W$\;
                    Let $w$ be an arbitrary node in $W$\;
                    \If{$x_{uw} < 0.5$}{
                        Include $W$ into $C$ with probability $p(x_{uw},|E_{UW}|,|U||W|)$\;         \tcp{$p(\cdot,\cdot,\cdot)$ may differ across different algorithms in this framework}
                    }
                }
		 		Remove $C$ from $V$\;
		 		$\XC \gets \XC \cup \{C\}$\;
		 	}
	
	 	\Return{$\XC$}\;
	 \end{algorithm2e}

\Cref{alg:strictPivot} constructs a clustering of the graph using the fractional LP distances $x_{uv}$. In each iteration, it selects a uniformly random pivot vertex $u$ (contained in supernode $U=s(u)$) from the remaining vertices and forms a new cluster by adding every remaining supernode $W$ (containing some node $w$) with a probability determined by:
\begin{itemize}
    \item the LP value $x_{uw}$ (note that this is independent of the choice of $w$, because all nodes in $W$ have LP distance $0$ to each other, directly from \Cref{fig:LP}),
    \item the number of pairs in $E^+$ with one endpoint in $U$ and the other in $W$, %$V$,
    \item and the number of pairs with one endpoint in $U$ and the other in $W$. %$V$.
\end{itemize}
Different algorithms may determine the inclusion probability of $W$ in different ways, but always as a function of the aforementioned three quantities.
The only restriction is that if $x_{uw}\ge 0.5$, then $W$ is not included in $C$ (hence the term \emph{strict pivoting}).

We prove that the clustering output by a strict pivoting algorithm respects all the hard constraints.

\begin{lemma} \label{lem:satisfiesAllConstraints}
The output of a strict pivoting algorithm always satisfies all friendly and hostile constraints.
\end{lemma}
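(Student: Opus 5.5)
The plan is to treat the two kinds of constraints separately, using only the structure of \Cref{alg:strictPivot} and the feasibility of $x$ for the standard LP (\Cref{fig:LP}). Throughout, I view the run of the algorithm as a sequence of rounds. Each round fixes a pivot $u$ and its supernode $U=s(u)$, builds a cluster $C$, and removes $C$ from $V$.

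\textbf{Friendly constraints.} Let $ab\in F$. Since $F$ is transitive, $a$ and $b$ lie in the same supernode $s(a)=s(b)$. I will prove by induction over the rounds that the set of remaining nodes is always a union of whole supernodes. In each round, $C$ is formed from whole supernodes: it starts as $U$, and afterwards supernodes $W$ are added or not added in their entirety. Hence $C$ is a union of whole supernodes, and removing it preserves the invariant. As a consequence, every supernode is placed entirely into a single cluster, so $a$ and $b$ end up together. This also justifies the step ``pick a pivot and take $s(u)$'': the supernode $s(u)$ is fully present when it is picked.

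\textbf{Hostile constraints.} Let $ab\in H$, so $x_{ab}=1$. Suppose for contradiction that $a$ and $b$ end up in the same cluster $C$, created with pivot $u$. I first claim that every node $z\in C$ satisfies $x_{uz}<0.5$.

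If $z\in U$, then either $z=u$, giving $x_{uz}=0$ trivially, or $uz\in F$, giving $x_{uz}=0$ by the LP constraints.

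Otherwise $z\in W$ for some supernode $W$ that was added to $C$, and this happened only if $x_{uw}<0.5$ for the chosen representative $w\in W$. In that case the triangle inequality gives
\[
x_{uz}\le x_{uw}+x_{wz}=x_{uw}+0<0.5,
\]
using $x_{wz}=0$, which holds because $wz\in F$ or $w=z$. The same inequality run in reverse shows that the choice of representative $w$ is irrelevant.

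With the claim in hand, the triangle inequality gives
\[
x_{ab}\le x_{au}+x_{ub}<0.5+0.5=1,
\]
which contradicts $x_{ab}=1$. When $a$ or $b$ equals $u$, the same bound holds with one term equal to zero.

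\textbf{Main care point.} I expect the only delicate step to be the invariant that supernodes are never split. In particular, it must not be possible for a supernode to be partially removed while another node of it is left behind. The induction above settles this. Everything else is a direct application of the triangle inequality and the hardcoded values $0$ and $1$ in the LP.
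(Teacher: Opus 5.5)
Your proposal is correct and follows essentially the same route as the paper: supernodes are never split, which handles friendly constraints, and every clustered node lies at LP distance less than $0.5$ from the pivot, so the triangle inequality forces distance less than $1$ within a cluster and rules out hostile pairs. You spell out details the paper leaves implicit, namely the induction showing the remaining set is always a union of whole supernodes and the check that the choice of representative $w$ does not matter.
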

\begin{proof}
By construction, no supernode is ever split, therefore the friendly constraints are satisfied.
Regarding hostile constraints, notice that the LP distance of any node in a cluster $C$ to the pivot of $C$ is less than $0.5$; by triangle inequality the LP distance between any two nodes in a cluster $C$ is less than one, which means that they are not hostile (directly from \Cref{fig:LP}).
\end{proof}

Notice that our framework is quite general.
For example, it does not force the inclusion of each $W$ to be decided independently; one could use correlated rounding. Additionally, the feasible solution to the standard LP could be obtained through solving the more restricted Sherali-Adams relaxation.
Yet, we show that a straightforward Triangle-Based analysis cannot yield a better-than-$\frac{16}{7}$ approximation.

It is important to note that our lower bound does not mean that no Strict-Pivoting algorithm can obtain a better-than-$\frac{16}{7}$ approximation.
It only shows that the standard Triangle-Based analysis is not strong enough to prove such a bound.
In fact, in \Cref{sec:beyond} we present a Strict-Pivoting algorithm that obtains a better-than-$\frac{16}{7}$ approximation, using a more sophisticated analysis.

\paragraph{Analysis} We now present the standard analysis for such algorithms.
We first argue that it is sufficient to analyze a single, arbitrary iteration in \Cref{alg:strictPivot} (as noted in \cite{sub2}).
Suppose in iteration $m$ we create the cluster $C$.
Let $\ALG_m$ denote the expected cost incurred in iteration $m$ and $\mathsf{LP}_m$ be the expected amount of LP value removed by this iteration, that is the sum of LP contributions ($x_{uv}$, in case of positive edges, $1-x_{uv}$ otherwise) of pairs with at least one endpoint in $C$ and the other endpoint outside $C$, if it exists, unclustered.
If we can establish that $\ALG_m \leq \alpha \cdot \mathsf{LP}_m$ holds for all $m$, then the overall approximation ratio follows by summing over all $R$ iterations of the algorithm:
$$
\expected{\ALG} = \expected{\sum_{m=0}^{R} \ALG_m} \leq \alpha \cdot \expected{\sum_{m=0}^{R} \mathsf{LP}_m} = \alpha \cdot \mathsf{LP},
$$
where $\ALG$ is the total expected cost of the algorithm and $\mathsf{LP}$ denotes the objective value of the LP. This reduces the problem to analyzing a single, arbitrary iteration, for which we drop the subscript $m$ for convenience.

We now define the following quantities. For a given pivot $u$, let \cost{u}{v}{w} be the probability that the edge $vw$ is violated ($vw \in E^+$ but they end up in different clusters, or $vw\in E^-$ but they end up in the same cluster), and let \lp{u}{v}{w} be the LP contribution of $vw$ (i.e., $x_{vw}$ if $vw$ is positive and $1 - x_{vw}$ if it is a negative edge) times the probability that $vw$ is decided (i.e., that either $v$ or $w$ joins $C$).

We call a set of three distinct vertices a \emph{triangle} and a set of two vertices a \emph{degenerate triangle}. We aggregate the above notations for both triangles and degenerate triangles. For a triangle $T=uvw$ we define
\begin{align*}
	& \Cost{T} = \COST{u}{v}[w] = \cost{u}{v}{w} + \cost{v}{u}{w} + \cost{w}{u}{v} \\
	& \Lp{T} = \LP{u}{v}[w] = \lp{u}{v}{w} + \lp{v}{u}{w} + \lp{w}{u}{v}.
\end{align*}
For a degenerate triangle $uv$, we define
\begin{align*}
	& \COST{u}{v} = \cost{u}{u}{v} + \cost{v}{u}{v} \\
	& \LP{u}{v} = \lp{u}{u}{v} + \lp{v}{u}{v}.
\end{align*}

\begin{lemma} \label{lem:triangle}
The approximation factor of an algorithm in the framework of \Cref{alg:strictPivot} is at most 
\[\frac{\sum_{uvw\in\binom{V}{3}} \COST{u}{v}[w] + \sum_{uv\in\binom{V}{2}} \COST{u}{v}}{\sum_{uvw\in\binom{V}{3}} \LP{u}{v}[w] + \sum_{uv\in\binom{V}{2}} \LP{u}{v}}.\]
\end{lemma}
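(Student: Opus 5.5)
We reduce to a single iteration, as in the discussion preceding the lemma, and then decompose that iteration's expected cost and expected LP removal edge by edge. Fix an iteration with current vertex set $V$, and write $\ALG$ and $\mathsf{LP}$ for its expected cost and expected removed LP value. The pivot $u$ is uniform over the current $V$, so
\[
\ALG=\frac1{|V|}\sum_{u\in V}\sum_{vw\in\binom V2}\cost{u}{v}{w},\qquad \mathsf{LP}=\frac1{|V|}\sum_{u\in V}\sum_{vw\in\binom V2}\lp{u}{v}{w}.
\]
This holds because a pair $vw$ is charged in this iteration exactly when it is decided, i.e.\ when at least one endpoint joins $C$ (the pivot itself always joins). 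Pairs that are not decided contribute to neither quantity. A pair with both endpoints in $C$ is counted in $\mathsf{LP}_m$ as well, and its violation probability sits inside $\cost{u}{v}{w}$. I would state explicitly that $\mathsf{LP}_m$ counts every pair decided in iteration $m$, so that summing over iterations gives exactly the total LP objective, since each pair is decided exactly once.

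Next I regroup the double sum by the vertex set $\{u,v,w\}$. If $u\notin\{v,w\}$, the term belongs to the triangle $uvw$, and each triangle receives exactly its three terms, one for each choice of pivot. If $u\in\{v,w\}$, the term belongs to the degenerate triangle $vw$, which receives exactly its two terms. This gives
\[
\ALG=\frac1{|V|}\Bigl(\sum_{T\in\binom V3}\Cost{T}+\sum_{e\in\binom V2}\Cost{e}\Bigr),
\]
and the identical identity with $\mathsf{lp}$ in place of $\mathsf{cost}$. Taking the ratio, the factor $\frac1{|V|}$ cancels, so $\ALG_m\le\alpha\,\mathsf{LP}_m$, where $\alpha$ is the ratio displayed in the lemma for the current vertex set.

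Summing over iterations, as in the displayed argument before the lemma, gives $\expected{\ALG}\le\alpha\cdot\mathsf{LP}\le\alpha\cdot\mathrm{OPT}$. This uses that $\mathsf{LP}$ equals the LP objective, which lower-bounds the optimum; hard constraints are respected by \Cref{lem:satisfiesAllConstraints}. The lemma should be read with $V$ denoting the remaining instance in the iteration, or equivalently with the ratio bounded by a worst case over all instances, which is how it is later used.

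The main obstacle is bookkeeping, specifically confirming that each pair's LP value is removed exactly once across iterations, so that $\sum_m\mathsf{LP}_m=\mathsf{LP}$. Correlations in the rounding are harmless, because everything is linear in expectation.
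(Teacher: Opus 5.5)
Your proposal is correct and follows the paper's proof. Like the paper, you write the iteration's expected cost and removed LP value as averages over the uniform pivot of $\sum_{vw}\cost{u}{v}{w}$ and $\sum_{vw}\lp{u}{v}{w}$, regroup the terms by the (possibly degenerate) triangle $\{u,v,w\}$, and cancel the common factor $\frac{1}{|V|}$. Your extra bookkeeping makes explicit two points the paper leaves implicit: each pair is decided exactly once, and the ratio must be bounded uniformly over the remaining vertex sets, which is how it is used via \Cref{cor:triangleBased}.
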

\begin{proof}
It is straightforward to see that for a single arbitrary iteration
\begin{align*}
	\ALG & = \mathbb{E}_{u\in V} \left[\sum_{vw \in \binom{V}{2}} \cost{u}{v}{w}\right] \\
	\mathsf{LP} & = \mathbb{E}_{u\in V} \left[\sum_{vw \in \binom{V}{2}} \lp{u}{v}{w}\right].
\end{align*}
Hence, it suffices to upper bound:
$$
\frac{\ALG}{\mathsf{LP}} = \frac{\mathbb{E}_{u\in V} \left[\sum_{vw \in \binom{V}{2}} \cost{u}{v}{w}\right]}{\mathbb{E}_{u\in V} \left[\sum_{vw \in \binom{V}{2}} \lp{u}{v}{w}\right]} = \frac{\sum_{uvw\in\binom{V}{3}} \COST{u}{v}[w] + \sum_{uv\in\binom{V}{2}} \COST{u}{v}}{\sum_{uvw\in\binom{V}{3}} \LP{u}{v}[w] + \sum_{uv\in\binom{V}{2}} \LP{u}{v}}. 
$$
\end{proof}

\begin{corollary} \label{cor:triangleBased}
Let $\rho$ be such that $\frac{\Cost{T}}{\Lp{T}} \le \rho$ for every %any 
(possibly degenerate) triangle $T$ w.r.t.\ a
feasible solution to the LP in \Cref{fig:LP}.
Then the approximation factor of an algorithm in the framework of \Cref{alg:strictPivot} is upper bounded by $\rho$.
\end{corollary}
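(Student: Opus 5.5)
The corollary follows from \Cref{lem:triangle} by the elementary mediant inequality. Suppose nonnegative numbers $a_i,b_i$ satisfy $a_i \le \rho\, b_i$ for every $i$. Summing over $i$ gives $\sum_i a_i \le \rho \sum_i b_i$. Hence $\sum_i a_i / \sum_i b_i \le \rho$ whenever $\sum_i b_i>0$. So the plan is to set $a_T=\Cost{T}$ and $b_T=\Lp{T}$ for every triangle $T\in\binom{V}{3}$ and every degenerate triangle $T\in\binom{V}{2}$, and then plug the resulting bound into the expression of \Cref{lem:triangle}.

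\textbf{Step 1: rewrite the hypothesis in multiplied form.} The hypothesis $\Cost{T}/\Lp{T}\le\rho$ should be read as $\Cost{T}\le \rho\cdot\Lp{T}$. This form also covers triangles with $\Lp{T}=0$, for which the ratio itself is undefined. The only care needed is this convention: a triangle with $\Lp{T}=0$ must have $\Cost{T}=0$, which is exactly what the multiplied form says. All quantities involved are probabilities times nonnegative LP contributions, so they are nonnegative.

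\textbf{Step 2: sum and apply \Cref{lem:triangle}.} Summing $\Cost{T}\le\rho\cdot\Lp{T}$ over all triangles and degenerate triangles gives
\[\sum_{uvw\in\binom{V}{3}} \COST{u}{v}[w] + \sum_{uv\in\binom{V}{2}} \COST{u}{v} \;\le\; \rho\Bigl(\sum_{uvw\in\binom{V}{3}} \LP{u}{v}[w] + \sum_{uv\in\binom{V}{2}} \LP{u}{v}\Bigr).\]
Dividing both sides by the total LP quantity bounds the ratio in \Cref{lem:triangle} by $\rho$.

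\textbf{Step 3: the per-iteration argument.} The bound of Step 2 applies to an arbitrary iteration, with respect to the remaining vertices and the fixed feasible LP solution restricted to them. The restriction is still feasible for the LP in \Cref{fig:LP}, since all constraints are local to pairs and triples. This establishes $\ALG_m\le\rho\cdot\mathsf{LP}_m$ for every iteration, and summing over iterations as in the preceding discussion yields $\expected{\ALG}\le\rho\cdot\mathsf{LP}\le\rho\cdot\mathrm{OPT}$.

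\textbf{Main obstacle: iterations with zero LP budget.} This is the only mildly delicate point. If the total LP contribution in an iteration is zero, the ratio form is meaningless. In that case the multiplied inequality from Step 2 gives $\ALG_m\le 0$ directly, so the argument goes through with the inequality $\ALG_m\le\rho\cdot\mathsf{LP}_m$ rather than the ratio.
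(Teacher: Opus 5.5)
Your proposal is correct and matches the paper's argument: the paper states the corollary without a separate proof, as an immediate consequence of \Cref{lem:triangle}, obtained exactly as you do by summing the per-triangle inequalities $\Cost{T}\le\rho\cdot\Lp{T}$ over all triangles and degenerate triangles. Your extra care about triangles or iterations with zero LP budget, and about the LP solution staying feasible on the unclustered vertices, is sound and only makes explicit what the paper leaves unstated.
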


We now show that using \Cref{cor:triangleBased} (that is, the standard Triangle Based analysis) we cannot prove any better-than-$\frac{16}{7}$ approximating for any strict pivoting algorithm (\Cref{alg:strictPivot}).

\begin{lemma}%\label{lem:barrier}
For any strict pivoting algorithm (\Cref{alg:strictPivot}) there exist a feasible solution $\set{x_{uv}}_{uv\in \binom{V}{2}}$ to the standard LP in \Cref{fig:LP} and a triangle $T$ such that $\frac{\Cost{T}}{\Lp{T}} \ge \frac{16}{7}$.
\end{lemma}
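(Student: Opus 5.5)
The plan is to exhibit a small, explicit family of LP-feasible triangles, all on singleton supernodes, and show that whatever the placeholder $p(\cdot,\cdot,\cdot)$ and whatever correlations are used, at least one member of the family has $\Cost{T}/\Lp{T}\ge \frac{16}{7}$.

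\textbf{Reduction to a few parameters.} With singleton supernodes (so $F=H=\emptyset$), $|U||W|=1$ and $|E_{UW}|\in\{0,1\}$. Hence the inclusion probability of $w$ for pivot $u$ is $a^+(x)=p(x,1,1)$ if $uw\in E^+$, and $a^-(x)=p(x,0,1)$ if $uw\in E^-$. Moreover $a^\pm(x)=0$ for $x\ge \frac12$, by strictness. The only other freedom of the algorithm is the joint probability $q$ that both non-pivot vertices of a triangle join the cluster. Since the algorithm may correlate, I treat $q$ as adversarially chosen by the algorithm within the Fréchet bounds $\max(0,a_1+a_2-1)\le q\le \min(a_1,a_2)$. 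So for each triangle, $\Cost{T}$ and $\Lp{T}$ become explicit expressions that are linear in the $q$'s and in the $a^\pm$ values at a few LP distances.

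\textbf{Choice of instances.} I would use only a handful of distances, namely $0$, some $t$ (tentatively $t=\frac14$), $2t=\frac12$, and values at or just above $\frac12$ where strictness forces inclusion probability $0$. The candidate triangles are:
\begin{itemize}[nosep]
\item[(i)] degenerate triangles (a single plus edge and a single minus edge at distance $x$), which penalize $a^+(x)$ small and $a^-(x)$ large;
\item[(ii)] the triangle with $uv,uw\in E^+$ at distance $t$ and $vw\in E^-$ at distance $2t$; this is feasible, and it penalizes joining both endpoints of the minus edge;
\item[(iii)] the triangle with $uv,uw\in E^+$ at distance $\frac12$ and $vw\in E^+$ at distance $0$, where strictness forces cutting both plus edges for pivot $u$ (a quick computation gives ratio $2$ at $a^+(0)=1$, so this alone is not enough, but it helps in the combination);
\item[(iv)] mixed triangles with one minus edge at distance $t$ and plus edges at $t$ and $2t$.
\end{itemize}
For each triangle I compute $\Cost{T}$ by summing over the three pivot choices the probability that each edge incident to the new cluster is violated, and $\Lp{T}$ as the LP contribution times the probability the edge is decided.

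\textbf{Minimax step.} Write $\rho(\theta)$ for the maximum ratio over the family, where $\theta$ collects the finitely many free quantities $a^\pm(0),a^\pm(t)$ and the $q$'s. I then show $\min_\theta \rho(\theta)\ge\frac{16}{7}$. The natural route is LP duality on the ratios. For each member, $\Cost{T}-\frac{16}{7}\Lp{T}$ is affine in $\theta$ once each product-type term is replaced by the appropriate $q$. It therefore suffices to find nonnegative weights $\lambda_T$ with $\sum_T\lambda_T\bigl(\Cost{T}-\frac{16}{7}\Lp{T}\bigr)\ge 0$ identically on the feasible box of $\theta$. Equivalently, I exhibit a \emph{weighted instance}, for example many copies of these triangles glued consistently, or simply the convex combination, on which the aggregated ratio is at least $\frac{16}{7}$ for every parameter choice. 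Cases are split according to whether $a^+(t)$ is above or below the threshold at which the optimal correlation switches from $q=0$ to $q=2a-1$.

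\textbf{Main obstacle.} The hard part is guessing the right family and the distance $t$ so that the minimax value is exactly $\frac{16}{7}$ rather than something smaller; the linear and square-root pieces of $f^+$ around $\frac14$ suggest that $t=\frac14$, together with distances near $\frac12$, is where the tight cases live. I would first numerically optimize $\theta$ against a candidate family, read off which constraints are tight at the optimum, and use those tight triangles and their dual multipliers to write the certificate. A secondary subtlety is the correlation variable $q$. Since $q$ enters linearly, extremal values suffice, but I must check that both Fréchet endpoints are covered.
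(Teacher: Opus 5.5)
There is a genuine gap, and it comes from how you model correlations. You let the algorithm choose the joint probability $q$ separately in each triangle of your family, anywhere in the Fr\'echet box. In effect, the correlation may depend on the sign of the edge between the two non-pivot vertices. In that model, the family you list cannot certify $\frac{16}{7}$.

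To see this, take $a^+(0)=1$, $a^+(\frac14)=\frac12$, $a^-(\frac14)=\frac34$. In every triangle, let the pivot co-include the two endpoints of a plus edge and separate those of a minus edge as far as possible. Then:
\begin{itemize}
\item the degenerate plus edge at $\frac14$ has ratio $4(1-\frac12)=2$;
\item triangle (ii) has cost $1$ and LP value $\frac34$, so ratio $\frac43$;
\item triangle (iii) has ratio $2$ for every $a^+(0)>0$;
\item triangle (iv) has cost $1$ and LP value $\frac{15}{16}$;
\item every remaining member with positive LP value has ratio at most $2$.
\end{itemize}
So the minimax value of your family is at most $2$, and no valid certificate for $\frac{16}{7}$ can exist for it. Adding the all-plus triangle with distances $(\frac14,\frac14,\frac12)$ does not rescue it in your model, since full co-inclusion also gives it ratio $2$. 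Separately, the decisive step (choosing the family and the multipliers) is left to a numerical search. Even apart from the modelling issue, the proposal is therefore not yet a proof.

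The idea you are missing is the one the paper's proof rests on. Take a single configuration $t,u,v$ with $tu,tv\in E^+$, $x_{tu}=x_{tv}=\frac14$, $x_{uv}=\frac12$. Consider both sign variants, $T^+$ (with $uv\in E^+$) and $T^-$ (with $uv\in E^-$). Describe pivot $t$'s behaviour in both by the same two numbers: the marginal $q=p(\frac14,1,1)$ and $r=\Pr[uv\text{ decided}\mid t\text{ pivot}]$. By strictness, pivot $u$ never takes $v$ and vice versa. One then gets $\Cost{T^+}=2r$, $\Cost{T^-}=4q-r$, and $\Lp{T^\pm}=\frac12(q+r)$. Since
\[
\tfrac52\Bigl(2r-\tfrac87(q+r)\Bigr)+\Bigl(4q-r-\tfrac87(q+r)\Bigr)=0\qquad\text{for all } q,r,
\]
the two quantities $\Cost{T^\pm}-\frac{16}{7}\Lp{T^\pm}$ cannot both be negative. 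Hence for $q>0$ one of the two ratios is at least $\frac{16}{7}$, and for $q=0$ the degenerate triangle $tu$ has ratio $4$.

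This is exactly a two-triangle dual certificate of the kind you were looking for, with weights $\frac52$ and $1$. Because $r$ enters linearly, no Fr\'echet case split is needed.

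Note that the certificate works only because $r$ is shared by $T^+$ and $T^-$. That is, the algorithm's joint law of $u,v$ at pivot $t$ is not allowed to depend on the sign of $uv$. With your per-triangle $q$, the computation above shows this pair is no longer tight. You must therefore either adopt this coupling explicitly, or exhibit an instance in which the algorithm provably cannot adapt its correlations to that sign; your plan does neither.
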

\begin{proof}
Let $T=tuv$ be a triangle such that $t$ has two plus incident edges $tu$ and $tv$ with $x_{tu} = x_{tv} = 0.25$, while $x_{uv} = 0.5$. 
Assume without loss of generality that there are no friendly constraints, therefore each node is in a singleton supernode ($|s(t)| = |s(u)| = |s(v)| = 1$).
We show that either when $uv\in E^+$ or when $uv\in E^-$, we get $\frac{\Cost{T}}{\Lp{T}} \ge \frac{16}{7}$ (with one minor exception which is handled independently).

Let $q = p(0.25,1,1)$, where $p(\cdot, \cdot, \cdot)$ depends on the strict pivoting algorithm we focus on (see \Cref{alg:strictPivot}), and $\alpha$ be an upper bound on $\frac{\Cost{T}}{\Lp{T}}$.
Notice that when $u$ (resp.\ $v$) is the pivot, then $v$ (resp.\ $u$) is not included in the cluster, directly by \Cref{alg:strictPivot} since $x_{uv} = 0.5$.

	\begin{enumerate}
        \item If $q=0$ then for the degenerate triangle $tu$ we get $\LP{t}{u} = 2\cdot 0.25, \COST{t}{u} = 2\cdot 1$, therefore
        \[\frac{\COST{t}{u}}{\LP{t}{u}} = 4 \ge \frac{16}{7}.\]
        \item For $q > 0$, consider a (non-degenerate) triangle $tuv$.

        Since a strict pivoting algorithm could use correlated rounding, for a plus edge $ab$ such that $x_{ab} = 0.25$, the value $q = p(0.25,1,1)$ is the marginal probability of $a$ being added to the cluster given that $b$ is the pivot.
        Let  $r$ denote $\Pr[uv \text{ is decided} | t \text{ is pivot}]$.

        Notice that
        \begin{align*}
            2q = &\Pr[\text{exactly one of }u,v \text{ is added to cluster} | t \text{ is pivot}] \\+ 2&\Pr[\text{both of }u,v \text{ are added to cluster} | t \text{ is pivot}].
        \end{align*}

        and similarly

        \begin{align*}
            r = &\Pr[\text{exactly one of }u,v \text{ is added to cluster} | t \text{ is pivot}] \\+ &\Pr[\text{both of }u,v \text{ are added to cluster} | t \text{ is pivot}].
        \end{align*}
        
        \begin{enumerate}
		\item Let $uv$ be a plus edge. In this case, one can see that
        $$\cost{t}{u}{v} =  \Pr[\text{exactly one of }u,v \text{ is added to cluster} | t \text{ is pivot}] = 2r - 2q.$$ 

        Using the above,
		\begin{align*}
			\LP{t}{u}[v] & = \lp{u}{t}{v} + \lp{t}{u}{v} + \lp{v}{t}{u}
			= r \cdot 0.5 + 2q \times 0.25 \\
            & = 0.5 \cdot (q + r),\\
			\COST{t}{u}[v] & = \cost{u}{t}{v} + \cost{t}{u}{v} + \cost{v}{t}{u}
			= 2 (r - q) + 2q.\\
            & = 2r.
		\end{align*}
		
		By definition of $\alpha$ we have $\alpha \cdot \LP{t}{u}[v] \geq \COST{t}{u}[v]$, therefore
		$
		\frac{\alpha}{2} \cdot (q + r) \geq 2r
		$.
        
		By rearranging we get $\frac{\alpha}{2} q \geq (2 - \frac{\alpha}{2}) r$
		\begin{align*}
		  \Longrightarrow \frac{r}{q} \leq \frac{\alpha}{4 - \alpha}.
		\end{align*}
		\item Let $uv$ be a minus edge. In this case, $$\cost{t}{u}{v} = \Pr[\text{both of }u,v\text{ are added to cluster}| t \text{ is pivot}] = 2q - r.$$

        Using the above,
		\begin{align*}
			\LP{t}{u}[v] & = \lp{u}{t}{v} + \lp{t}{u}{v} + \lp{v}{t}{u} = r \times 0.5 + 2q \times 0.25 \\
            & = 0.5 \cdot (q + r),\\
			\COST{t}{u}[v] & = \cost{u}{t}{v} + \cost{t}{u}{v} + \cost{v}{t}{u} = (2q - r) + 2q\\
            & = 4q - r.
		\end{align*}
		Again by rearranging we must have $\frac{\alpha}{2} \cdot (q + r) \geq 4q - r$
		\begin{align*}
			\Longrightarrow \frac{r}{q} \geq \frac{8 - \alpha}{\alpha + 2}.
		\end{align*}
        \end{enumerate}
	\end{enumerate}
	We conclude that if $q>0$, then
	$$
	\frac{\alpha}{4 - \alpha} \geq \frac{r}{q} \geq \frac{8 - \alpha}{\alpha + 2}.
	$$
	For $\alpha < 4$ all denominators are positive, and we get
	\begin{align*}
	(4 - \alpha) (8 - \alpha) & \leq \alpha (\alpha + 2)\\
	\alpha^2 - 12 \alpha + 32 & \leq \alpha^2 + 2\alpha \\
    \frac{16}{7} & \leq \alpha.
	\end{align*}
	which is the desired result.
\end{proof}

\section{\texorpdfstring{$16/7$ Approximation}{16/7 Approximation}} \label{sec:analysis}
Our algorithm (\Cref{alg:pivot-supernodes}) is a strict pivoting algorithm (as defined in \Cref{alg:strictPivot}).
In particular, notice that if a supernode $W$ has LP distance at least $0.5$ to the pivot, then the probability of $W$ being included in the cluster $C$ is $0$, by definition of $f^+,f^-$.
Therefore the output of \Cref{alg:pivot-supernodes} satisfies all hard constraints, by \Cref{lem:satisfiesAllConstraints}.

\Cref{alg:pivot-supernodes} processes each supernode independently.
Using the notation of \Cref{alg:strictPivot}, when the pivot is $u$, belonging to a supernode $U=s(u)$, and we process supernode $W\ni w$, we let $\beta_{UW} = \frac{|E_{UW}|}{|U||W|}$ be the fraction of plus edges over the total pairs between the two supernodes.
Then we use 
\[p(x_{uw},|E_{UW}|,|U||W|) \stackrel{\mathrm{def}}{=} \frac{|E_{UW}|}{|U||W|} f^+(x_{uw}) + \frac{|U||W|-|E_{UW}|}{|U||W|}f^-(x_{uw}) = \beta_{UW}f^+(x_{uw}) + (1-\beta_{UW})f^-(x_{uw}).\]

Note that by definition of supernodes, all nodes inside a supernode have LP distance $0$ to each other. Therefore, the choice of $w\in W$ in the above expression does not matter, as all pairs with one endpoint in $U$ and the other in $W$ have the same LP distance (by triangle inequality).

%%%%%%%%%%%%%%% SUPERNODES %%%%%%%%%%%%%%%%%%%

\begin{algorithm2e}[!ht]
	 	\DontPrintSemicolon
	 	\SetKwInOut{Input}{Input}
	 	\SetKwInOut{Output}{Output}
	
	 	\caption{The $\frac{16}{7}-\gamma$ approximation}
	 	\label{alg:pivot-supernodes}
	
	 	\Input{Instance $G = (V,E^+\uplus E^-,F,H)$}
	 	\Output{A clustering $\XC$ of $V$}
	
	 	\BlankLine
	 	Obtain $\{x_{uv}\}_{uv \in \binom{V}{2}}$, a feasible solution to the standard LP (\Cref{fig:LP})\;
	
	 	$\XC \gets \set{\emptyset}$\;
	 	\While{$V\neq \emptyset$}{
		 		Pick a pivot $u \in V$ uniformly at random and let $U = s(u)$ be the supernode containing $u$\;
		 		Initialize a new cluster $C\gets U$\;
		 		\ForEach{{supernode} $W\ne U$}{
                    Let $E_{UW}$ be the pairs in $E^+$ with one endpoint in $U$ and the other in $W$\;
                    Let $w$ be an arbitrary node in $W$\;
                    Let $\beta_{UW} = \frac{|E_{UW}|}{|U||W|}$\;
                    Include $W$ into $C$ with probability $\beta_{UW} f^+(x_{uw}) + (1 - \beta_{UW}) f^-(x_{uw})$\; 
                }
		 		Remove $C$ from $V$\;
		 		$\XC \gets \XC \cup \{C\}$\;
		 	}
	
	 	\Return{$\XC$}\;
	 \end{algorithm2e}

We now analyze $\frac{\Cost{T}}{\Lp{T}}$.
In fact, for technical reasons, we analyze $\frac{\Cost{T}+\delta}{\Lp{T}}$, where $\delta$ is a constant.
We prove that it suffices to upper bound this quantity in the case where the edges connecting any two different supernodes (each containing some of the vertices of $T$) are all of the same type (either all in $E^+$ or all in $E^-$).
In particular, we prove the following.

\begin{lemma}\label{lem:edgesSame}
Let $u,w,z$ be distinct nodes, $\delta$ be a constant, and $\mathcal{T}$ be the set of (non-degenerate) triangles with one endpoint in the supernode $s(u)$, another in $s(w)$, and the other in $s(z)$.
Then the expression
\[\frac{\sum_{T\in \mathcal{T}}(\Cost{T}+\delta)}{\sum_{T\in \mathcal{T}}\Lp{T}}\]
is maximized when for each pair of supernodes in $\set{s(u),s(w),s(z)}$ either all edges between them belong to $E^+$ or they all belong to $E^-$.

Similarly, let $\mathcal{T'}$ be the set of degenerate triangles with one endpoint in $s(u)$ and the other in $s(w)$.
Then
\[\frac{\sum_{T'\in \mathcal{T'}}(\Cost{T'}+\delta)}{\sum_{T'\in \mathcal{T'}} \Lp{T'}}\]
is maximized when either all edges between $s(u)$ and $s(v)$ belong to $E^+$ or they all belong to $E^-$.
\end{lemma}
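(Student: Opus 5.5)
The plan is to fix everything except the edge labels between one pair of supernodes, say $A=s(u)$ and $B=s(w)$. We then show that the ratio, viewed as a function of those labels, is maximized at an extreme configuration (all plus or all minus). Repeating this for each of the (at most three) pairs of supernodes gives the statement, since fixing one pair to be monochromatic never hurts the next step.

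First, parametrize by $\beta=\beta_{AB}$. The key observation is that the LP distances are fixed (all pairs between $A$ and $B$ have the same $x$, since nodes within a supernode are at distance $0$). Hence the only way the labels between $A$ and $B$ enter the algorithm is through $\beta$. Moreover, they enter the inclusion probability $p_{AB}=\beta f^+(x)+(1-\beta)f^-(x)$ affinely, and this probability is used only when the pivot lies in $A$ or in $B$.

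Second, write the numerator $N$ and denominator $D$ of the ratio (summed over $\mathcal{T}$) as functions of the labels. For a single edge $ab$ with $a\in A$, $b\in B$, its own LP contribution is $x$ or $1-x$ (affine in its label) times a decided-probability independent of its label. Its violation probability is likewise affine in its own label, with coefficients depending on $p_{AB}$ and other quantities. Summing over all $|A||B|$ such edges, and over the triangles in which they appear with the third vertex in $s(z)$, gives three kinds of terms.
\begin{itemize}
\item Terms linear in $\beta$, from edges counted with their own label.
\item Terms linear in $p_{AB}$, hence linear in $\beta$: these come from pivots in $A$ or $B$ deciding the edges to $s(z)$ or within other pairs, and each such probability is affine in $p_{AB}$ for fixed other randomness, by independence of supernode decisions.
\item Cross terms $\beta\cdot p_{AB}$: these arise for the $A$–$B$ edges themselves when the pivot is in $A$ or $B$. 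The cost of the edge is $p_{AB}$'s complement if the edge is plus, and $p_{AB}$ if it is minus.
\end{itemize}
Aggregating over all $A$–$B$ edges, this contribution is $|A||B|\bigl(\beta(1-p_{AB})+(1-\beta)p_{AB}\bigr)$. So $N(\beta)$ is a quadratic in $\beta$, $D(\beta)$ is affine in $\beta$, and the $\delta$ terms are constant.

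Third, show that the ratio $N/D$ on $[0,1]$ is maximized at an endpoint. Since $D>0$ is affine, it suffices that $N$ is convex in $\beta$. Then $N-\alpha D$ is convex for every $\alpha$, so $N-\alpha D\le 0$ at both endpoints implies $N-\alpha D\le 0$ throughout, i.e.\ the maximum ratio is attained at an endpoint. The quadratic coefficient comes only from the term $\beta(1-p)+(1-\beta)p=p+\beta(1-2p)$, with $p=f^-+\beta(f^+-f^-)$. This gives $\beta^2$ coefficient $-2(f^+-f^-)=2(f^--f^+)\ge 0$ by \Cref{lemma:funcsInequality}. Hence $N$ is convex, which is exactly the role of that lemma alluded to in the introduction. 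A last detail is that the discrete set of achievable $\beta$ values lies in $[0,1]$ and includes the endpoints.

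The main obstacle is verifying that no other term is nonlinear in $\beta$. In particular, triangles with two vertices in $A\cup B$ and the pivot in $s(z)$ involve only the fixed probabilities from $s(z)$, while those with the pivot in $A$ involve $p_{AB}$ once times the $s(z)$ inclusion probability, which is independent of $\beta$. I would check these case by case, tracking that each $A$–$B$ edge's label multiplies $p_{AB}$ only in its own cost term. The degenerate case $\mathcal{T}'$ is the same computation with no third supernode.
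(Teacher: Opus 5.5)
Your proposal is correct and follows the paper's route: parametrize by $\beta$, observe that the denominator is affine in $\beta$ while the numerator is quadratic with $\beta^2$-coefficient proportional to $2(f^- - f^+)\ge 0$ by \Cref{lemma:funcsInequality}, and conclude that the endpoints $\beta\in\{0,1\}$ suffice. The only difference is in the three-distinct-supernode case. There the paper notes that both sums are jointly multilinear in $(\beta_{UW},\beta_{UZ},\beta_{WZ})$ and checks the vertices of the cube, which is your coordinate-wise iteration with a vanishing quadratic term. Indeed, the $\beta\cdot p_{AB}$ cross term you describe arises only in the degenerate and two-supernode cases, because in a triangle with three distinct supernodes an $A$--$B$ edge is evaluated only when the pivot lies in the third supernode.
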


\begin{proof}

\textbf{Case: $T=uwz$ and all belong to two distinct supernodes ($s(u)\ne s(w)=s(z)$): }
Let $U=s(u)$ be the supernode containing $u$ and $W=s(w)=s(z)$ the supernode containing $w,z$.
In this case, if $u$ is the pivot, then either $w,z$ both end up in the cluster with $u$, or neither do.
As $w,z$ are in the same supernode, we have $wz\in E^+$ and $x_{wz}=0$, therefore $\cost{u}{w}{z} = \lp{u}{w}{z} = 0$.
If $w$ is the pivot, then $z$ ends up in the cluster with $w$, and $u$ ends up in the same cluster with probability $\left(\beta_{UW}f^+(x_{uz}) + (1-\beta_{UW})f^{-}(x_{uz})\right)$.
Therefore, if $uz\in E^+$ then $\cost{w}{u}{z} = \left(\beta_{UW}(1-f^+(x_{uz})) + (1-\beta_{UW})(1-f^{-}(x_{uz}))\right)$, while if $uz\in E^-$ then $\cost{w}{u}{z} = \left(\beta_{UW}f^+(x_{uz}) + (1-\beta_{UW})f^{-}(x_{uz})\right)$.
By an analogous argument, we get identical expressions for the case when $z$ is the pivot. 
Thus, 
\begin{align*}
\sum_{u\in U, wz\in \binom{W}{2}} \Cost{uwz} &= (|W| -1)\cdot [|E_{UW}| \cdot \left(\beta_{UW}(1-f^+(x_{uz})) + (1-\beta_{UW})(1-f^{-}(x_{uz}))\right)\\&+ (|U||W| - |E_{UW}|) \cdot \left(\beta_{UW}f^+(x_{uz}) + (1-\beta_{UW})f^{-}(x_{uz})\right)].
\end{align*}
The above simplifies to
\[\frac{|W| - 1}{|U||W|} \cdot \left[(2f^{-}(x_{uz}) - 2f^+(x_{uz}))\beta_{UW}^2 + (1 + f^+(x_{uz}) -3f^-(x_{uz}))\cdot \beta_{UW} + f^-(x_{uz})\right],\]
which is convex in $\beta_{UW}$ since the coefficient of $\beta_{UW}^2$ is nonnegative (by~\Cref{lemma:funcsInequality}). 

Now, if $w$ is the pivot, then $z$ is included in its cluster, therefore $\lp{w}{u}{z} = x_{uz}$ if $uz\in E^+$ and $1-x_{uz}$ otherwise. An analogous argument holds also for the case when $z$ is the pivot.
Thus,
\begin{align*}
    \sum_{u \in U, wz \in \binom{W}{2}} \Lp{uwz} = \frac{|W| - 1}{|U||W|} \cdot [\beta_{UW} \cdot x_{uz} + (1 - \beta_{UW}) \cdot(1 - x_{uz})].
\end{align*} 
is linear in $\beta_{UW}$.
Hence, in order to check whether 
\[\sum_{u\in U, wz\in \binom{W}{2}} \Cost{uvw} - \alpha \sum_{u\in U, wz\in \binom{W}{2}} \Lp{uvw} + \delta \leq 0,\]
for all $\beta_{UW} \in [0,1]$,
one simply needs to determine that the maximum value of the convex (in $\beta_{UW}$) function
$$\sum_{u\in U, wz\in \binom{W}{2}} \Cost{uvw} - \alpha \sum_{u\in U, wz\in \binom{W}{2}} \Lp{uvw} + \delta$$
is non-positive. In particular, this amounts to considering only the cases that $\beta_{UW}\in \set{0,1}$.

This means that if $u,w,z$ are such that $s(u) \neq s(w) = s(z)$, we can upper bound $\frac{\COST{u}{w}[z] + \delta}{\LP{u}{w}[z]}$ by upper bounding the cases where all edges between $s(u)$ and $s(w)$ are minus (corresponding to $\beta_{UW} = 0$) and all edges between $s(u)$ and $s(w)$ are plus (corresponding to $\beta_{UW} = 1$).

\noindent \textbf{Case: $T=uwz$ and all belong to distinct supernodes ($|\{s(u),s(w),s(z)\}| = 3$): }
Consider an arbitrary iteration of the main while loop of \Cref{alg:pivot-supernodes}. Let $U$ be the supernode containing the pivot $u$. Let $W,Z\ne U$ be distinct supernodes in that iteration.

For $w \in W, z \in Z$ such that $wz \in E^+$, we have $\cost{u}{w}{z}$ is the probability that exactly one of $w,z$ is included in $C$ and the other is not.
\begin{align*}
X_{UWZ} \stackrel{\mathrm{def}}{=} & \left(\beta_{UW}f^+(x_{uw}) + (1-\beta_{UW})f^{-}(x_{uw})\right)\cdot \left(\beta_{UZ}(1-f^+(x_{uz})) + (1-\beta_{UZ})(1-f^{-}(x_{uz}))\right) \\ &+ \left(\beta_{UW}(1-f^+(x_{uw})) + (1-\beta_{UW})(1-f^{-}(x_{uw}))\right)\cdot \left(\beta_{UZ}f^+(x_{uz}) + (1-\beta_{UZ})f^{-}(x_{uz})\right).
\end{align*}

Similarly, for $w \in W, z\in Z$ such that $wz \in E^-$, we have $\cost{u}{w}{z}$ is the probability that both the supernodes $W, Z$ are included in the cluster formed with $U$ as pivot, which is 
\[Y_{UWZ} \stackrel{\mathrm{def}}{=} \left(\beta_{UW}f^+(x_{uw}) + (1-\beta_{UW})f^{-}(x_{uw})\right)\cdot \left(\beta_{UZ}f^+(x_{uz}) + (1-\beta_{UZ})f^{-}(x_{uz})\right).\]

It is clear from the above expressions that $\cost{u}{w}{z}$ depends only on whether $wz\in E^+$ or $wz\in E^-$, hence: 
\begin{align*}
    \sum_{w \in W, z\in Z} \cost{u}{w}{z} = |W||Z| \left[\beta_{WZ} X_{UWZ} + (1-\beta_{WZ})Y_{UWZ}\right].
\end{align*}

The above sum is a multilinear polynomial in $\beta_{UW}, \beta_{UZ}, \beta_{WZ}$. 

Let $A_{UWZ}$ denote the probability that at least one of $w,z$ is included in $C$, that is
\begin{align*}
    A_{UWZ} \stackrel{\mathrm{def}}{=} & (\beta_{UW}f^+(x_{uw}) + (1-\beta_{UW})f^{-}(x_{uw}))\cdot (\beta_{UZ}f^+(x_{uz}) + (1-\beta_{UZ})f^{-}(x_{uz}))\\
    &+(\beta_{UW}f^+(x_{uw}) + (1-\beta_{UW})f^{-}(x_{uw}))\cdot (\beta_{UZ}(1-f^+(x_{uz})) + (1-\beta_{UZ})(1-f^{-}(x_{uz})))\\
    &+ (\beta_{UW}(1-f^+(x_{uw})) + (1-\beta_{UW})(1-f^{-}(x_{uw})))\cdot (\beta_{UZ}f^+(x_{uz}) + (1-\beta_{UZ})f^{-}(x_{uz})).
\end{align*}

Now, $\lp{u}{w}{z} = A_{UWZ}\cdot x_{wz}$ if $wz$ is a plus edge and $\lp{u}{w}{z} = A_{UWZ}\cdot (1-x_{wz})$ otherwise. Hence, $\lp{u}{w}{z}$ depends only on whether the edge $wz\in E^+$ or $wz\in E^-$.

Thus, 
\begin{align*}
    \sum_{w \in W, z \in Z} \lp{u}{w}{z} = |W||Z| \cdot A_{UWZ}\cdot (\beta_{WZ}x_{wz} + (1-\beta_{WZ})(1-x_{wz})),
\end{align*}
which is also multilinear in $\beta_{UW},\beta_{UZ},\beta_{WZ}$.

Hence, $\sum_{u\in U, w \in W, z \in Z} \Cost{uwz}$ and $\sum_{u\in U, w \in W, z \in Z} \Lp{uwz}$ are also  multilinear in $\beta_{UW},\beta_{UZ},\beta_{WZ}$. Therefore, an inequality of the form 
\begin{align*}
    \sum_{u\in U, w \in W, z \in Z} \Cost{uwz} \leq \alpha \cdot \sum_{u\in U, w \in W, z \in Z} \Lp{uwz} - \delta
\end{align*}
is satisfied if it is satisfied for $\beta_{UW},\beta_{UZ},\beta_{WZ} \in \{0,1\}$.

This means that if $u,w,z$ are all in different supernodes, we can upper bound $\frac{\COST{u}{w}[z] + \delta}{\LP{u}{w}[z]}$ by upper bounding the cases where, out of $\beta_{UW},\beta_{UZ},\beta_{WZ} \in \{0,1\}$ (i) all are $0$, (ii) one is $1$ and two are $0$, (iii) two are $1$ and one is $0$, (iv) all three are $1$.
\\~\\
\textbf{Case: $T=uw$ and they belong to distinct supernodes ($s(u)\ne s(w)$): }
Let $U=s(u)$ be the supernode containing $u$ and $W=s(w)$ the supernode containing $w$.
If $u$ is the pivot, then $w$ ends up in the same cluster with probability $\left(\beta_{UW}f^+(x_{uw}) + (1-\beta_{UW})f^{-}(x_{uw})\right)$.
Therefore, if $uw\in E^+$ then $\cost{u}{u}{w} = \left(\beta_{UW}(1-f^+(x_{uw})) + (1-\beta_{UW})(1-f^{-}(x_{uw}))\right)$, while if $uz\in E^-$ then $\cost{u}{u}{w} = \left(\beta_{UW}f^+(x_{uw}) + (1-\beta_{UW})f^{-}(x_{uw})\right)$.
The expressions for $\cost{w}{u}{w}$ are identical.

As the pivot always ends up in the cluster, we get that $\lp{u}{u}{w} = \lp{w}{u}{w} = x_{uw}$ if $uw\in E^+$ and $1-x_{uw}$ otherwise.

Again, we get that to prove an expression of the form:
\[\sum_{u\in U, w\in W} \Cost{uw} \le \alpha \sum_{u\in U, w\in W} \Lp{uw} - \delta\]
it suffices to prove it for $\beta_{UW}\in \set{0,1}$.
\end{proof}

\subsection{Upper Bounding each type of Triangle}
In the rest of this section, we always assume that for any two supernodes, the edges between them are all of the same type (either all in $E^+$ or all in $E^-$).
This type of analysis suffices, because of \Cref{lem:edgesSame}.

Therefore, if $u$ is the pivot and $v$ is a node in a different supernode than $u$, then the probability of $v$ being included in the same cluster with $u$ is $f^+(uv)$ if $uv\in E^+$, and $f^-(uv)$ if $uv\in E^-$.
We now analyze each different type of triangle.

For technical reasons, relating to breaking beyond the $\frac{16}{7}$ barrier (\Cref{sec:beyond}), in what follows we do not only analyze the quantity $\frac{\COST{u}{w}[z]}{\LP{u}{w}[z]}$.
We also analyze cases where the numerator is slightly decreased or slightly increased, for particular cases of triangles.
For this reason we introduce the small positive constants $\gamma=10^{-6}, \slackBad{}=5\times 10^{-3}, \charge{}=10^{-6}$.

We first prove a helpful lemma we are using throughout the paper:

\begin{lemma} \label{lem:enoughBudget}
Let $uvw$ be in distinct supernodes, and $x_{uv} \in 0.25 \pm \slackBad{}$.
Then $\lp{u}{u}{v} > 0.2$ and $\LP{u}{v} > 0.4$.
Furthermore, if $x_{uw}\in 0.25 \pm \slackBad{}$, then $\lp{v}{u}{w} > 0.1$ and $\LP{u}{v}[w] > 0.2$.
\end{lemma}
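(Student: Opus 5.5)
The plan is to compute each quantity directly from the definitions. We work under the standing assumption of this subsection that edges between distinct supernodes are all of one type, which is justified by \Cref{lem:edgesSame}. We then lower bound the LP contribution of $uv$ by $\min(x_{uv},1-x_{uv})$ times the probability that $uv$ is decided.

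\textbf{Degenerate triangle $uv$.} When $u$ is the pivot, $u$ always joins $C$, so $uv$ is decided with probability $1$. Hence $\lp{u}{u}{v}$ equals $x_{uv}$ if $uv\in E^+$ and $1-x_{uv}$ if $uv\in E^-$. Since $x_{uv}\in[0.25-\slackBad{},0.25+\slackBad{}]=[0.245,0.255]$, both values are at least $0.245>0.2$. The same holds for $\lp{v}{u}{v}$ by symmetry, so $\LP{u}{v}\ge 0.49>0.4$.

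\textbf{Non-degenerate triangle $uvw$.} The quantity $\lp{v}{u}{w}$ is the LP contribution of $uw$ times the probability that $uw$ is decided when $v$ is the pivot. As before, the LP contribution of $uw$ is at least $0.245$, because $x_{uw}\in 0.25\pm\slackBad{}$.

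The decision probability is at least the probability that $u$ joins $v$'s cluster. This is $f^+(x_{uv})$ or $f^-(x_{uv})$ depending on the sign of $uv$, and $u,v$ are in distinct supernodes. By \Cref{lemma:funcsInequality}, it is at least $f^+(x_{uv})$. By \Cref{lemma:func-ineq}(ii), $f^+(x_{uv})\ge 1-\tfrac32 x_{uv}\ge 1-\tfrac32\cdot 0.255=0.6175$. Note that $x_{uv}<0.5$, so the lemma applies.

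Combining, $\lp{v}{u}{w}\ge 0.245\cdot 0.6175\approx 0.151>0.1$. The same argument with the pivot $w$, using $x_{uw}$ to bound the probability that $u$ joins, gives $\lp{w}{u}{v}>0.1$. Therefore $\LP{u}{v}[w]\ge \lp{v}{u}{w}+\lp{w}{u}{v}>0.2$, since all terms are nonnegative.

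\textbf{Main obstacle.} The only subtle point is the lower bound on the inclusion probability. It must not depend on the sign of $uv$, and that is exactly what \Cref{lemma:funcsInequality} combined with \Cref{lemma:func-ineq}(ii) provides. The remaining steps are arithmetic with the interval $0.25\pm\slackBad{}$.
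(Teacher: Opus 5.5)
Your proof is correct and follows essentially the same route as the paper: you bound the LP contribution by $\min(x,1-x)\ge 0.245$, use that the pivot decides the degenerate pair with probability $1$, and in the non-degenerate case lower bound the decision probability by the probability that $u$ joins the pivot's cluster. The only difference is that the paper reads off an inclusion probability of at least $0.5$ directly from the definitions of $f^+,f^-$, whereas you derive the sharper bound $0.6175$ via \Cref{lemma:funcsInequality} and \Cref{lemma:func-ineq}(ii).
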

\begin{proof}
$\lp{u}{u}{v}$ is equal to $x_{uv}$ if $uv\in E^+$, and $1-x_{uv}$ otherwise; it follows that $\lp{u}{u}{v} \ge 0.25-\slackBad{} > 0.2$.
As $\lp{u}{u}{v} = \lp{v}{u}{v}$, we get $\LP{u}{v} > 0.4$.

Furthermore, if $v$ is the pivot, we have that the probability of $u$ being in the cluster of $v$ is at least $0.5$ (by definition of $f^+$ and $f^-$), and therefore $\lp{v}{u}{w} \ge 0.5 (0.25-\slackBad{}) > 0.1$.
The same holds for $\lp{w}{u}{v}$, and thus $\LP{u}{v}[w] > 0.2$.
\end{proof}

Now we show that it suffices to only analyze triangles where all their nodes belong to different supernodes.
\begin{lemma}\label{lemma:degenerate-bound-1}
If the nodes of a triangle $T$ belong to at most two distinct supernodes, then
	$$
	\frac{\Cost{T}}{\Lp{T}} \leq 2.
	$$

    Furthermore, if $uv\subseteq T$ and $x_{uv} \in 0.25 \pm \slackBad{}$, then $\frac{\Cost{T} + 9\charge{}}{\Lp{T}} \leq 2.1$.
\end{lemma}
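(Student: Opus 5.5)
The plan is a direct case analysis. By \Cref{lem:edgesSame} (applied with $\delta=0$ for the first claim and $\delta=9\charge{}$ for the second), it suffices to treat the case where all edges between any two supernodes have the same type. If all nodes of $T$ lie in a single supernode, every pair has $x=0$ and is a plus edge that is never separated. Then $\Cost{T}=\Lp{T}=0$, and there is nothing to prove (the triangle contributes zero to both the numerator and the denominator in \Cref{lem:triangle}). Two cases remain: a degenerate triangle $uw$ with $s(u)\neq s(w)$, and a triangle $uwz$ with $s(u)\ne s(w)=s(z)$.

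\textbf{Degenerate case.} Let $x=x_{uw}$. If $uw\in E^+$, then $\COST{u}{w}=2(1-f^+(x))$ and $\LP{u}{w}=2x$. For $x\ge 0.5$ the ratio is $2/(2x)\le 2$. For $x<0.5$, \Cref{lemma:func-ineq}(ii) gives $1-f^+(x)\le \frac32 x$, so the ratio is at most $\frac32$. If $uw\in E^-$, then $\COST{u}{w}=2f^-(x)$ and $\LP{u}{w}=2(1-x)$. For $x<0.5$ the ratio is exactly $1$, and for $x\ge0.5$ the cost is $0$.

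\textbf{Two-supernode case.} Take $uwz$ with $U=s(u)$ and $W=s(w)=s(z)$. As in the proof of \Cref{lem:edgesSame}, pivot $u$ contributes $0$ to both cost and LP. With pivot $w$, the node $z$ is always included, so $\cost{w}{u}{z}$ and $\lp{w}{u}{z}$ coincide with $\cost{w}{u}{z}$-type quantities of the degenerate pair $uz$: namely $1-f^+(x)$ versus $x$ (plus edge), or $f^-(x)$ versus $1-x$ (minus edge), where $x=x_{uz}=x_{uw}$. The same holds for pivot $z$. Hence $\Cost{uwz}$ and $\Lp{uwz}$ equal $\COST{u}{z}$ and $\LP{u}{z}$, respectively, and the degenerate bounds transfer verbatim. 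This establishes the ratio bound of $2$.

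\textbf{Second claim.} Suppose $uv\subseteq T$ with $x_{uv}\in 0.25\pm\slackBad{}$. Then $x_{uv}>0$, so $u,v$ lie in different supernodes, and every other pair of $T$ across supernodes has distance $x_{uv}$ as well. By the computations above, the relevant LP value is $2x_{uv}$ or $2(1-x_{uv})$, which in either case is greater than $0.4$; this mirrors the argument of \Cref{lem:enoughBudget}. Meanwhile the cost is at most $3x_{uv}$ (plus edge) or $2(1-x_{uv})$ (minus edge). Therefore
\[
\frac{\Cost{T}+9\charge{}}{\Lp{T}} \;\le\; \max\left(\tfrac32,\,1\right)+\frac{9\charge{}}{0.4} \;\le\; 2.1 .
\]

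I expect no real obstacle here. The only care needed is to check that the pivot-$w$ and pivot-$z$ terms in the two-supernode case reduce exactly to the degenerate expressions, and to handle the $x\ge0.5$ branches separately, since there $f^\pm$ vanish and $\Lp{T}\ge 1$.
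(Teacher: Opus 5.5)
Your proposal is correct and follows essentially the same case analysis as the paper: the single-supernode case, the degenerate pair, and the two-supernode triangle reducing to the degenerate pair, followed by adding the $9\charge{}$ slack against an LP lower bound. The only minor difference is that in the two-supernode case you compute $\Lp{T}>0.4$ directly, whereas the paper cites \Cref{lem:enoughBudget}, which is stated for distinct supernodes; your version is slightly cleaner and also gives the tighter constant $\max(\tfrac32,1)$ in place of $2$.
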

\begin{proof}
If the nodes of a triangle $T$ all belong to the same supernode, then they all end up in the same cluster.
Furthermore they are all connected by edges in $E^+$, therefore $\Cost{T} = 0$.
Notice that in this case the LP distances $x$ are all $0$, therefore we cannot have $x_{uv} \in 0.25 \pm \slackBad{}$.

If $T=uv$ (degenerate triangle) then by definition we have:
	\begin{align*}
		\COST{u}{v} & = \cost{u}{u}{v} + \cost{v}{u}{v}, \\
		\LP{u}{v} & = \lp{u}{u}{v} + \lp{v}{u}{v}.
	\end{align*}
	By symmetry, the contribution when $u$ is the pivot is the same as when $v$ is the pivot, so $\frac{\COST{u}{v}}{\LP{u}{v}} = \frac{\cost{u}{u}{v}}{\lp{u}{u}{v}}$. The term $\cost{u}{u}{v}$ is the probability of violating edge $uv$ when $u$ is the pivot. The term $\lp{u}{u}{v}$ is the LP cost of $uv$, since the probability of the edge $uv$ being decided is $1$. We proceed with a case analysis on the sign of the edge.
	
	\begin{enumerate}
		\item[(i)]  $uv$ is a negative edge. The LP cost is $1 - x_{uv}$. A violation occurs if $v$ is clustered with $u$, which happens with probability $f^-(x_{uv})$.
        The ratio $\frac{\COST{u}{v}}{\LP{u}{v}}$ is thus $\frac{f^-(x_{uv})}{1 - x_{uv}}$. For $x_{uv}\in [0, 0.5)$, the definition $f^-(x_{uv}) = 1 - x_{uv}$ makes the ratio exactly $1$.
        For $x_{uv}\in [0.5, 1]$, the definition $f^-(x_{uv}) = 0$ gives $\frac{\COST{u}{v}}{\LP{u}{v}} = 0$.
		\item[(ii)]  $uv$ is a positive edge. The LP cost is $x_{uv}$. A violation occurs if $v$ is not clustered with $u$, which happens with probability $1 - f^+(x_{uv})$, making the ratio $\frac{\COST{u}{v}}{\LP{u}{v}} = \frac{1 - f^+(x_{uv})}{x_{uv}}$.
        For $x_{uv}\in [0, 0.5)$, using \Cref{lemma:func-ineq} we have $f^+(x_{uv}) \geq 1 - \frac{3}{2} x_{uv}$, which implies the ratio  is at most $\frac{3}{2}$.
        For $x_{uv}\in [0.5, 1]$, the definition $f^+(x_{uv}) = 0$ makes the ratio $\frac{1}{x_{uv}}$ which is maximized at $x_{uv} = 0.5$, yielding $\frac{\COST{u}{v}}{\LP{u}{v}} \le 2$.
	\end{enumerate}

    Finally, if $x_{uv}\in 0.25 \pm \slackBad{}$, then $\frac{\COST{u}{v} + 9\charge{}}{\LP{u}{v}} \le 2 + \frac{9}{0.4}\charge \le 2.1$, by \Cref{lem:enoughBudget}.

Now if $T=uvw$ and $s(u) \ne s(v) = s(w)$, we get that $vw\in E^+$.
If $u$ is the pivot then either both or none of $vw$ are included in the cluster of $u$, meaning that $\cost{u}{v}{w} = 0$.
If $v$ is the pivot, $w$ is included in its cluster, while $u$ is included with probability $f^+(x_{uv})$ if $uv\in E^+$, and $f^-(x_{uv})$ otherwise. The expressions are identical when $w$ is the pivot because $x_{uw} = x_{uv}$ and both $uv,uw$ are of the same type (either both in $E^+$ or both in $E^-$).
Therefore $\frac{\Cost{T}}{\Lp{T}} \le 2$, exactly as in the case of a degenerate triangle.

    Finally, if $x_{uv}=x_{uw}\in 0.25 \pm \slackBad{}$, then $\frac{\Cost{T} + 9\charge{}}{\Lp{T}} \le 2 + \frac{9}{0.2}\charge \le 2.1$, by \Cref{lem:enoughBudget}.
\end{proof}

Therefore, for the rest of this section we only consider triangles where all nodes belong to distinct supernodes.

\subsubsection{\texorpdfstring{$---$ Triangles}{--- Triangles}}
\begin{lemma}\label{lemma:---_triangle}
	For a $---$ triangle $T=uvw$,
	$$
	\frac{\Cost{T}}{\Lp{T}} \leq 1.
	$$

    Furthermore, if $x_{uv},x_{uw} \in 0.25 \pm \slackBad{}$, then $\frac{\Cost{T} + 9\charge{}}{\Lp{T}} \leq 1.1$.
\end{lemma}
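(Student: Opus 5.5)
We work under the standing assumption of this subsection that $u,v,w$ lie in three distinct supernodes and all edges between them are minus, so each inclusion probability is given by $f^-$. The plan is to prove the stronger pivot-wise inequality $\cost{a}{b}{c} \le \lp{a}{b}{c}$ for each choice of pivot $a\in\{u,v,w\}$, with $\{b,c\}$ the other two nodes. Summing over the three pivots then gives $\Cost{T}\le\Lp{T}$.

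Fix pivot $a$ and write $p_b=f^-(x_{ab})$ and $p_c=f^-(x_{ac})$. Since the supernodes are included independently in \Cref{alg:pivot-supernodes}, we have
\[\cost{a}{b}{c}=p_bp_c\]
(the minus edge $bc$ is violated iff both $b$ and $c$ join) and
\[\lp{a}{b}{c}=(p_b+p_c-p_bp_c)(1-x_{bc}).\]
If $p_bp_c=0$ we are done. Otherwise $x_{ab},x_{ac}<0.5$, so $p_b=1-x_{ab}$ and $p_c=1-x_{ac}$. By the triangle inequality, $1-x_{bc}\ge 1-x_{ab}-x_{ac}=p_b+p_c-1$. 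It therefore suffices to show
\[p_bp_c\le (p_b+p_c-p_bp_c)(p_b+p_c-1)\quad\text{for } p_b,p_c\in(1/2,1],\]
noting that $p_b+p_c-1>0$ on this range.

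Set $s=p_b+p_c$ and $q=p_bp_c$. The inequality becomes $q\le (s-q)(s-1)$, i.e.\ $qs\le s(s-1)$, i.e.\ $q\le s-1$. This is $(1-p_b)(1-p_c)\ge 0$, which holds. This completes the first claim.

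For the second claim, assume $x_{uv},x_{uw}\in 0.25\pm\slackBad{}$. The triangle is $---$ with nodes in distinct supernodes, so \Cref{lem:enoughBudget} applies and gives $\Lp{T}>0.2$. Combining this with $\Cost{T}\le \Lp{T}$ from the first part,
\[\frac{\Cost{T}+9\charge{}}{\Lp{T}}\le 1+\frac{9\charge{}}{0.2}\le 1.1,\]
since $\charge{}=10^{-6}$.

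The only delicate step is the algebra in the third paragraph. The point to watch is that the triangle inequality bound on $1-x_{bc}$ is applied in the direction that lowers the LP budget, and that this is legitimate because the bound $p_b+p_c-1$ is positive on the relevant range. This makes the reduction to $(1-p_b)(1-p_c)\ge0$ valid.
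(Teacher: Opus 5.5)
Your last equivalence has the sign reversed. The condition $q\le s-1$ means $p_bp_c\le p_b+p_c-1$, i.e.\ $(1-p_b)(1-p_c)\le 0$, and this fails whenever $p_b,p_c<1$. The slip cannot be patched, because the pivot-wise inequality $\cost{a}{b}{c}\le\lp{a}{b}{c}$ is itself false. Take $x_{ab}=x_{ac}=0.2$ and $x_{bc}=0.4$, which is feasible with all distances below $1/2$. Then $\cost{a}{b}{c}=0.8\cdot 0.8=0.64$, whereas $\lp{a}{b}{c}=(1-0.4)(1-0.2\cdot 0.2)=0.576$. Similarly, $x_{ab}=x_{ac}=0.25$, $x_{bc}=0.5$ gives $9/16>15/32$. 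Your triangle-inequality step is in the right direction, and it is even tight in the first example. The real problem is that the pivot's own LP budget for the edge $bc$ is too small. The cost incurred at one pivot must be paid from the budget of a different pivot.

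That cross-charging is what the paper does, and it is short. Since $f^-(x)\le 1-x$ for all $x\in[0,1]$, you have $\cost{u}{v}{w}=f^-(x_{uv})f^-(x_{uw})\le f^-(x_{uv})(1-x_{uw})\le\lp{v}{u}{w}$. The last step holds because when $v$ is the pivot, $u$ joins with probability $f^-(x_{uv})$. So the edge $uw$, whose LP weight is $1-x_{uw}$, is decided with at least that probability. Cyclically, $\cost{v}{u}{w}\le\lp{w}{u}{v}$ and $\cost{w}{u}{v}\le\lp{u}{v}{w}$. Summing gives $\Cost{T}\le\Lp{T}$ without any use of the triangle inequality. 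The paper organizes the same matching as a case split on how many of the distances are at least $1/2$. For example, when all are below $1/2$ it bounds the pivot-$w$ cost $(1-x_{uw})(1-x_{vw})$ by the pivot-$u$ LP term $(1-x_{vw})(1-x_{uv}x_{uw})$. Your treatment of the second claim via \Cref{lem:enoughBudget} is fine once the first claim is established this way.
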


\begin{proof}
	We first compute the \Cost{} and \Lp{} contributions for each of the three pivots. Let us consider the case where $u$ is the pivot. A violation of the edge $vw$ occurs if and only if both of $v$ and $w$ are included in the cluster seeded by $u$.
    This happens with probability 
	$$\cost{u}{v}{w} = \incl{-}{u}{v}\incl{-}{u}{w}.$$
	The corresponding \Lp{} value of the edge is $1 - x_{vw}$, multiplied by the probability that it is decided, which is 
	$$\lp{u}{v}{w} = (1 - x_{vw}) \Bigl(1 - (1 - \incl{-}{u}{v}) (1 - \incl{-}{u}{w})\Bigr).$$ 
	The contributions for pivots $v$ and $w$ are defined symmetrically. 
%	\begin{align*}
%		\cost{u}{v}{w} & = f^-(x_{uv}) f^-(x_{uw}) \\
%		\lp{u}{v}{w} & = (1 - x_{vw}) \Bigl(1 - (1 - f^-(x_{uv})) (1 - f^-(x_{uw}))\Bigr) \\
%		\cost{v}{w}{u} & = f^-(x_{vw}) f^-(x_{vu}) \\
%		\lp{v}{w}{u} & = (1 - x_{wu}) \Bigl(1 - (1 - f^-(x_{vw})) (1 - f^-(x_{vu}))\Bigr) \\
%		\cost{w}{u}{v} & = f^-(x_{wu}) f^-(x_{wv}) \\
%		\lp{w}{u}{v} & = (1 - x_{uv}) \Bigl(1 - (1 - f^-(x_{wu})) (1 - f^-(x_{wv}))\Bigr).
%	\end{align*}
	To prove the lemma, we must show that the sum of \Cost{} terms is no more than the sum of \Lp{} terms:
	$$
	\sum_{u} f^-(x_{uv}) f^-(x_{uw}) \leq \sum_{u} (1 - x_{vw}) \Bigl(1 - (1 - f^-(x_{uv})) (1 - f^-(x_{uw}))\Bigr),
	$$
	where the sum is cyclic over $uvw$.
	We consider three exhaustive scenarios: (i) at least two of $x_{uv}$, $x_{uw}$ and $x_{vw}$ are at least 0.5, (ii) exactly one of these values is at least 0.5, and (iii) all three values are less than 0.5.
	
	For case (i), without loss of generality, assume $x_{uv} \geq 0.5$ and $x_{uw} \geq 0.5$. By definition, $\incl{-}{u}{v} = \incl{-}{u}{w} = 0$, which implies that all three terms on the left-hand side of the inequality are zero. The inequality holds trivially as the right-hand side is non-negative.
	
	For case (ii), without loss of generality, assume $x_{uw} \geq 0.5$, while $x_{uv} < 0.5$ and $x_{vw} < 0.5$. Then since $f^-(x_{uw}) = 0$, the left-hand side is simplified to $\incl{-}{u}{v} \incl{-}{v}{w}$. The right-hand side is a sum of three non-negative terms. Since the terms for pivots $v$ and $w$ are non-negative, the inequality holds if the LHS is bounded by the term for pivot $u$, which is simplified to $(1 - x_{vw}) \incl{-}{u}{v}$. Thus, it suffices to show
	$$
	\incl{-}{u}{v} \incl{-}{v}{w} \leq (1 - x_{vw}) \incl{-}{u}{v}.
	$$
	As $x_{vw} < 0.5$, we have $\incl{-}{v}{w} = 1 - x_{vw}$, and therefore the required inequality holds with equality.
	
	For case (iii), all three LP values are less than 0.5, so $f^-(x) = 1 - x$ for all relevant terms. The inequality becomes:
\begin{align*}
  &(1 - x_{uv}) (1 - x_{uw}) + (1 - x_{vw}) (1 - x_{uv}) + (1 - x_{vw}) (1 - x_{uw}) \\
  &\leq (1 - x_{vw})(1 - x_{uv}x_{uw}) + (1 - x_{uw})(1 - x_{vw}x_{uv}) + (1 - x_{uv})(1 - x_{vw}x_{uw})
\end{align*}
The third term on the left is at most the first term on the right since $(1 - x_{uw}) \leq (1 - x_{uv} x_{uw})$. Analogous arguments show that the second term on the left is at most the third term on the right and that the third term on the left is at most the second term on the right, and this proves the inequality.
    
	%$$\sum_{w} (1 - x_{uw}) (1 - x_{vw}) \leq \sum_{u} (1 - x_{vw}) (1 - x_{uv} x_{uw}).$$
    Since the inequality is satisfied in all cases, the proof of $\frac{\Cost{T}}{\Lp{T}} \le 1$ is complete.

    Finally, if $x_{uv},x_{uw}\in 0.25 \pm \slackBad{}$, then $\frac{\Cost{T} + 9\charge{}}{\Lp{T}} \le 1 + \frac{9}{0.2}\charge \le 1.1$, by \Cref{lem:enoughBudget}.
\end{proof}

\subsubsection{\texorpdfstring{$--+$ Triangles}{--+ Triangles}}
\begin{lemma}\label{lemma:--+_triangle}
	For a $--+$ triangle $T=uvw$,
	$$
	\frac{\Cost{T}}{\Lp{T}} \leq 2.
	$$
    Furthermore, if $x_{uv},x_{uw} \in 0.25 \pm \slackBad{}$, then $\frac{\Cost{T} + 9\charge{}}{\Lp{T}} \leq 2.1$.
\end{lemma}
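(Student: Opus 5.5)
Throughout, write the $--+$ triangle as $T=uvw$ with $vw\in E^+$ and $uv,uw\in E^-$; the proof adapts to whichever vertex is incident to both minus edges. Set $a=x_{vw}$, $b=x_{uv}$, $c=x_{uw}$. By the standing assumption after \Cref{lem:edgesSame}, and since the three nodes lie in distinct supernodes, inclusions from a fixed pivot are independent. The inclusion probabilities are $f^-(\cdot)$ along minus edges and $f^+(\cdot)$ along plus edges.

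\textbf{Per-pivot terms.} Pivot $u$ violates $vw$ iff exactly one of $v,w$ joins, and the edge is decided iff at least one joins:
\begin{align*}
\cost{u}{v}{w} &= f^-(b)(1-f^-(c)) + f^-(c)(1-f^-(b)),\\
\lp{u}{v}{w} &= a\bigl(1-(1-f^-(b))(1-f^-(c))\bigr).
\end{align*}
Pivot $v$ violates the minus edge $uw$ iff both $u$ and $w$ join:
\begin{align*}
\cost{v}{u}{w} &= f^-(b)f^+(a),\\
\lp{v}{u}{w} &= (1-c)\bigl(1-(1-f^-(b))(1-f^+(a))\bigr).
\end{align*}
Pivot $w$ is the same with $b$ and $c$ swapped. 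The target is $\sum \mathrm{cost}\le 2\sum \mathrm{lp}$.

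\textbf{Easy cases: some LP value is at least $0.5$.} Following the proof of \Cref{lemma:---_triangle}, I would case on which of $a,b,c$ are $\ge 0.5$, where the corresponding inclusion probability vanishes.
\begin{itemize}
\item If $a\ge 0.5$, then $f^+(a)=0$, so the pivot-$v$ and pivot-$w$ costs vanish. The pivot-$u$ cost is at most $P:=1-(1-f^-(b))(1-f^-(c))$, while $\lp{u}{v}{w}=aP\ge P/2$. The ratio is at most $2$.
\item If $b\ge 0.5$ (symmetrically $c$), then $f^-(b)=0$ and the pivot-$v$ cost vanishes. The pivot-$u$ cost is $f^-(c)$, against $a f^-(c)$ of LP. The pivot-$w$ cost is $f^-(c)f^+(a)$, against LP $(1-b)(f^-(c)+f^+(a)-f^-(c)f^+(a))$.
\end{itemize}
In the second case the point is to use the triangle inequality: $c\ge b-a\ge 0.5-a$, so $f^-(c)\le 0.5+a$ whenever $c<0.5$. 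Combining this with $f^+(a)\le 1-\tfrac{10}{9}a$ from \Cref{lemma:func-ineq} should give the bound. Where the $1-b$ factor is small, $c$ is forced to be large, and then $f^-(c)=0$ kills the cost. I would check this corner explicitly.

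\textbf{Main case: $a,b,c<0.5$.} Here $f^-(b)=1-b$ and $f^-(c)=1-c$. The pivot-$u$ cost becomes $b+c-2bc$, and the pivot-$u$ LP becomes $a(1-bc)$.

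The whole inequality is linear in each of $f^+(a)$ appearing in the pivot-$v$ and pivot-$w$ terms. Specifically, $\mathrm{cost}-2\,\mathrm{lp}$ has derivative $(1-b)-2(1-c)b$ in $f^+(a)$ for the pivot-$v$ term, and symmetrically for pivot $w$. So it suffices to check the two extremes $f^+(a)\in\{1-\tfrac32 a,\ 1-\tfrac{10}{9}a\}$ given by \Cref{lemma:func-ineq}, or directly the piecewise definition of $f^+$. This leaves a polynomial inequality in $(a,b,c)$ over the region cut out by the triangle inequalities $|b-c|\le a\le b+c$.

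I expect this three-variable polynomial inequality to be the main obstacle.
\begin{itemize}
\item First I would try to pair terms as in \Cref{lemma:---_triangle}: bound the pivot-$u$ cost $b+c-2bc$ by using $a\ge |b-c|$, and absorb the excess into the slack of the pivot-$v$ and pivot-$w$ terms, where the cost $(1-b)f^+(a)$ is well below $2(1-c)$ times the decision probability.
\item If that pairing fails near the boundary $a=b+c$ or $a=|b-c|$, I would argue that the expression is monotone (or concave) in $a$ along the feasible interval. This reduces the check to the extreme values of $a$, leaving a two-variable inequality to verify by a direct (possibly numerically certified) calculation.
\end{itemize}

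\textbf{The $9\charge{}$ strengthening.} This follows exactly as in the previous lemmas. When $x_{uv},x_{uw}\in 0.25\pm\slackBad{}$, \Cref{lem:enoughBudget} gives $\Lp{T}>0.2$, so $\frac{\Cost{T}+9\charge{}}{\Lp{T}}\le 2+\frac{9\charge{}}{0.2}\le 2.1$.
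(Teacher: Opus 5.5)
\textbf{Main case $a,b,c<\tfrac12$.} Your per-pivot formulas, the case $a\ge\tfrac12$, and the $9\charge{}$ strengthening via \Cref{lem:enoughBudget} match the paper. The main case, however, is left unproved: you reduce to a three-variable polynomial inequality, call it the main obstacle, and list strategies you might try. This case is the heart of the lemma, so that is a genuine gap.

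The missing observation is that this case needs neither the triangle inequality nor the shape of $f^+$. Discard $\lp{u}{v}{w}\ge 0$ and use $2(1-b)>1$ and $2(1-c)>1$. With $X=1-b$, $Y=1-c$, $Z=f^+(a)$, this gives $2\bigl(\lp{v}{u}{w}+\lp{w}{u}{v}\bigr)\ge X+Y+Z(2-X-Y)$. Meanwhile $\Cost{T}=X+Y-2XY+Z(X+Y)$, so it suffices that $2XY\ge Z(2X+2Y-2)$ for all $X,Y,Z\in[0,1]$. This is affine in $Z$, trivial at $Z=0$, and equals $2(1-X)(1-Y)\ge 0$ at $Z=1$.

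Your own affine-in-$Z$ observation would also have closed the case, had you evaluated at $Z\in\{0,1\}$ rather than at $1-\tfrac32 a$ and $1-\tfrac{10}{9}a$ (which only make the polynomial harder). At $Z=1$ one gets $2\,\Lp{T}-\Cost{T}=2a(1-bc)+2(1-b)(1-c)\ge 0$. At $Z=0$ one gets $2\,\Lp{T}-\Cost{T}\ge 4-5b-5c+6bc>0$ for $b,c<\tfrac12$.

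\textbf{Case $b\ge\tfrac12$, $a,c<\tfrac12$.} Here you compare the costs only against $\lp{u}{v}{w}$ and $\lp{w}{u}{v}$. You drop $\lp{v}{u}{w}=(1-c)f^+(a)$, which does not vanish even though the pivot-$v$ cost does. With only the terms you list, the target inequality is false, so no choice of bounds rescues that plan. For example, $a=0.1$, $b=0.5$, $c=0.4$ is feasible ($b=a+c$). It gives cost $0.6\bigl(1+f^+(0.1)\bigr)\approx 1.129$ against twice the listed LP $\approx 1.073$, and your bounds $f^-(c)\le 0.5+a$ and $f^+(a)\le 1-\tfrac{10}{9}a$ are essentially tight there.

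With the omitted term the case is immediate, and the concern about small $1-b$ disappears:
$2\bigl(af^-(c)+(1-c)f^+(a)\bigr)-f^-(c)\bigl(1+f^+(a)\bigr)=f^-(c)\bigl(2a+f^+(a)-1\bigr)\ge 0$,
since $f^+(a)\ge 1-\tfrac32 a$ by \Cref{lemma:func-ineq}. (The paper instead merges the two $f^-(c)$ LP contributions via $a+(1-b)\ge 1-c$ and then uses $2(1-c)\ge 1$.)
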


\begin{proof}
	Without loss of generality, assume $vw$ is the positive edge, while $uv$ and $uw$ are negative. We first establish the \Cost{} and \Lp{} contributions for each of the three pivots.
	\begin{align*}
		\cost{u}{v}{w} & = (1 - \incl{-}{u}{v}) \cdot \incl{-}{u}{w} + \incl{-}{u}{v} \cdot (1 - \incl{-}{u}{w}), \\
		\lp{u}{v}{w} & = x_{vw} \cdot \Bigl(1 - (1 - \incl{-}{u}{v}) (1 - \incl{-}{u}{w})\Bigr), \\
		\cost{v}{u}{w} & = \incl{-}{u}{v} \incl{+}{v}{w}, \\
		\lp{v}{u}{w} & = (1 - x_{uw})  \Bigl(1 - (1 - \incl{-}{u}{v}) (1 - \incl{+}{v}{w})\Bigr), \\
		\cost{w}{u}{v} & = \incl{-}{u}{w} \incl{+}{v}{w}, \\
		\lp{w}{u}{v} & = (1 - x_{uv})  \Bigl(1 - (1 - \incl{-}{u}{w}) (1 - \incl{+}{v}{w})\Bigr).
	\end{align*}
	Note that the expression for $\cost{u}{v}{w}$ accounts for a violation of the positive edge $(v,w)$ when $u$ is the pivot, which occurs if exactly one of $v$ or $w$ joins the cluster. To prove the lemma, we must show that the sum of \Cost{} terms is at most twice the sum of \Lp{} terms:
%	\begin{align*}
%		\frac{\Cost{T}}{\Lp{T}} & = \frac{(1 - \incl{-}{u}{v}) \cdot \incl{-}{u}{w} + \incl{-}{u}{v} \cdot (1 - \incl{-}{u}{w}) + \incl{-}{u}{v} \incl{+}{v}{w} + \incl{-}{u}{w} \incl{+}{v}{w}}{x_{vw} \cdot \Bigl(1 - (1 - \incl{-}{u}{v}) (1 - \incl{-}{u}{w})\Bigr) + (1 - x_{uw})  \Bigl(1 - (1 - \incl{-}{u}{v}) (1 - \incl{+}{v}{w})\Bigr) + (1 - x_{uv})  \Bigl(1 - (1 - \incl{-}{u}{w}) (1 - \incl{+}{v}{w})\Bigr)}
%	\end{align*}
	$$
	\sum_{u} \cost{u}{v}{w} \leq 2 \cdot \sum_{u} \lp{u}{v}{w},
	$$
	where the sum is cyclic over $uvw$. The proof proceeds by a case analysis on the LP values, which we partition into three exhaustive cases: (i) $x_{vw} \geq 0.5$, (ii) at least one of $x_{uv}$ and $x_{uw}$ is at least 0.5 while $x_{vw} < 0.5$, and (iii) all are less than 0.5.
	\begin{itemize}
	\item For case (i), by definition we have $\incl{+}{v}{w} = 0$. The sum of \Cost{} terms simplifies to:
	$$
		\sum_{u} \cost{u}{v}{w} = \incl{-}{u}{v} + \incl{-}{u}{w} - 2\incl{-}{u}{v}\incl{-}{u}{w}.
	$$
	To bound the sum of \Lp{} terms, we note that the terms are non-negative, so we can lower bound the sum by a single term:
	$$
		\sum_{u} \lp{u}{v}{w} \geq \lp{u}{v}{w} = x_{vw} \cdot \Bigl(\incl{-}{u}{v} + \incl{-}{u}{w} - \incl{-}{u}{v}\incl{-}{u}{w}\Bigr).
	$$
	Since $x_{vw} \geq 0.5$ we have
	\begin{align*}
		\sum_{u} \cost{u}{v}{w} & = \incl{-}{u}{v} + \incl{-}{u}{w} - 2\incl{-}{u}{v}\incl{-}{u}{w} \\
		& \leq \incl{-}{u}{v} + \incl{-}{u}{w} - \incl{-}{u}{v}\incl{-}{u}{w} \\
		& \leq 2x_{vw} \cdot \Bigl(\incl{-}{u}{v} + \incl{-}{u}{w} - \incl{-}{u}{v}\incl{-}{u}{w}\Bigr) \leq 2\cdot\sum_{u} \lp{u}{v}{w}.
	\end{align*}
	
	\item For case (ii), without loss of generality, assume $x_{uv} \geq 0.5$. If $x_{uw}$ is also at least 0.5, then $\incl{-}{u}{v} = \incl{-}{u}{w} = 0$, which causes the sum $\sum_{u} \cost{u}{v}{w}$ to be zero. The inequality holds trivially, as each term in $\sum_{u} \lp{u}{v}{w}$ is non-negative. We therefore proceed assuming $x_{uv} \ge 0.5$ while $x_{uw} < 0.5$ and $x_{vw} < 0.5$. First, we establish a lower bound for the sum of \Lp{} terms. We can bound the term $\lp{w}{u}{v}$ as follows, using the fact that $1 - \incl{+}{v}{w} \leq 1$:
	\begin{align*}
		\lp{w}{u}{v} & = (1 - x_{uv})  \Bigl(1 - (1 - \incl{-}{u}{w}) \cdot (1 - \incl{+}{v}{w})\Bigr) \\
		& \geq (1 - x_{uv}) \Bigl(1 - (1 - \incl{-}{u}{w}) \cdot 1\Bigr) = (1 - x_{uv}) \cdot \incl{-}{u}{w}.
	\end{align*}
	Using this along with $\incl{-}{u}{v} = 0$, we can bound the sum of \Lp{} terms:
	\begin{align*}
		\sum_{u} \lp{u}{v}{w} & = x_{vw} \cdot \incl{-}{u}{w} + (1 - x_{uw}) \cdot \incl{+}{v}{w} + \lp{w}{u}{v} \\
		& \geq \Bigl(1 - (x_{uv} - x_{vw})\Bigr) \cdot \incl{-}{u}{w} + (1 - x_{uw}) \cdot \incl{+}{v}{w}.
	\end{align*}
	Next, by substituting $\incl{-}{u}{v} = 0$, we simplify the sum of \Cost{} terms:
	$$
	\sum_{u} \cost{u}{v}{w} = \incl{-}{u}{w} + \incl{-}{u}{w} \incl{+}{v}{w}.
	$$
	By triangle inequality, we have $x_{uv} - x_{vw} \leq x_{uw}$. Therefore:
	\begin{align*}
		2\cdot \sum_{u} \lp{u}{v}{w} & \geq 2\cdot (1 - x_{uw}) \cdot \incl{-}{u}{w} + 2\cdot (1 - x_{uw}) \cdot \incl{+}{v}{w} \\
		& \geq \incl{-}{u}{w} + \incl{+}{v}{w} \geq \incl{-}{u}{w} + \incl{-}{u}{w}\incl{+}{v}{w} = \sum_{u} \cost{u}{v}{w}.
	\end{align*}
	
	\item For the final case (iii), since \lp{u}{v}{w} is non-negative, it suffices to show
	$$
	2\cdot \Bigl(\lp{v}{u}{w} + \lp{w}{u}{v}\Bigr) \geq \sum_{u} \cost{u}{v}{w}.
	$$
	We first establish a lower bound on the left-hand side. Since $x_{uv} < 0.5$ and $x_{uw} < 0.5$, it follows that $2\cdot (1 - x_{uv}) > 1$ and $2\cdot (1 - x_{uw}) > 1$. This allows us to bound the expression as follows:
	$$
		2\cdot \Bigl(\lp{v}{u}{w} + \lp{w}{u}{v}\Bigr) \geq \incl{-}{u}{v} + \incl{-}{u}{w} + \incl{+}{v}{w}\Bigl(2 - \incl{-}{u}{v} - \incl{-}{u}{w}\Bigr).
	$$
	Therefore, it suffices to prove that the sum of \Cost{} terms is less than or equal to this lower bound. By expanding the definition of \Cost{} terms, this is equivalent to proving:
	\begin{align*}
		\sum_{u} \cost{u}{v}{w} & = \Bigl(\incl{-}{u}{v} + \incl{-}{u}{w}\Bigr) \cdot ( \incl{+}{v}{w} + 1) - 2\cdot \incl{-}{u}{v} \incl{-}{u}{w} \\
		& \leq \incl{-}{u}{v} + \incl{-}{u}{w} + \incl{+}{v}{w}\Bigl(2 - \incl{-}{u}{v} - \incl{-}{u}{w}\Bigr).
	\end{align*}
	Rearranging terms, this inequality simplifies to:
	$$
	2\incl{-}{u}{v} \incl{-}{u}{w} \geq \incl{+}{v}{w} \Bigl(2 \incl{-}{u}{v} + 2 \incl{-}{u}{w} - 2\Bigr).
	$$
	To verify this, let $X = \incl{-}{u}{v}$, $Y = \incl{-}{u}{w}$, and $Z = \incl{+}{v}{w}$. We must prove for all $X, Y, Z \in [0, 1]$ that:
	$$
	2XY \geq Z \cdot (2X + 2Y - 2).
	$$
	The right-hand side is linear in $Z$, so we need to only verify the inequality at the endpoints of the interval $[0, 1]$.
	If $Z = 0$, the inequality becomes $2XY \ge 0$, which holds.
	If $Z = 1$, the inequality becomes 
	$$
	2XY \geq 2X + 2Y - 2,
	$$
	which is equivalent to 
	$$
	2\cdot (1 - X)(1 - Y) \geq 0.
	$$
	This also holds for any $X, Y \in [0, 1]$, which completes the proof for this case.
	\end{itemize}
	The required inequality is satisfied in all cases.

    Finally, if $x_{uv},x_{uw}\in 0.25 \pm \slackBad{}$, then $\frac{\Cost{T} + 9\charge{}}{\Lp{T}} \le 1 + \frac{9}{0.2}\charge \le 2.1$, by \Cref{lem:enoughBudget}.
\end{proof}
\subsubsection{\texorpdfstring{$-++$ Triangles}{-++ Triangles}}
\begin{lemma}\label{lemma:-++_triangle}
	For a $-++$ triangle $T=uvw$,
	$$
	\frac{\Cost{T}}{\Lp{T}} \leq \frac{16}{7}.
	$$

    Furthermore:
    \begin{itemize}
        \item if $x_{uw},x_{vw}\in 0.25\pm \slackBad{}$ and $x_{uv} \in 0.5\pm \slackBad{}$, then $\frac{\Cost{T}-\charge{}}{\Lp{T}} \le \frac{16}{7} - \gamma$,
        \item if $x_{uw},x_{vw}\in 0.25\pm \slackBad{}$ and $x_{uv} \not\in 0.5\pm \slackBad{}$, then $\frac{\Cost{T} + 9\charge{}}{\Lp{T}} \le \frac{16}{7} - \gamma$,
        \item in all other cases $\frac{\Cost{T}}{\Lp{T}} \le \frac{16}{7} - \gamma$.
    \end{itemize}
\end{lemma}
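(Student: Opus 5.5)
Throughout, $uv$ is the minus edge and $uw,vw$ are the plus edges; write $a=x_{uw}$, $b=x_{vw}$, $c=x_{uv}$. As in the preceding lemmas, I first write the six pivot contributions explicitly:
\begin{align*}
\cost{w}{u}{v} &= \incl{+}{u}{w}\incl{+}{v}{w}, & \lp{w}{u}{v} &= (1-c)\bigl(1-(1-\incl{+}{u}{w})(1-\incl{+}{v}{w})\bigr),\\
\cost{u}{v}{w} &= \incl{-}{u}{v}\bigl(1-\incl{+}{u}{w}\bigr)+\bigl(1-\incl{-}{u}{v}\bigr)\incl{+}{u}{w}, & \lp{u}{v}{w} &= b\bigl(1-(1-\incl{-}{u}{v})(1-\incl{+}{u}{w})\bigr),
\end{align*}
with the contributions for pivot $v$ obtained by symmetry.

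The first reduction uses the strict threshold. If one of $a,b$ is at least $0.5$, say $a\ge 0.5$, then $f^+(a)=0$. The triangle inequality forces $b+c\ge 0.5$, and a direct comparison in the style of case (ii) of \Cref{lemma:--+_triangle} should give a ratio of at most $2$. The same comparison should work when $c\ge 0.5$ but $a+b$ stays well away from $c$. So the substantive region is $a,b<0.5$ with $c\le a+b$.

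In that region I split on whether $c<0.5$.

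\textbf{Case $c\ge 0.5$.} Here $\incl{-}{u}{v}=0$, so with $P=f^+(a)$ and $Q=f^+(b)$ we get $\Cost{T}=PQ+P+Q$ and $\Lp{T}=(1-c)(P+Q-PQ)+bP+aQ$. The ratio is decreasing in $c$, so the worst case is $c=\max(0.5,\,c)$ subject to $c\le a+b$. This reduces the problem to a two-variable inequality in $a,b$. On the branch $f^+(x)=\sqrt{1-\frac{20}{9}x}$ (for $x<0.25$), the quantity $PQ$ is controlled by log-concavity (\Cref{lemma:f-product}): for fixed $a+b$, $PQ\le f^+\bigl(\tfrac{a+b}{2}\bigr)^2$. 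This symmetrises the problem to $a=b=t$. In the symmetric case the inequality becomes one-dimensional, and its tight point should be $t=0.25$, $c=0.5$: there $P=Q=\frac{2}{3}$, so $\Cost{T}=\frac{16}{9}$ and $\Lp{T}=\frac{5}{9}+\frac13=\frac{8}{9}$.

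I notice this gives a ratio of $2$, not $\frac{16}{7}$. So the true extremal configuration must be elsewhere, most likely at $c<0.5$ close to $0.5$, where $f^-(c)=1-c\approx\frac12$ contributes additional cost through the $u$- and $v$-pivots. I will locate the maximiser numerically before committing to a proof.

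\textbf{Case $c<0.5$.} All six terms are now explicit, and $\incl{-}{u}{v}=1-c$. I will exploit two facts:
\begin{itemize}
\item $\Cost{T}$ and $\Lp{T}$ are both linear in $c$ once $a,b$ are fixed, so the ratio is monotone in $c$. The extremum is therefore at an endpoint: $c\to 0.5^-$, or $c=|a-b|$ (the lower triangle-inequality bound).
\item The resulting expressions are affine in $P$ and $Q$ separately. Using the sandwich bounds of \Cref{lemma:func-ineq}, or log-concavity once more, should allow replacing $(a,b)$ by the symmetric pair $(t,t)$.
\end{itemize}
This leaves a one-variable inequality in $t\in[0,0.5)$, with the two pieces of $f^+$ treated separately. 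I will verify it by clearing denominators and checking the sign of a polynomial, or of a polynomial in $\sqrt{1-\frac{20}{9}t}$, on each interval.

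\textbf{The refined bounds.} The three bulleted statements follow from the same one-variable analysis together with a compactness argument. The margin $\frac{16}{7}-\text{ratio}$ is a continuous function of $(a,b,c)$ whose zeros lie only near $a=b=0.25$, $c=0.5$. Hence the margin exceeds a positive constant outside the $\slackBad{}$-box around that point; with $\gamma=10^{-6}$ this constant comfortably absorbs $9\charge{}/\Lp{T}$, since $\Lp{T}>0.2$ by \Cref{lem:enoughBudget}. Inside the box (the first bullet), the subtraction of $\charge{}$ produces the required gain of $\gamma$, because $\charge{}/\Lp{T}\ge \charge{}/3\gg\gamma$ is false only marginally; I need to check the constants here carefully. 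This may require a Lipschitz estimate on the ratio around the extremal point.

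\textbf{Main obstacle.} The symmetrisation step together with pinning down the exact extremal configuration is the main obstacle. The ratio $\frac{16}{7}$ must emerge as a genuine maximum near $f^+(0.25)=\frac23$. The barrier example suggests this happens when $c$ sits just below $0.5$, with the $f^-(c)$ contributions supplying the extra cost. I would first test numerically whether the maximum is attained there, and then tailor the case split to that point.
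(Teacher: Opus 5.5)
Your outline matches the paper's: dispose of $x_{uw}\ge 0.5$ or $x_{vw}\ge 0.5$, split on $x_{uv}\ge 0.5$, push $x_{uv}$ to an endpoint, symmetrize, then check a one-variable inequality. The execution, however, breaks exactly where $\frac{16}{7}$ is decided.

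\textbf{The extremal point.} At $x_{uw}=x_{vw}=\frac14$, $x_{uv}=\frac12$ you have $\Lp{T}=\frac12\bigl(\frac43-\frac49\bigr)+\frac13=\frac79$, not $\frac89$. So the ratio there is exactly $\frac{16/9}{7/9}=\frac{16}{7}$, and this \emph{is} the extremal configuration (the barrier instance of \Cref{sec:barrier}). Your alternative candidate goes the wrong way. As $x_{uv}\to 0.5^-$, the value $f^-(x_{uv})\approx\frac12$ lowers the $u$- and $v$-pivot costs from $\frac23$ to $\frac12$ and raises their LP mass, giving ratio $\frac{52}{31}\approx 1.68$. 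The worst case comes from $f^-$ jumping to $0$.

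\textbf{Monotonicity in $x_{uv}$.} For $x_{uv}\ge 0.5$ your direction is backwards. $\Cost{T}=PQ+P+Q$ does not depend on $x_{uv}$, while $\Lp{T}$ decreases in it. So the ratio increases, and the worst case is $x_{uv}=x_{uw}+x_{vw}$, not $0.5$. For example, with $x_{uw}=x_{vw}=0.3$ the ratio is $\approx 1.97$ at $x_{uv}=0.5$ but $\approx 2.21$ at $x_{uv}=0.6$.

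\textbf{Symmetrization.} Taking $x_{uv}=x_{uw}+x_{vw}$ is also what decouples the inequality into $g_\alpha(x_{uw})+g_\alpha(x_{vw})\ge\bigl(\alpha(1-x_{uw}-x_{vw})+1\bigr)f^+(x_{uw})f^+(x_{vw})$, with $g_\alpha(x)=(\alpha(1-x)-1)f^+(x)$. Symmetrizing this needs the \emph{convexity of $g_\alpha$} on $[0,0.5)$ for the left side, in addition to log-concavity of $f^+$ (\Cref{lemma:f-product}) for the right side. Log-concavity alone does not control the left side. Once $x_{uw}=x_{vw}=x\in[\frac14,\frac12)$, the inequality becomes $\frac{2\alpha-2\alpha x-2}{\alpha-2\alpha x+1}\ge\frac{13}{12}-\frac53x$, concave versus linear, whose endpoints require $\alpha\ge\frac{16}{7}$ and $\alpha\ge\frac94$.

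\textbf{The region $x_{uv}<0.5$.} You omit the endpoint $x_{uv}=x_{uw}+x_{vw}$ when this sum is below $0.5$. It is in fact the only relevant endpoint, since $\alpha\Lp{T}-\Cost{T}$ is decreasing in $x_{uv}$ there. The paper bounds it by $2$ with a separate polynomial argument. When $x_{uw}+x_{vw}\ge 0.5$, the paper drops the non-negative $f^-(x_{uv})$ term and reuses the case $x_{uv}\ge 0.5$.

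\textbf{The refined bullets.} Your compactness argument fails here, for three reasons:
\begin{itemize}
\item The ratio is not continuous: it jumps at $x_{uv}=0.5$ through $f^-$ and at $x_{uw},x_{vw}=0.5$ through $f^+$.
\item The claim that the margin vanishes only near the centre is precisely what needs proof.
\item Above all, $\slackBad{}=5\cdot 10^{-3}$ and $\gamma=\charge{}=10^{-6}$ are fixed, so you need an explicit lower bound on the margin off the box, not merely a positive one.
\end{itemize}
That margin is only about $5\cdot 10^{-4}$. It is approached at \emph{asymmetric} points such as $x_{uw}\approx 0.25-\slackBad{}$, $x_{vw}\approx 0.25+\slackBad{}$, $x_{uv}=0.5$, where the ratio is $\approx 2.2852$. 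Symmetrization cannot be used there: averaging an out-of-box pair with $x_{uw}+x_{vw}<0.5+\slackBad{}$ lands inside the box, where only $\frac{16}{7}$ holds.

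The paper instead moves the pair, at fixed sum $s$, only as far as $(s-c,c)$ with $c=0.25+\slackBad{}$, again justified by convexity of $g_\alpha$ and log-concavity of $f^+$. It then checks the resulting convex function at the endpoints of $[0.25-\slackBad{},0.25]$, after showing its derivative has no root there. For $s\ge 0.5+\slackBad{}$ it symmetrizes to $x\ge 0.25+\frac{\slackBad{}}{2}$ and obtains $\approx 2.282$.

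\textbf{First bullet.} Your worry here is justified. The bound $\Lp{T}\le 3$ only yields a gain of $\charge{}/3<\gamma$. What works is that $\Lp{T}<0.9$ throughout the box (it equals $\frac79$ at the centre), so $\charge{}/\Lp{T}>\gamma$.
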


\begin{proof}
	Without loss of generality, assume $uv$ is the negative edge, while $uw$ and $vw$ are positive. We begin by computing the \Cost{} and \Lp{} terms for each of the three pivots.
	\begin{align*}
		\cost{u}{v}{w} & = (1 - \incl{-}{u}{v}) \cdot \incl{+}{u}{w} + \incl{-}{u}{v} \cdot (1 - \incl{+}{u}{w}), \\
		\lp{u}{v}{w} & =  x_{vw} \cdot \Bigl(1 - (1 - \incl{-}{u}{v}) (1 - \incl{+}{u}{w})\Bigr), \\
		\cost{v}{u}{w} & = (1 - \incl{-}{u}{v}) \cdot \incl{+}{v}{w} + \incl{-}{u}{v} \cdot (1 - \incl{+}{v}{w}), \\
		\lp{v}{u}{w} & = x_{uw} \cdot \Bigl(1 - (1 - \incl{-}{u}{v}) (1 - \incl{+}{v}{w})\Bigr), \\
		\cost{w}{u}{v} & = \incl{+}{u}{w} \incl{+}{v}{w}, \\
		\lp{w}{u}{v} & = (1 - x_{uv})  \Bigl(1 - (1 - \incl{+}{u}{w}) (1 - \incl{+}{v}{w})\Bigr).
	\end{align*}
    Note that since $\Lp{T}$ is a sum of 3 terms, each being at most one, it is clearly upper bounded by three.
    Therefore, since $\gamma \leq 3 \cdot \charge{}$, for $x_{uw}, x_{vw} \in 0.25 \pm \slackBad{}$ and $x_{uv} \in 0.5 \pm \slackBad{}$,
    $$
    \frac{\Cost{T} - \charge{}}{\Lp{T}} = \frac{\Cost{T}}{\Lp{T}} - \frac{\charge{}}{\Lp{T}} \leq \frac{\Cost{T}}{\Lp{T}} - \frac{\charge{}}{3} \leq \frac{\Cost{T}}{\Lp{T}} - \gamma.
    $$
    So in this case, it remains to prove that $\frac{\Cost{T}}{\Lp{T}} \leq \frac{16}{7}$. Also if $x_{uw}, x_{vw}\in 0.25 \pm \slackBad{}$, by \Cref{lem:enoughBudget}, $\Lp{T} \geq 0.2$. Therefore if $x_{uv} \notin 0.5 \pm \slackBad{}$,
    $$
    \frac{\Cost{T} + 9\charge{}}{\Lp{T}} = \frac{\Cost{T}}{\Lp{T}} + 9 \frac{\charge{}}{\Lp{T}} \leq \frac{\Cost{T}}{\Lp{T}} + 45 \cdot \charge{}.
    $$
    So in this case it remains to prove $\frac{\Cost{T}}{\Lp{T}} \leq \frac{16}{7} - 45 \cdot \charge{} - \gamma$. In all other cases, we must prove $\frac{\Cost{T}}{\Lp{T}} \leq \frac{16}{7} - \gamma$, thus in each case we must prove that $\frac{\Cost{T}}{\Lp{T}}$ is upper bounded by some constant.
    Let $\alpha$ be a parameter in $[2, \frac{16}{7}]$. We show that:
	$$
	\sum_{u} \cost{u}{v}{w} \leq \alpha \cdot \sum_{u} \lp{u}{v}{w},
	$$
	where the sum is cyclic over $\{u, v, w\}$. The proof proceeds by a case analysis on the LP values, which we partition into three exhaustive cases: (i) at least one of $x_{uw}$ and $x_{vw}$ is at least 0.5, (ii) $x_{uv} \geq 0.5$, and (iii) all three are less than 0.5.
	\begin{itemize}
		\item For case (i), without loss of generality, assume that $x_{uw} \geq 0.5$. If $x_{uv}$ is also at least $0.5$, then $\Cost{T} = \Lp{T} = \incl{+}{v}{w}$, and therefore the desired inequality follows. If $x_{vw} \geq 0.5$, $\incl{+}{u}{w} = \incl{+}{v}{w} = 0$ simplifies the inequality to
		$$
		2\cdot \incl{-}{u}{v} \leq \alpha \cdot (x_{vw} \cdot \incl{-}{u}{v} + x_{uw} \cdot \incl{-}{u}{v}),
		$$
		which clearly holds, because $x_{vw} + x_{uw} \geq 0.5 + 0.5 = 1$ and $\alpha \geq 2$.
		We therefore proceed assuming $x_{uv}, x_{vw} < 0.5$. By definition, $\incl{+}{u}{w} = 0$, which simplifies the sum of \Cost{} terms to
		$$
			\sum_{u} \cost{u}{v}{w} = \incl{-}{u}{v} + \incl{-}{u}{v} \cdot (1 - \incl{+}{v}{w}) + \incl{+}{v}{w} \cdot (1 - \incl{-}{u}{v}).
		$$
		As $x_{uv} < 0.5$ and $x_{uw} \geq 0.5$, it follows that both $1 - x_{uv}$ and $x_{uw}$ are at least 0.5.
        Therefore we can get the following lower bound on the sum of \Lp{} terms:
		\begin{align*}
			\sum_{u} \lp{u}{v}{w} & \ge (1 - x_{uv}) \cdot \incl{+}{v}{w} + x_{uw} \cdot (\incl{+}{v}{w} + \incl{-}{u}{v} \cdot (1 - \incl{+}{v}{w})) + x_{vw} \cdot \incl{-}{u}{v} \\
			& \geq \frac{1}{2} \incl{+}{v}{w} + \frac{1}{2} (\incl{+}{v}{w} + \incl{-}{u}{v} \cdot (1 - \incl{+}{v}{w})) + x_{vw} \cdot \incl{-}{u}{v}.
		\end{align*} 
		Therefore it suffices to prove that the sum of \Cost{} terms is not greater than this lower bound multiplied by $\alpha$. Rearranging terms, this inequality is equivalent to
		$$
		(\alpha - 1) \cdot  \incl{+}{v}{w} + \alpha x_{vw} \cdot \incl{-}{u}{v} \geq (2 - \frac{\alpha}{2}) \cdot (1 - \incl{+}{v}{w}) \cdot \incl{-}{u}{v}. 
		$$
		Since $\alpha > 1$, it suffices to prove 
		$$
		\alpha x_{vw}  \cdot \incl{-}{u}{v} \geq (2 - \frac{\alpha}{2}) \cdot (1 - \incl{+}{v}{w}) \cdot \incl{-}{u}{v}. 
		$$
		If $\incl{-}{u}{v} = 0$, the inequality follows, otherwise we can cancel this term from both sides and get to the following equivalent inequality
		$$
		\alpha x_{vw} \geq (2 - \frac{\alpha}{2}) \cdot (1 - \incl{+}{v}{w}).
		$$
		By \Cref{lemma:func-ineq}, we know that $1 - \incl{+}{v}{w} \leq \frac{3}{2} x_{vw}$, so it suffices to prove
		$$
		\frac{\alpha}{2 - \frac{\alpha}{2}} \geq \frac{3}{2},
		$$
		and this is equivalent to $\alpha \geq \frac{12}{7}$. Thus in this case, the inequality holds for $\alpha = 2$.
		\item For case (ii), by definition $\incl{-}{u}{v} = 0$, which simplifies the sum of \Cost{} and \Lp{} terms:
		\begin{align*}
			\sum_{u} \cost{u}{v}{w} & = \incl{+}{u}{w} \cdot \incl{+}{v}{w}+  \incl{+}{v}{w} + \incl{+}{u}{w}, \\
			\sum_{u} \lp{u}{v}{w} & = (1 - x_{uv}) \Bigl(1 - (1 - \incl{+}{u}{w}) (1 - \incl{+}{v}{w})\Bigr) + x_{uw} \cdot \incl{+}{v}{w} + x_{vw} \cdot \incl{+}{u}{w}.
		\end{align*}
		We must prove that the sum of \Cost{} terms is not greater than $\alpha$ times the sum of \Lp{} terms, which is equivalent to rearranging the terms as:
		\begin{equation} \tag{$\ddagger$}
			\label{ineq:ppn-0}
			\Bigl(\alpha(1 - x_{uv} + x_{uw}) - 1\Bigr) \incl{+}{v}{w}
			+ \Bigl(\alpha(1 - x_{uv} + x_{vw}) - 1\Bigr) \incl{+}{u}{w}
			\ \ge\
			\Bigl(\alpha(1 - x_{uv}) + 1\Bigr) \incl{+}{u}{w}\incl{+}{v}{w}.
		\end{equation}
        When $x_{uv}$ increases by an amount $\epsilon>0$, the left-hand side decreases by $\alpha(\incl{+}{v}{w}+\incl{+}{u}{w})\epsilon$, while the right-hand side decreases by $\alpha \incl{+}{v}{w} \incl{+}{u}{w}\epsilon < \alpha(\incl{+}{v}{w}+\incl{+}{u}{w})\epsilon$.
        Therefore it suffices to prove the inequality for the largest feasible value $x_{uv} = x_{uw} + x_{vw}$, obtaining
		\begin{equation} \tag{$\star$}
			\label{ineq:ppn}
			\Bigl(\alpha(1 - x_{vw}) - 1\Bigr) \incl{+}{v}{w}
			+ \Bigl(\alpha(1 - x_{uw}) - 1\Bigr) \incl{+}{u}{w} 
			\ \ge\
			\Bigl(\alpha(1 - x_{uw} - x_{vw}) + 1\Bigr) \incl{+}{u}{w}\incl{+}{v}{w}.
		\end{equation}
		To prove Inequality~\ref{ineq:ppn}, we first prove that the function $g_\alpha(x) = (\alpha (1 - x) - 1) \cdot f^+(x)$ is convex.
        \begin{claim}
            $g_\alpha(x)$ is convex on $[0, 0.5)$.
        \end{claim}
        \begin{proof}
            Using \Cref{lemma:der-cont} and the product rule, the derivative of $g$ exists and is:
    		$$
    		g_\alpha'(x) = -\alpha f^+(x) + \bigl(\alpha(1 - x) - 1\bigr) f'^+(x),
    		$$
    		By evaluating this on the sub-domains $[0, 0.25)$ and $[0.25, 0.5)$, we obtain:
    		\begin{align*}
    			g_\alpha'(x)
    			&= \frac{30\alpha x - 19\alpha + 10}{3\sqrt{9 - 20x}} && (\forall x\in[0,0.25)),\\
    			g_\alpha'(x)
    			&= \frac{10}{3}\alpha x - \frac{33}{12}\alpha + \frac{5}{3} && (\forall x\in[0.25,0.5)).
    		\end{align*}
            First, notice that $g_{\alpha}'$ is continuous on $[0.25,0.5)$ as $g'_{\alpha}(0.25^-)=g'_{\alpha}(0.25)=-\frac{23}{12}\alpha + \frac{5}{3}$.
            Furthermore, on the domain $[0, 0.25)$, the numerator of $g'_{\alpha}(x)$ is increasing and the positive denominator is decreasing, and on the domain $[0.25, 0.5)$, the expression is linear and increasing; this implies $g'_{\alpha}$ is increasing.
            Therefore $g_{\alpha}$ is convex on $[0, 0.5)$.
        \end{proof}

       % \nv{Note that by \Cref{lemma:f-product}, we have $\ln(f(x_{uw})) + \ln(f(x_{vw})) \leq 2 \ln(f^+(\frac{x_{uw}+x_{vw}}{2}))$. Therefore,
        %$$
        %f^+(x_{uw})f^+() = 
        %$$
        %}

        We first consider the case that $x_{vw}, x_{uw} \in 0.25 \pm \eta$.
		The convexity of $g(x)$ along with \Cref{lemma:f-product}, imply that setting $ x_{uw} = x_{vw}$ is the tightest case for Inequality~\ref{ineq:ppn}. 
        
        %replacing $(x_{uw}, x_{vw})$ by their average weakens the left-hand side and strengthens the right-hand side.
        Thus, without loss of generality, we can take $x_{uw} = x_{vw} = x\geq 0.25$. The inequality becomes:
		$$
		2\Bigl(\alpha(1 - x) - 1\Bigr) f^+(x) \geq \Bigl(\alpha(1 - 2x) + 1\Bigr) f^+(x)^2,
		$$
		which, after substituting the definition of $f^+(x)$ for $x \in [0.25, 0.5)$, reduces to showing:
		$$
        \frac{2\alpha - 2\alpha x - 2}{\alpha - 2\alpha x + 1} \geq \frac{13}{12} - \frac{5}{3}x.
        $$
    
		The function on the left-hand side is concave for $x\in[0.25,0.5)$, while the right-hand side is linear, therefore it suffices to consider the inequality at endpoints. If $x_{uv} \in 0.5 \pm \slackBad{}$, the endpoints we must consider the inequality at are $x = 0.25$ and $x = 0.5$. For these points, the inequality reduce to $\alpha \geq \frac{16}{7}$ and $\alpha \geq \frac{9}{4}$, and therefore the inequality holds for $\alpha = \frac{16}{7}$. If $x_{uv}\notin 0.5 \pm \slackBad{}$, since $x_{uv} \geq 0.5$, we must have $2x = x_{uv} \geq 0.5 + \slackBad{}$, and therefore $x \geq 0.25 + \frac{\slackBad{}}{2}$, so the endpoints we must consider in this case are $x = 0.25 + \frac{\slackBad{}}{2}$ and $x = 0.5$. At $x = 0.25 + \frac{\slackBad{}}{2}$, expanding gives the following equivalent inequality:
        $$
        \alpha \geq \frac{\frac{8}{3} - \frac{5}{6}\slackBad{}}{\frac{7}{6} + \frac{1}{12} \slackBad{} - \frac{5}{6} \slackBad{}^2} \approx 2.282.
        $$
        Thus, it only remains to verify 
        $$
        \frac{16}{7} - 45 \cdot \charge{} - \gamma \geq \max\left(2.282, \frac{9}{4}\right),
        $$
        which holds.
        
        % The only remaining cases are $x_{uw} \notin 0.25 \pm \slackBad{}$, and $x_{vw} \notin 0.25 \pm \slackBad{}$, without loss of generality, assume $x_{vw} \geq x_{uw}$. Note that if $x_{uw} + x_{vw} \geq 0.5 + \slackBad{}$, the same argument also applies here, so we may assume $x_{uw} + x_{vw} < 0.5 + \slackBad{}$. If $x_{vw} \in 0.25 \pm \slackBad{}$, since $x_{vw} \geq 0.25$, we must have $0.25 \leq x_{vw} < 0.25 + \slackBad{}$, but this along with $0.5 \leq x_{uw} + x_{vw} < 0.5 + \slackBad{}$, imply $0.25 - \slackBad{} \leq x_{uw} \leq 0.25 + \slackBad{}$, which is a contradiction. Thus, we assume $x_{vw} \geq 0.25 + \slackBad{}$, which implies $x_{uw} \leq 0.25$. 
        
       The remaining cases are when $x_{uw} \notin 0.25 \pm \slackBad{}$ or $x_{vw} \notin 0.25 \pm \slackBad{}$.
       If $x_{uw} + x_{vw} \geq 0.5 + \slackBad{}$, we can apply the previous argument as is, where we assumed $x_{vw} = x_{uw}$.
       The only remaining case is when $0.5 \leq x_{uv}= x_{uw} + x_{vw} < 0.5 + \slackBad{}$. 
       For this case, we no longer assume that $x_{uw} = x_{vw}$. Without loss of generality, let $x_{vw} \geq x_{uw}$. Then if $x_{vw} \in 0.25 \pm \slackBad{}$, since $x_{vw} \geq 0.25$, we have $0.25 \le x_{vw} < 0.25 + \slackBad{}$. Combining this with $0.5 \le x_{uw} + x_{vw} < 0.5 + \slackBad{}$, yields $0.25 - \slackBad{} \le x_{uw} \le 0.25 + \slackBad{}$, which contradicts the case assumption.  Hence, we must have $x_{vw} \geq 0.25 + \slackBad{}$, which implies $x_{uw} \le 0.25$.
        
        Denote $x_{uw} + x_{vw}$ and $0.25 + \slackBad{}$ by $s$ and $c$ respectively. We now prove that replacing $(x_{uw}, x_{vw})$ in the Inequality~\ref{ineq:ppn} by $(s -c, c)$ strengthens the inequality. Since $x_{uw} \leq s - c \leq c \leq x_{vw}$ and $x_{uw} + x_{vw} = s$, we can write $c$ and $s-c$ as a convex combination of $x_{uw}$ and $x_{vw}$. Thus, there exists $r\in [0, 1]$ for which
        $$
        c = r\cdot x_{vw} + (1 - r)\cdot x_{uw}, \qquad s - c = (1 - r)\cdot x_{vw} + r\cdot x_{uw}.
        $$
        By \Cref{lemma:f-product}, $\ln (f^+(x))$ is concave, therefore
        \begin{align*}
            \ln(f^+(c)) & = \ln\Bigl(f^+\bigl(r\cdot x_{vw} + (1-r) \cdot x_{uw}\bigr)\Bigr) \geq r\cdot \ln(f^+(x_{vw})) + (1 - r) \cdot \ln(f^+(x_{uw})), \\
            \ln(f^+(s - c)) & = \ln\Bigl(f^+\bigl((1 - r)\cdot x_{vw} + r \cdot x_{uw}\bigr)\Bigr) \geq (1 - r)\cdot \ln(f^+(x_{vw})) + r \cdot \ln(f^+(x_{uw})).
        \end{align*}
        Summing the two inequalities above, we obtain
        $$
        \ln (f^+(c)) + \ln (f^+(s-c)) \geq \ln (f^+(x_{vw})) + \ln (f^+(x_{uw})).
        $$
        Consequently,
        $$
        f^+(c)\cdot f^+(s-c) \geq f^+(x_{vw}) \cdot f^+(x_{uw}).
        $$
        Therefore, replacing $(x_{vw}, x_{uw})$ by $(c, s-c)$ strengthens the right-hand side of Inequality~\ref{ineq:ppn}. Since $g_\alpha(x)$ is convex, we can similarly show that this replacement weakens the left-hand side of Inequality~\ref{ineq:ppn}. In total, this replacement strengthens the inequality and we may assume $x_{vw} = c$ and $x_{uw} = s - c$. The resulting inequality is:
        \begin{equation*} 
			\Bigl(\alpha(1 - c) - 1\Bigr) f^+(c)
			+ \Bigl(\alpha(1 - s + c) - 1\Bigr) f^+(s-c) 
			\geq
			\Bigl(\alpha(1 - s) + 1\Bigr) f^+(c) f^+(s-c).
		\end{equation*}
        Denoting $s - c$ by $x$, substituting $c = 0.25 + \slackBad{}$, and dividing both sides by $f^+(c)f^+(s-c)$ gives
        $$
        T(x)\stackrel{\mathrm{def}}{=}\frac{\alpha(0.75 - \slackBad{}) - 1}{f^+(x)} + \frac{\alpha (1 - x) - 1}{f^+(0.25 + \slackBad{})} - \alpha (0.75 - \slackBad{} - x) - 1 \geq 0.
        $$
        The first term is convex while the other terms are linear, and therefore $T(x)$ is a convex function. 
        Since $0.5 \leq (0.25 + \slackBad{}) + x \leq 0.5 + \slackBad{}$, we have $0.25 - \slackBad{} \leq x \leq 0.25$ and we must prove the inequality on $[0.25 - \slackBad{}, 0.25]$. 
        Since $T(x)$ is convex, it suffices to consider this inequality at the endpoints and also the roots of $T'(x)$ in this interval. First, let us show that $T'(x)$ has no root on this interval.
        \begin{claim}
            $T'(x)$ has no root on $[0.25 - \slackBad{}, 0.25]$.
        \end{claim}
        \begin{proof}
        By expanding, we have
        $$
        T'(x) = \left(\frac{\alpha(0.75-\slackBad{})-1}{f^+(x)}\right)' - \frac{\alpha}{f^+(0.25 + \slackBad{})} + \alpha= \Bigl(\alpha(0.75-\slackBad{})-1\Bigr)\cdot \frac{-f'^+(x)}{f^+(x)^2} - \alpha (\frac{1}{f^+(0.25 + \slackBad{})} - 1).
        $$
        Therefore, since $-f'^+(x)$ is increasing and $f(x)$ is decreasing, by substituting $\slackBad{} = 0.005$, we get
        $$
        \frac{-f'^+(x)}{f(x)^2} \geq \frac{-f'^+(0.25 - \slackBad{})}{f^+(0.25 - \slackBad{})^2} = \frac{10}{9\, f^+(0.25 - \slackBad{})^3} = \frac{10}{9\, f^+(0.245)^3} > 3.61.
        $$
        Consequently, 
        $$
        T'(x) \geq \Bigl(\alpha \cdot 0.745 - 1\Bigr) \cdot 3.61 - \alpha (\frac{1}{f^+(0.255)} - 1) > 2.68 \alpha - 3.61 - \alpha \cdot 0.6 = 2.08 \alpha - 3.61 > 0, 
        $$
        which holds, since $\alpha \geq 2$.
        \end{proof}
        Thus, it suffices to consider the inequality at endpoints $x= 0.25 - \slackBad{}$, and $x = 0.25$. At $x = 0.25$, by rearranging the terms, the inequality simplifies to
        $$
        \alpha \geq \frac{\frac{5}{2} + \frac{1}{f^+(0.25+\slackBad{})}}{\frac{5}{8} - \frac{1}{2} \slackBad{} + \frac{3}{4\,f^+(0.25+\slackBad{})}} = \frac{\frac{5}{2} + \frac{1}{f^+(0.255)}}{\frac{5}{8} - \frac{0.005}{2}  + \frac{3}{4\,f^+(0.255)}} \approx 2.282.
        $$
        At $x = 0.25 - \slackBad{}$, by rearranging the terms, the inequality simplifies to
        $$
        \alpha \geq \frac{\frac{1}{f^+(0.25 + \slackBad{})} + \frac{1}{f^+(0.25 - \slackBad{})} + 1}{\frac{0.75 - \slackBad{}}{f^+(0.25 - \slackBad{})} + \frac{0.75 + \slackBad{}}{f^+(0.25 + \slackBad{})} - \frac{1}{2}} = \frac{\frac{1}{f^+(0.255)} + \frac{1}{f^+(0.245)} + 1}{\frac{0.74}{f^+(0.245)} + \frac{0.76}{f^+(0.265)} - \frac{1}{2}} \approx 2.2852.
        $$
        Hence, it remains to verify
        $$
        \frac{16}{7} - \gamma \geq \max \left(2.282, 2.2852\right) = 2.2852,
        $$
        which clearly holds.
		\item For final case (iii), assume $x_{uw}$ and $x_{vw}$ are  fixed. Therefore, by rearranging, the required inequality becomes
		$$
		l(x_{uv}) \stackrel{\mathrm{def}}{=} A x_{uv} + B \incl{-}{u}{v} + C \geq 0,
		$$
		where $A$, $B$ and $C$ are defined as follows:
		\begin{align*}
			A & = \alpha \cdot \incl{+}{u}{w} \cdot \incl{+}{v}{w} -\alpha \cdot \incl{+}{u}{w} - \alpha \cdot \incl{+}{v}{w}, \\
			B & = \alpha \cdot (x_{uw} + x_{vw}) + 2 \incl{+}{v}{w} + 2 \incl{+}{u}{w} - \alpha x_{uw} \incl{+}{v}{w} - \alpha x_{vw} \incl{+}{u}{w} - 2, \\
			C & = \Bigl(\alpha (1 + x_{vw}) - 1\Bigr) \incl{+}{u}{w} + \Bigl(\alpha (1 + x_{uw}) - 1\Bigr) \incl{+}{v}{w} - (\alpha + 1) \cdot \incl{+}{u}{w} \cdot \incl{+}{v}{w}.
		\end{align*}
		The derivative of the function $l(x_{uv})$ is:
		\begin{align*}
			l'(x_{uv}) = A + B\cdot \frac{df^-}{dx}(x_{uv}) = A - B.
		\end{align*}
		Since $ \incl{+}{v}{w} \incl{+}{u}{w} < \incl{+}{v}{w}+\incl{+}{u}{w}$, the expression $A$ is non-positive. Hence, to show that $l'(x_{uv}) \leq 0$, it is sufficient to prove that $B \geq 0$. This can be shown by combining the following two inequalities, which hold by \Cref{lemma:func-ineq} and $x_{uw}, x_{vw} < 0.5$. For any $p \in [0, 1]$:
		\begin{align*}
			\Bigl(\frac{3}{2}x_{uw}+\incl{+}{u}{w}\Bigr)p + \Bigl(\frac{3}{2}x_{vw}+\incl{+}{v}{w}\Bigr)p & \geq 2p,\\
			\Bigl(\alpha - \frac{3}{2}p\Bigr)(x_{uw}+x_{vw}) + \frac{3}{4}p(\incl{+}{u}{w}+\incl{+}{v}{w}) & \geq \alpha x_{uw}\incl{+}{v}{w} + \alpha x_{vw}\incl{+}{u}{w}.
		\end{align*}

		Combining these gives the bound:
		\begin{equation}
			\tag{$\dagger$}
			\label{ineq:A}
			B \geq (2 - \frac{7}{4} p) (\incl{+}{u}{w} + \incl{+}{v}{w}) - (2 - 2p).
		\end{equation}
		Setting $p = 1$ yields
		$$
		B \geq \frac{1}{4} (\incl{+}{u}{w} + \incl{+}{v}{w}) \geq 0.
		$$
		Therefore, $B \geq 0$ and consequently, $l(x_{uv})$ is a decreasing function of $x_{uv}$ and we may take $x_{uv} = \min(x_{uw} + x_{vw}, 0.5^-)$. If $x_{uw} + x_{vw} \geq 0.5$, since $B$ and $\incl{-}{u}{v}$ are both non-negative, it suffices to prove  
		$$
		A x_{uv} + C \geq 0.
		$$
		By rearranging, this transforms to
		$$
		\Bigl(\alpha (1 + x_{vw} - x_{uv}) - 1\Bigr) \incl{+}{u}{w} + \Bigl(\alpha (1 + x_{uw} - x_{uw}) - 1\Bigr) \incl{+}{v}{w} \geq \Bigl(\alpha (1 - x_{uv}) + 1\Bigr) \incl{+}{u}{w} \incl{+}{v}{w}.
		$$
		This inequality 
        is identical to Inequality~\ref{ineq:ppn-0}, which has been proved in Case (ii).
        
        Now, let us consider the case that $x_{uw} + x_{vw} < 0.5$, and it follows from the above discussion that $x_{uv} = x_{uw} + x_{vw}$. Now set $p = \frac{4}{7}$ in Inequality~\ref{ineq:A} to obtain
		$$
		B \geq (\incl{+}{v}{w} + \incl{+}{u}{w}) - \frac{6}{7}.
		$$
		With this, it now suffices to show
		$$
		A x_{uv} + \Bigl((\incl{+}{v}{w} + \incl{+}{u}{w}) - \frac{6}{7}\Bigr) \cdot \incl{-}{u}{v} + C \geq 0.
		$$
		Let $c' = x_{uv} = x_{uw} + x_{vw}$ be fixed and $c = f^-(c')$. As in the earlier convexity argument, one can similarly show that 
        the function $g_\alpha(x) = \Bigl(\alpha (1 - x) - 1 + c\Bigr) f^+(x)$ is convex. By rearranging the above inequality, we can obtain the equivalent inequality
		$$
		g_\alpha(x_{uw}) + g_\alpha(x_{vw}) \geq \Bigl(\alpha (1 - c') + 1\Bigr) \incl{+}{u}{w} \incl{+}{v}{w} + \frac{6}{7} c.
		$$
		Therefore, by \Cref{lemma:f-product}, 
        the setting $x_{uw} = x_{vw} = x < 0.25$ is the tightest case for the inequality above. 
     Then, $c' = 2x$ and $c = f^-(2x) = 1 - 2x$. So, we must show
		$$
		2\Bigl(\alpha(1-x) - 1 + (1-2x)\Bigr) f^+(x)
		\geq
		\Bigl(\alpha(1-2x) + 1\Bigr) f^+(x)^2 + \frac{6}{7}(1 - 2x).
		$$
		Let us define the function $\Phi(x)$ as the left-hand side minus the right-hand side:
		$$
		\Phi(x) \stackrel{\mathrm{def}}{=} 2\Bigl(\alpha(1-x) - 1 + (1-2x)\Bigr) f^+(x)
		-
		\Bigl(\alpha(1-2x) + 1\Bigr) f^+(x)^2 - \frac{6}{7}(1 - 2x).
		$$
		We analyze $\Phi(x)$ for $x\in [0, 0.25)$ by performing a change of variables. Let
		$$
		u:=f^+(x)=\sqrt{\,1-\frac{20}{9}x\,}\in\Bigl(\frac{2}{3},1\Bigr].
		$$
        Then, $x=\frac{9}{20}(1-u^2),
		1-2x=\frac{1+9u^2}{10}.$
		With these substitutions and by letting $\alpha = 2$, we compute the necessary coefficients:
		\begin{align*}
			& \alpha(1-x) - 1 + (1-2x) = \frac{9u^2 + 1}{5}, \\
			& \alpha(1-2x) + 1 = \frac{9u^2 + 6}{5}.
		\end{align*}
		Substituting, we get
		$$
		35\,\Phi(x)=Q(u):=-63u^4+126u^3-69u^2+14u-3.
		$$
		Since $u \in (\frac{2}{3}, 1]$, we have $(u-\frac{2}{3})(u-1)\leq 0$. Expanding gives $u^2 + \frac{2}{3} \leq \frac{5}{3} u$, which is equivalent to
        $$
        u - \frac{3}{5} u^2 - \frac{2}{5} \geq 0.
        $$
        By rearranging, we get
        $$
        Q(u)= 126u^2 \cdot (u - \frac{3}{5} u^2 - \frac{2}{5}) + 14 \cdot (u - \frac{3}{5} u^2 - \frac{2}{5}) + (\frac{63}{5} u^4 - \frac{51}{5} u^2 + \frac{13}{5}).
        $$
        The first two terms above are clearly nonnegative, therefore for proving $Q(u) \geq 0$, it suffices to prove the last term is also nonnegative. Denoting $u^2$ by $r$ leaves us with
        $$
        T(r) = 63r^2 - 51r + 13 \geq 0.
        $$
        $T(r)$ is a quadratic function and therefore attains its minimum at $\frac{51}{126}$. Since $T(\frac{51}{126}) = 13 - \frac{51^2}{2 \cdot 126} > 0$, $T$ is positive on its entire domain.
        Hence $Q(u) \geq 0$ on the interval $(\frac{2}{3}, 1]$, consequently $\Phi(x) \geq 0$ on the interval $[0, 0.25)$ for $\alpha \geq 2$. In this case we have $\frac{\Cost{T}}{\Lp{T}} \leq 2$, so it remains to verify
        $$
        \frac{16}{7} - 45\cdot \charge{} - \gamma \geq 2,
        $$
        which holds. \qedhere
	\end{itemize}
\end{proof}
\subsubsection{\texorpdfstring{$+++$ Triangles}{+++ Triangles}}
\begin{lemma}\label{lemma:+++_triangle}
	For a $+++$ triangle $T=uvw$,
	$$
	\frac{\Cost{T}}{\Lp{T}} \leq \frac{16}{7}.
	$$

    Furthermore:
    \begin{itemize}
        \item if $x_{uw},x_{vw}\in 0.25\pm \slackBad{}$ and $x_{uv} \in 0.5\pm \slackBad{}$, then $\frac{\Cost{T}-\charge{}}{\Lp{T}} \le \frac{16}{7} - \gamma$,
        \item if $x_{uw},x_{vw}\in 0.25\pm \slackBad{}$ and $x_{uv} \not\in 0.5\pm \slackBad{}$, then $\frac{\Cost{T} + 9\charge{}}{\Lp{T}} \le \frac{16}{7} - \gamma$,
        \item in all other cases $\frac{\Cost{T}}{\Lp{T}} \le \frac{16}{7} - \gamma$.
    \end{itemize}
\end{lemma}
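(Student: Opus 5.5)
We follow the template of \Cref{lemma:-++_triangle}. When $u$ is the pivot, $v$ joins with probability $\incl{+}{u}{v}$ and $w$ with probability $\incl{+}{u}{w}$. Hence
\[\cost{u}{v}{w}=\incl{+}{u}{v}(1-\incl{+}{u}{w})+\incl{+}{u}{w}(1-\incl{+}{u}{v}),\]
\[\lp{u}{v}{w}=x_{vw}\bigl(1-(1-\incl{+}{u}{v})(1-\incl{+}{u}{w})\bigr),\]
and the other two pivots are handled cyclically. The slack terms are treated exactly as for $-++$ triangles. Since $\Lp{T}\le 3$ and $\gamma\le \charge{}/3$, the first bullet reduces to proving $\Cost{T}/\Lp{T}\le \frac{16}{7}$. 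In the second bullet, \Cref{lem:enoughBudget} gives $\Lp{T}\ge 0.2$, so that bullet reduces to proving $\Cost{T}/\Lp{T}\le \frac{16}{7}-45\charge{}-\gamma$. The third bullet requires $\frac{16}{7}-\gamma$ directly. In every case we prove $\sum\cost{}{}{}\le\alpha\sum\lp{}{}{}$ for a suitable constant $\alpha\in[2,\frac{16}{7}]$.

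We split by how many of the three LP values are at least $0.5$; write $f_{ab}=\incl{+}{a}{b}$.
\begin{enumerate}
\item[(i)] If at least two values are $\ge 0.5$, say $x_{uv}$ and $x_{uw}$, only pivots $v,w$ can add anything, and only through $f_{vw}$. Then $\Cost{T}=2f_{vw}$ and $\Lp{T}\ge (x_{uw}+x_{uv})f_{vw}\ge f_{vw}$, so the ratio is at most $2$.
\item[(ii)] If exactly $x_{uv}\ge 0.5$, then $f_{uv}=0$ and
\[\Cost{T}=f_{uw}+f_{vw}+f_{uw}f_{vw}\ \text{(pivots $u$, $v$, $w$ respectively)},\]
\[\Lp{T}=x_{vw}f_{uw}+x_{uw}f_{vw}+x_{uv}\bigl(1-(1-f_{uw})(1-f_{vw})\bigr).\]
This is \Cref{ineq:ppn-0} with $1-x_{uv}$ replaced by $x_{uv}$ and the coefficient of the product term changed. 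The inequality becomes
\[(\alpha(x_{uv}+x_{uw})-1)f_{vw}+(\alpha(x_{uv}+x_{vw})-1)f_{uw}\ge(\alpha x_{uv}+1)f_{uw}f_{vw}.\]
Decreasing $x_{uv}$ lowers the left side faster than the right, so the extremal case is $x_{uv}=0.5$ (note $x_{uv}\ge0.5$). For fixed $x_{uv}$, the functions $(\alpha(c+x)-1)f^+(x)$ should be convex on $[0,0.5)$, by the same derivative computation as the Claim in \Cref{lemma:-++_triangle}. Together with log-concavity of $f^+$ (\Cref{lemma:f-product}), this lets us symmetrize to $x_{uw}=x_{vw}=x$. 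Triangle inequality forces $x\ge 0.25$, so $f^+$ is linear and the inequality is one-variable. We expect $x=0.25$ to be tight, giving exactly $\alpha=\frac{16}{7}$, matching the barrier instance. For the slack bullets we then redo the endpoint and off-symmetric checks exactly as in the $-++$ proof. This uses $x_{uv}\notin 0.5\pm\slackBad{}$ to push $x$ away from $0.25$, and the interval argument with $c=0.25+\slackBad{}$ together with a convex $T(x)$ when the pair is not near $0.25$. Each such case should give a constant near $2.28<\frac{16}{7}-45\charge{}-\gamma$.
\item[(iii)] If all three values are $<0.5$, we want to show the ratio is at most $2$, or at least bounded away from $\frac{16}{7}$, which settles the third bullet. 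Here we use $1-f^+(x)\le\frac32 x$ (\Cref{lemma:func-ineq}). Each pivot term $f_{ab}(1-f_{ac})+f_{ac}(1-f_{ab})$ is then bounded by $\frac32$ times the probability that $bc$ is decided, times a multiple of $x_{bc}$, using triangle inequality. The more reliable route is to reduce to the boundary: view the inequality as a function of the largest value $x_{uv}$ with the others fixed, show monotonicity as in case (iii) of the $-++$ proof, and push to $x_{uv}=\min(x_{uw}+x_{vw},0.5^-)$. This either lands in case (ii) or leaves a symmetric two-variable inequality, which we verify via the substitution $u=f^+(x)$ and a polynomial check as for $\Phi$.
\end{enumerate}

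\textbf{Main obstacle.} The hardest step is case (iii). Unlike the $-++$ case, every pivot now involves $f^+$, so the monotonicity-in-$x_{uv}$ argument needs $x_{uv}$ to also enter $\cost{w}{u}{v}$ through $f_{uv}$. We would first try to show that the expression is convex in $x_{uv}$ and check the endpoints. Failing that, we would fall back on a numerical grid bound with Lipschitz control to certify a ratio of at most $2.2$.
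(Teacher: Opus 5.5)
Your pivot formulas, the reduction of the slack bullets, and case (i) match the paper. Cases (ii) and (iii), however, have genuine gaps.

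\textbf{Case (ii).} Your cost formula is wrong. In a $+++$ triangle the edge $uv$ is positive, so with pivot $w$ the cost is the probability that exactly one of $u,v$ joins, namely $f_{uw}+f_{vw}-2f_{uw}f_{vw}$. It is not $f_{uw}f_{vw}$; that is the $-++$ expression, where $uv$ is negative. Hence $\Cost{T}=2f_{uw}+2f_{vw}-2f_{uw}f_{vw}$, which exceeds your expression by $f_{uw}+f_{vw}-3f_{uw}f_{vw}$. At $x_{uw}=x_{vw}=0.45$, $x_{uv}=\frac12$ the true cost is $\frac{10}{9}$ versus your $\frac79$, so the inequality you set out to prove does not imply the lemma. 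Recovering $\frac{16}{7}$ at $x=\frac14$ is just the coincidence $2f+f^2=4f-2f^2$ at $f=\frac23$.

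Worse, once the cost is corrected, at $x_{uv}=\frac12$ the product $f_{uw}f_{vw}$ enters $\alpha\Lp{T}-\Cost{T}$ with the positive coefficient $2-\frac{\alpha}{2}$. Log-concavity of $f^+$ says the product is largest at the symmetric point, so it now works against you. The ``convex separable part plus log-concavity'' symmetrization borrowed from \Cref{lemma:-++_triangle} therefore no longer shows that $x_{uw}=x_{vw}$ is the worst case. The slack subcases you plan to redo verbatim inherit the same problem.

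The paper instead fixes $s=x_{uw}+x_{vw}$ and writes the inequality as $\Phi_{s,\alpha}(x)\ge 0$, where $\Phi_{s,\alpha}(x)=\alpha\bigl(xf^+(s-x)+(s-x)f^+(x)\bigr)-\bigl(2-\frac{\alpha}{2}\bigr)\bigl(f^+(x)+f^+(s-x)-f^+(x)f^+(s-x)\bigr)$. It proves this whole function, product term included, is convex by computing $\Phi_{s,\alpha}''$ in the two regimes $x,s-x\ge\frac14$ and $x<\frac14\le s-x$ (the only ones, since $s\ge\frac12$). The symmetry $\Phi_{s,\alpha}(x)=\Phi_{s,\alpha}(s-x)$ then puts the minimum at $s/2$. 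The same convexity moves $x_{vw}$ to $0.25+\slackBad{}$ in the off-symmetric slack subcase. There the resulting constant clears $\frac{16}{7}-\gamma$ by less than $10^{-4}$, so it cannot be waved through as ``near $2.28$''.

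\textbf{Case (iii).} This case is not proved, and the targets you name are false. Take an equilateral triangle with $x_{uv}=x_{uw}=x_{vw}=x\to 0$. Then $1-f^+(x)\approx\frac{10}{9}x$, so each pivot costs $2f^+(x)(1-f^+(x))\approx\frac{20}{9}x$ while each LP term is $xf^+(x)(2-f^+(x))\approx x$. The ratio tends to $\frac{20}{9}>2.2$. So neither ``at most $2$'' nor a numerical certificate of ``at most $2.2$'' is possible; a grid argument would also need separate treatment near the origin, where cost and LP both vanish.

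The reduction to the boundary is unsupported as well. Now $x_{uv}$ enters nonlinearly through $f_{uv}$ in the cost and LP of pivots $u$ and $v$. For example, the pivot-$u$ contribution to $\alpha\Lp{T}-\Cost{T}$ has derivative $\alpha x_{vw}(1-f_{uw})-(1-2f_{uw})$ in $f_{uv}$, which has indefinite sign. The monotonicity argument of case (iii) of \Cref{lemma:-++_triangle} relied on $f^-$ being linear and has no analogue here. Also, pushing $x_{uv}\to 0.5^-$ does not land in case (ii), because $f^+(0.5^-)=\frac14\neq 0$.

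The paper's route for this case is as follows.
\begin{itemize}
\item It lower-bounds each LP coefficient via $x\ge\frac{9}{20}(1-f^+(x)^2)$ (\Cref{lemma:func-ineq}). This gives a polynomial inequality in $a=1-f^+(x_{vw})$, $b$, $c$, written through $S,P,R$.
\item A smoothing step shows $\Phi(a+t,b-t,c)$ is decreasing for $\beta<\frac{32}{31}$.
\item The triangle inequality together with $\frac{10}{9}x\le 1-f^+(x)\le\frac32x$ bounds each of $a,b,c$ by $\frac{27}{47}S$.
\item Checking the extreme point $(\frac{27}{47}S,\frac{20}{47}S,0)$ with $\beta=\frac{37}{36}$ gives $\alpha=\frac{185}{81}<\frac{16}{7}-45\charge{}-\gamma$.
\end{itemize}
That single bound settles all three bullets in this case at once.
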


\begin{proof}
	We begin by computing the \Cost{} and \Lp{} terms for each of the three pivots.
	\begin{align*}
		\cost{u}{v}{w} & = (1 - \incl{+}{u}{v}) \cdot \incl{+}{u}{w} + \incl{+}{u}{v} \cdot (1 - \incl{+}{u}{w}), \\
		\lp{u}{v}{w} & =  x_{vw} \cdot \Bigl(1 - (1 - \incl{+}{u}{v}) (1 - \incl{+}{u}{w})\Bigr), \\
		\cost{v}{u}{w} & = (1 - \incl{+}{u}{v}) \cdot \incl{+}{v}{w} + \incl{+}{u}{v} \cdot (1 - \incl{+}{v}{w}), \\
		\lp{v}{u}{w} & = x_{uw} \cdot \Bigl(1 - (1 - \incl{+}{u}{v}) (1 - \incl{+}{v}{w})\Bigr), \\
		\cost{w}{u}{v} & = (1 - \incl{+}{u}{w}) \cdot \incl{+}{v}{w} + \incl{+}{u}{w} \cdot (1 - \incl{+}{v}{w}), \\
		\lp{w}{u}{v} & = x_{uv} \cdot \Bigl(1 - (1 - \incl{+}{u}{w}) (1 - \incl{+}{v}{w})\Bigr).
	\end{align*}
Note that since $\Lp{T}$ is sum of 3 terms, each being at most one, it is clearly upper bounded by three.
    Therefore, since $\gamma \leq 3 \cdot \charge{}$, for $x_{uw}, x_{vw} \in 0.25 \pm \slackBad{}$ and $x_{uv} \in 0.5 \pm \slackBad{}$,
    $$
    \frac{\Cost{T} - \charge{}}{\Lp{T}} = \frac{\Cost{T}}{\Lp{T}} - \frac{\charge{}}{\Lp{T}} \leq \frac{\Cost{T}}{\Lp{T}} - \frac{\charge{}}{3} \leq \frac{\Cost{T}}{\Lp{T}} - \gamma.
    $$
    So in this case, it remains to prove that $\frac{\Cost{T}}{\Lp{T}} \leq \frac{16}{7}$. Also if $x_{uw}, x_{vw}\in 0.25 \pm \slackBad{}, x_{uv}\not\in 0.5\pm \slackBad{}$, by \Cref{lem:enoughBudget}:
    $$
    \frac{\Cost{T} + 9\charge{}}{\Lp{T}} = \frac{\Cost{T}}{\Lp{T}} + \frac{9}{0.2}\charge{} \leq \frac{\Cost{T}}{\Lp{T}} + 45 \cdot \charge{}.
    $$
    So in this case it remains to prove $\frac{\Cost{T}}{\Lp{T}} \leq \frac{16}{7} - 45 \cdot \charge{} - \gamma$. In all other cases, we must prove $\frac{\Cost{T}}{\Lp{T}} \leq \frac{16}{7} - \gamma$, thus in each case we must prove that $\frac{\Cost{T}}{\Lp{T}}$ is upper bounded by some constant.
    
Let $\alpha \in [2,\frac{16}{7}]$ be a parameter. We show that:
$$
\sum_{u} \cost{u}{v}{w} \leq \alpha \cdot \sum_{u} \lp{u}{v}{w},
$$
where the sum is cyclic over $uvw$. We now divide into cases: (i) at least two of $x_{uv}$, $x_{uw}$ and $x_{vw}$ are at least 0.5, (ii) exactly one of these values is at least 0.5, and (iii) all three are less than 0.5.
\begin{itemize}
	\item For case (i), without loss of generality, assume $x_{uv}, x_{uw} \geq 0.5$. This implies $f^+(x_{uv}) = f^+(x_{uw})=0$, and hence we can simplify the sum of \Cost{} and \Lp{} terms into
	\begin{align*}
		\sum_{u} \cost{u}{v}{w} & = 2 \incl{+}{v}{w}, \\
		\sum_{u} \lp{u}{v}{w} & = (x_{uv} + x_{uw}) \cdot \incl{+}{v}{w}.
	\end{align*}
	Since $x_{uv}$ and $x_{uw}$ are both at least 0.5, it follows that $\sum_{u} \cost{u}{v}{w} \leq 2 \cdot \sum_{u} \lp{u}{v}{w}$.
	\item For case (ii), without loss of generality, assume $x_{uv} \geq 0.5$, while $x_{uw} \le x_{vw} < 0.5$. The sum of \Cost{} and \Lp{} terms simplifies to
	\begin{align*}
		\sum_{u} \cost{u}{v}{w} & = 2 \incl{+}{u}{w} + 2 \incl{+}{v}{w} - 2 \incl{+}{u}{w} \incl{+}{v}{w}, \\
		\sum_{u} \lp{u}{v}{w} & = x_{uv} \cdot \Bigl( \incl{+}{u}{w} + \incl{+}{v}{w} - \incl{+}{u}{w} \incl{+}{v}{w} \Bigr) + x_{uw} \cdot \incl{+}{v}{w} + x_{vw} \cdot \incl{+}{u}{w}.
	\end{align*}
	Since the sum of \Cost{} terms is independent of $x_{uv}$, while the sum of \Lp{} terms is increasing in $x_{uv}$, it suffices to prove the claim for the minimum possible value of $x_{uv}$, that is $\max\set{0.5,|x_{uw}-x_{vw}|}$ ($x_{uv} \ge 0.5$ by assumption, and $x_{uv} \ge |x_{uw}-x_{vw}|$ by triangle inequality).
    As both $x_{uw},x_{vw} < 0.5$, we have $|x_{uw}-x_{vw}|<0.5$, and thus it suffices to prove the claim for $x_{uv}=0.5$. By rearranging, it remains to prove
	$$
	\alpha(x_{uw} \cdot \incl{+}{v}{w} + x_{vw} \cdot \incl{+}{u}{w}) \geq (2 - \frac{\alpha}{2}) \Bigl(\incl{+}{u}{w} + \incl{+}{v}{w} - \incl{+}{u}{w} \incl{+}{v}{w}\Bigr).
	$$
	Let $s = x_{uw} + x_{vw}, f=f^+$.
    Notice that $s\in [0.5,1]$ as $0.5=x_{uv}\le x_{uw} + x_{vw} < 0.5 + 0.5 = 1$.
    Define
	$$
	\Phi_{s,\alpha}(x) = \alpha\Bigl(x f(s-x) + (s-x)f(x)\Bigr)
	- (2 - \frac{\alpha}{2})\Bigl(f(x)+f(s-x) - f(x)f(s-x)\Bigr).
	$$
    We now show that $\Phi_{s,\alpha}(x)$ is convex.
    As $\Phi_{s,\alpha}(x) = \Phi_{s,\alpha}(s-x)$, the convexity proves that the minimum of $\Phi_{s,\alpha}$ is attained at $x=s/2$, which means that it suffices to show that $\Phi_{s,\alpha}(s/2) \ge 0$.

    \begin{claim}
    $\Phi_{s,\alpha}(x)$ is convex in $[x_{uw}, s]$.
    \end{claim}
    \begin{proof}
    By \Cref{lemma:der-cont}, $f$ is continuous and has a continuous first derivative on $[0,0.5)$.
    Hence $\Phi_{s,\alpha}$ is also continuous and has a continuous first derivative on $(s-\frac12,\frac12)$, that is the interval where both $x$ and $s-x$ are in $[0,0.5)$.
	\begin{align*}
		\Phi_{s,\alpha}'(x)
		&= \alpha \Bigl(f(s-x) - x f'(s-x) - f(x) + (s-x)f'(x)\Bigr) \\
		&\quad - (2 - \frac{\alpha}{2})\Bigl(f'(x) - f'(s-x) - f'(x)f(s-x) + f(x)f'(s-x)\Bigr),
	\end{align*}
	and where $f''$ exists,
	\begin{align*}
		\Phi_{s,\alpha}''(x)
		&= \alpha\Bigl(-2 f'(s-x) + x f''(s-x) - 2 f'(x) + (s-x)f''(x)\Bigr) \\
		&\quad - (2 - \frac{\alpha}{2})\Bigl(f''(x)+f''(s-x) - f''(x)f(s-x) - f(x)f''(s-x) + 2 f'(x)f'(s-x)\Bigr).
	\end{align*}
    Note that by definition of $f^+$, we have that $f''$ exists for every $x\ne 0.25$; but if $0.25 \in [x_{uw},s]$, then $s - 0.5 < s - x_{vw} = x_{uw} \le 0.25 < 0.5$, therefore $0.25$ is included in the interval where $\Phi_{s,\alpha}$ has a continuous first derivative.
    We conclude that to prove convexity of $\Phi_{s,\alpha}$, it suffices to prove $\Phi_{s,\alpha}''>0$ where it is defined.
	
	When both $x,s-x\ge \frac14$ we have $f(x) = f^+(x) = \frac{13}{12}-\frac{5}{3}x$, and thus, $f''\equiv 0$ and $f'\equiv -\frac{5}{3}$, hence
	$$
	\Phi_{s,\alpha}''(x)=\alpha\cdot 4\cdot\frac{5}{3} - (2 - \frac{\alpha}{2})\cdot2\cdot\Bigl(\frac{5}{3}\Bigr)^2 > 0,
	$$
    because $\alpha \ge 2$.
        
	When $x<\frac14\le s-x$, writing $u:=f(x)\in(\frac{2}{3},1]$ and $L:=f(s-x)$, using
	$$
	f'(x)=-\frac{10}{9}\frac{1}{u},\quad f''(x)=-\frac{100}{81}\frac{1}{u^3},\quad
	f'(s-x)=-\frac{5}{3},\quad f''(s-x)=0,
	$$
	one obtains
	$$
		\Phi_{s,\alpha}''(x)
		= \alpha\Bigl(\frac{10}{3} + \frac{20}{9} \frac{1}{u} - (s - x) \frac{100}{81} \frac{1}{u^3}\Bigr) 
        - 
        (2 - \frac{\alpha}{2})\Bigl(\frac{100}{81} \frac{L}{u^3} -\frac{100}{81} \frac{1}{u^3} + \frac{100}{27} \frac{1}{u}\Bigr).
	$$
	Since $L=\frac{13}{12} - \frac{5}{3}(s-x)$ with rearranging and multiplying by $243$,
	$$
	243\,\Phi''_{s,\alpha} (x) = 810\alpha + \frac{990 \alpha - 1800}{u} + \frac{1000(s-x)-550\alpha (s-x) - 50 + 12.5 \alpha}{u^3}.
	$$
	Since the  coefficient of $\alpha$ is positive, it suffices to prove convexity for $\alpha = 2$, where $243 \,\Phi''_{s,\alpha}$ simplifies to
	$$
	1620 + \frac{180}{u} - \frac{100 (s - x) + 25}{u^3}\geq 1620 - \frac{100(s-x)+25}{u^3} \geq 1620 - (100 + 25) \cdot \left(\frac{3}{2}\right)^3 > 0.
	$$
	Thus $\Phi_{s,\alpha}''(x)>0$ when $x<\frac{1}{4}\leq s - x$. Therefore $\Phi_{s,\alpha}$ is convex.
    \end{proof}
    
	We conclude that $\Phi_{s,\alpha}(x)$ attains its minimum at $x = \frac{s}{2}$. Thus, it suffices to show that $\Phi_{s,\alpha}(s/2) \ge 0$ for $s\in[0.5,1]$, or equivalently that for $x\in[\frac{1}{4},\frac{1}{2})$,
	$$
	2\alpha x f^+(x) \;\ge\; \Bigl(2 - \frac{\alpha}{2}\Bigr)\bigl(2f^+(x) - f^+(x)^2\bigr),
	$$
	which for $\alpha = \frac{16}{7}$ is equivalent to
	$$
	16x \geq 3\Bigl(2 - \bigl(\frac{13}{12} - \frac{5}{3}x\bigr)\Bigr),
	$$
	i.e., $x\ge \frac{1}{4}$, which holds.
    
    Furthermore, for $x_{uv}, x_{uw} \in 0.25 \pm \slackBad{}$ and $x_{vw} \notin 0.5 \pm \slackBad{}$, since $x_{uv} \geq 0.5$, we have $x_{uv} > 0.5 + \slackBad{}$, which means $s = x_{uw} + x_{vw} > 0.5 + \slackBad{}$. Thus, in this case $\Phi_{s,\alpha}$ attains its minimum at $x = \frac{s}{2} > 0.25 + \frac{\slackBad{}}{2}$. As it suffices to prove
    $$
    \frac{2\alpha}{2 - \frac{\alpha}{2}} \geq \frac{2 - f^+(x)}{x},
    $$
    hence it remains to show
    $$
    \frac{2\alpha}{2 - \frac{\alpha}{2}} \geq \frac{11}{12 (0.25 + \frac{\slackBad{}}{2})} + \frac{5}{3}.
    $$
    By rearranging, this transforms to
    $$
    \alpha \geq \frac{\frac{22}{3+12\slackBad{}} + \frac{10}{3}}{\frac{17}{6} + \frac{11}{6 + 24\slackBad{}}} \approx 2.273.
    $$
    Therefore, it suffices to verify
    $$
    \frac{16}{7} - 45 \cdot \charge{} - \gamma \geq 2.273,
    $$
    which holds.

    The only remaining cases are when one of $x_{uw},x_{vw} \notin 0.25 \pm \slackBad{}$. Without loss of generality, assume $x_{vw} \geq x_{uw}$. Note that if $x_{uw} + x_{vw} > 0.5 + \slackBad{}$, the same argument as before applies, so we may assume $x_{uw} + x_{vw} 
    \le 0.5 + \slackBad{}$. If $x_{vw} \in 0.25 \pm \slackBad{}$, since $x_{vw} \geq 0.25$, we must have $0.25 \leq x_{vw} \le 0.25 + \slackBad{}$, but this along with $0.5 \leq x_{uw} + x_{vw} \le 0.5 + \slackBad{}$, imply $0.25 - \slackBad{} \leq x_{uw} \leq 0.25 + \slackBad{}$, which is a contradiction. Thus, we assume $x_{vw} > 0.25 + \slackBad{}$, which implies $x_{uw} \leq 0.25$. 

    Let $c = 0.25 + \slackBad{}$. Since $x_{vw} \geq 0.25 + \slackBad{} \geq s - 0.25 - \slackBad{} \geq x_{uw}$, there exists $r\in [0, 1]$ such that
    $$
    c = r \cdot x_{vw} + (1 - r) \cdot x_{uw}, \qquad s - c = (1 - r) \cdot x_{vw} + r \cdot x_{uw}.
    $$
    Since $\Phi_{s, \alpha}$ is convex,
    \begin{align*}
    \Phi_{s,\alpha} (c) & = \Phi_{s,\alpha} (r \cdot x_{vw} + (1 - r) \cdot x_{uw}) \leq r \cdot \Phi_{s,\alpha} (x_{vw}) + (1 - r) \cdot \Phi_{s,\alpha} (x_{uw}), \\
    \Phi_{s,\alpha} (s - c) & = \Phi_{s,\alpha} ((1 - r) \cdot x_{vw} + r \cdot x_{uw}) \leq (1 - r) \cdot \Phi_{s,\alpha} (x_{vw}) + r \cdot \Phi_{s,\alpha} (x_{uw}).
    \end{align*}
    Summing the two inequalities above, we obtain
    $$
    \Phi_{s,\alpha} (c) + \Phi_{s,\alpha} (s-c) \leq \Phi_{s,\alpha} (x_{uw}) + \Phi_{s,\alpha} (x_{vw}),
    $$
    as $\Phi_{s,\alpha} (x) = \Phi_{s,\alpha} (s - x)$, this is equivalent to 
    $$
    \Phi_{s,\alpha} (c) \leq \Phi_{s,\alpha} (x_{vw}).
    $$
    We can therefore without loss of generality, assume that $x_{vw} = c$. Denoting $s - c$ by $y$ and substituting $c$, it suffices to prove for $y\in [0.25 - \slackBad{}, 0.25]$,
    $$
    T(y) = \alpha\Bigl( (0.25 + \slackBad{}) f(y) + y f(0.25 + \slackBad{})\Bigr) - (2 - \frac{\alpha}{2}) \Bigl(f(0.25 + \slackBad{}) + f(y) - f(0.25 + \slackBad{}) f(y)\Bigr) \geq 0.
    $$
    \begin{claim}
        $T(y)$ is an increasing fucntion on the domain $[0.25 - \slackBad{}, 0.25]$.
    \end{claim}
    \begin{proof}
    It suffices to prove $T'(y) \geq 0$ on this interval. We have
    $$
    T'(y) = f'(y) \Bigl(\alpha(0.25 + \slackBad{}) - (2 - \frac{\alpha}{2}) (1 - f(0.25 + \slackBad{}))\Bigr) + \alpha f(0.25 + \slackBad{}).
    $$

    Since $f'$ is a negative decreasing function and $\alpha f(0.25 + \slackBad{})$ is positive, if the term $\Bigl(\alpha(0.25 + \slackBad{}) - (2 - \frac{\alpha}{2}) (1 - f(0.25 + \slackBad{}))\Bigr)$ is negative, $T'(y) \geq 0$ immediately follows. Otherwise it suffices to consider $y = 0.25$. For this choice, by substituting $\slackBad{} = 0.005$, since $0.65 < f^+(0.255) < 0.66$
    $$
    T'(0.25) \geq \frac{-5}{3} \Bigl(0.255\alpha - (2 - \frac{\alpha}{2}) (1 - 0.66)\Bigr) + \alpha \cdot 0.65.
    $$
    Thus,
    $$
    T'(0.25) \geq 1.13 - 0.059\alpha > 0,
    $$
    which holds for $\alpha < 19.1$.
    \end{proof}
    So it suffices to consider $y = 0.25 - \slackBad{}$. For this choice, since $\slackBad{} = 0.005$, we have
    $$
    T(y) = \alpha \Bigl(0.255\cdot f(0.245) + 0.245\cdot f(0.255)\Bigr) - (2 - \frac{\alpha}{2})\Bigl(f(0.245) + f(0.255) - f(0.245) f(0.255)\Bigr).$$
    Thus,
    $$
    T(y)\ge 0.3334 \cdot \alpha - 0.889 \cdot (2 - \frac{\alpha}{2}) = 0.7779 \cdot \alpha - 1.778.
    $$
    $T(y)\ge 0$ holds for $\alpha \in [2.28565, \frac{16}{7}]$. Since $\frac{16}{7} - \gamma \geq 2.28565$, the desired result is proved.
	\item For case (iii), the inequality to prove is
	$$
	\alpha \sum_{u} x_{vw}\Bigl(f^+(uv)+f^+(uw) - f^+(uv)f^+(uw)\Bigr)
	\geq
	2\sum_{u} f^+(x_{vw}) - 2\sum_{u} f^+(uv)f^+(uw).
	$$
    where the sums are cyclic over $uvw$.~\Cref{lemma:func-ineq} implies $x \geq \frac{9}{20} (1 - f^+(x)^2)$, therefore it suffices to show
	$$
	\alpha \sum_u \Bigl(\frac{9}{20}(1 - f^+(x_{vw})^2)\Bigr)\Bigl(f^+(uv)+f^+(uw) - f^+(uv)f^+(uw)\Bigr)
	\geq
	2\sum_{u} f^+(x_{vw}) - 2\sum_{u} f^+(uv)f^+(uw).
	$$
	Let $A=f^+(x_{vw})$, $B=f^+(x_{uv})$, $C=f^+(uw)$. For the remaining of this case, sums are cyclic over $ABC$. Expanding, and setting $\beta = \frac{9}{20}\alpha$ we get
	$$
	\beta\Bigl(2\sum A - \sum AB + ABC(A+B+C) - \sum AB(A+B)\Bigr)
	\geq
	2\sum A - 2\sum AB.
	$$
	Let $a=1-A$, $b=1-B$, $c=1-C$, and define $S=a+b+c$, $P=ab+ac+bc$, $R=abc$. Then
    
	\begin{align*}
		\sum A &= 3 - S, \\
		\sum AB &= 3 - 2S + P, \\
		ABC &= 1 - S + P - R, \\
		\sum AB(A+B) &= (\sum AB)(\sum A) - 3ABC = 6 - 6S + 2S^2 - SP + 3R.
	\end{align*}

	Substituting, the inequality becomes
	$$
	(-2+2\beta)S - \beta S^2 + (2+2\beta)P + \beta(S - 6)R \geq 0.
	$$
	Since $A,B,C\in (\frac{1}{4},1]$ (by definition of $f^+$), we have $a,b,c\in[0,\frac{3}{4})$ and thus $S-6<0$. By \Cref{lemma:func-ineq}, $\frac{10}{9}x_{vw} \leq a \leq \frac{3}{2}x_{vw}$, and similarly for $b,c$. Hence
	$$
	\frac{9}{10}a + \frac{9}{10}b \geq x_{vw}+x_{uv} \geq x_{vw} \geq \frac{2}{3}c \quad\Rightarrow\quad c \leq \frac{27}{47}S,
	$$
	and the same bound holds for $a$ and $b$.
	
	Let 
	$$
	\Phi(a,b,c)=(-2+2\beta)S - \beta S^2 + (2+2\beta)P + \beta(S - 6)R
	$$
	and without loss of generality, assume $a\geq b$. Consider $p(t)=\Phi(a+t,b-t,c)$. For any $t > 0$,
	$$
	p'(t) = -(2+2\beta)(a-b+2t) + \beta c(6 - S)((a-b)+2t)
	\leq
	(a-b+2t)\Bigl(\beta c(6 - S) - (2+2\beta) \Bigr).
	$$
	Since $c\leq \frac{3}{4}$ we have $c(6-S)\le c(6-c)\leq \frac{3}{4}\cdot\frac{21}{4}$, hence
	$$
	p'(t) \leq (a-b+2t)\Bigl(\frac{63}{16}\beta - (2+2\beta)\Bigr).
	$$
    which is negative for $\beta < \frac{32}{31}$.

	Thus, for such values of $\beta$, it suffices to prove $\Phi(a,b,c)\ge 0$ under the constraints:
    \begin{itemize}
        \item $a,b,c\in [0,\max\set{0.75,\frac{27}{47}S}]$
        \item $a+b+c = S$
        \item among any two of $a,b,c$, either one equals $\frac{27}{47}S$ or one is $0$.
    \end{itemize} 
    We note that this is only a subset of the actual conditions $a,b,c$ must satisfy; for example, we allow for values of $a,b,c$ that would violate the triangle inequality on $x_{uv},x_{uw},x_{vw}$.
    Therefore, proving $\Phi(a,b,c)\ge 0$ under these conditions is sufficient.
    
    If $S>0$, we cannot have more than one of $a,b,c$ equal to $0$ (because then the other one would be equal to $S$ and therefore larger than $\frac{27}{47}S$), and we cannot have two of $a,b,c$ equal to $\frac{27}{47}S$ because then the sum is $\frac{54}{47}S > S$.
    Therefore, it suffices to consider $(a,b,c)=(\frac{27}{47}S,\frac{20}{47}S,0)$; notice that this covers the case $S=0$. For this choice, we get
	$$
	(-2+2\beta)S + \Bigl((2+2\beta)\cdot \frac{27\cdot 20}{47^2} - \beta\Bigr) S^2 \;\ge\; 0.
	$$
	For $\beta=\frac{37}{36} < \frac{32}{31}$ (therefore $\alpha = \frac{20}{9} \cdot \frac{37}{36} = \frac{185}{81} < \frac{16}{7})$ this is true when $S\in [0,1.527134]$. As $0\le S = a+b+c \le 0.75 + 0.75 + 0 = 1.5$, the result follows.

    To bound $\frac{\Cost{T}+9\charge{}}{\Lp{T}}$ when $x_{uv},x_{uw}\in 0.25\pm \slackBad{}$, we have $\frac{\Cost{T}+9\charge{}}{\Lp{T}} \le \frac{185}{81} + \frac{9\charge}{\Lp{T}}$.

    % We have that $\lp{v}{u}{w} = x_{uw} \cdot \Bigl(1 - (1 - \incl{+}{u}{v}) (1 - \incl{+}{v}{w})\Bigr) \ge x_{uw} \incl{+}{v}{w})\Bigr) \ge (0.25-\slackBad{}) f^+(0.25+\slackBad{})$, by the monotonicity of $f^+$.
    % This is at least $0.1$ for sufficiently small $\slackBad{}$; similarly for $\lp{w}{u}{v}$.
    By \Cref{lem:enoughBudget}, $\Lp{T} \geq 0.2$. Therefore,
    $$\frac{\Cost{T}+9\charge{}}{\Lp{T}} \le \frac{185}{81} + \frac{9\charge}{\Lp{T}} \leq \frac{185}{81} + 45 \cdot \charge{} \leq \frac{16}{7} - \gamma,$$ 
    which holds. \qedhere
\end{itemize}
\end{proof}

\section{\texorpdfstring{Approximation beyond $16/7$}{Approximation beyond 16/7}}\label{sec:beyond}
In this section, we again use \Cref{alg:pivot-supernodes}, but instead of solving the standard LP (\Cref{fig:LP}), we solve the Sherali-Adams relaxation (\Cref{fig:SA}), which also gives us an LP distance $x_{uv}$ for each pair of points.

To go beyond $\frac{16}{7}$, we employ a strategy similar to the one from~\cite{sub2}.
The idea is that there cannot be too many \emph{bad} triangles (i.e., triangles with ratio very close to $\frac{16}{7}$); there will always exist a fraction of triangles with much smaller ratio, and these ``drag'' the approximation ratio below $\frac{16}{7}$.

To formalize this intuition, we define \emph{bad} triangles to be triangles $uvw$ with $x_{uv}\in 0.25\pm \slackBad{}, x_{uw}\in 0.25\pm \slackBad{}, x_{vw}\in 0.5\pm \slackBad{}$ (recall that $y_{uv} = 1-x_{uv}$, therefore $y_{uv}\in 0.75\pm \slackBad{}, y_{uw}\in 0.75\pm \slackBad{}, y_{vw}\in 0.5\pm \slackBad{}$).
We say that such a bad triangle is centered at $u$.
All other triangles are good triangles.
We say that a chargeable triangle is either a degenerate triangle $uv$ with $x_{uv}\in 0.25\pm \slackBad{}$ ($y_{uv}\in 0.75\pm \slackBad{}$), or a triangle $uvw$ with $x_{uv}\in 0.25\pm \slackBad{}, x_{uw}\in 0.25\pm \slackBad{}, x_{vw}\not\in 0.5\pm \slackBad{}$ ($y_{uv}\in 0.75\pm \slackBad{}, y_{uw}\in 0.75\pm \slackBad{}, y_{vw}\not\in 0.5\pm \slackBad{}$).
A chargeable triangle, by definition, is not a bad triangle.

We also define the function $h:\binom{V}{3}\cup \binom{V}{2} \rightarrow \mathbb{R}$ such that $h(t) = -\charge{}$ if $t$ is a bad triangle, $h(t) = 9 \cdot \charge{}$ if $t$ is chargeable, and $h(t)=0$ otherwise.

We want to show that $\sum_{t\in \binom{V}{3}\cup\binom{V}{2}} h(t) \ge 0$. 
This suffices, because it allows us to overestimate the actual cost of the algorithm by adding $\sum_{t\in \binom{V}{3}\cup\binom{V}{2}} h(t)$.
Repeating a triangle based analysis, now the bad triangles have smaller cost (because $h(t) < 0$ when $t$ is a bad triangle) and therefore their ratio is below $\frac{16}{7}$.
This comes with increasing the ratio of chargeable triangles; however, $h(t)$ is small enough to guarantee that their ratio stays below $\frac{16}{7}$.

\paragraph{Global charging scheme}
Given any $u\in V$, let $G_u = (V_u,E_u)$ be a simple graph such that $V_u = \set{v\in V \mid y_{uv} \in 0.75\pm \slackBad{}}$, and $E_u = \set{vw \in \binom{V_u}{2} \mid y_{uv} \in 0.5 \pm \slackBad{}}$.
The definition of $G_u$ is such that it captures the bad triangles and the chargeable triangles centered at $u$.

\begin{lemma}
For any $u$, the graph $G_u$ is $K_5$-free.
\end{lemma}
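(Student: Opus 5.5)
The plan is to argue by contradiction using the local distribution given by the Sherali-Adams relaxation on $6$ nodes. (I read the edge condition in the definition of $E_u$ as $y_{vw}\in 0.5\pm\slackBad{}$, since that is what captures the bad and chargeable triangles centered at $u$.) Suppose $v_1,\dots,v_5\in V_u$ span a $K_5$ in $G_u$. Then $y_{uv_i}\in 0.75\pm\slackBad{}$ for all $i$, and $y_{v_iv_j}\in 0.5\pm\slackBad{}$ for all $i<j$.

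Let $S=\set{u,v_1,\dots,v_5}$, so $|S|=6\le r$. By the consistency constraints of \Cref{fig:SA} (with $T=\emptyset$ and $y_\emptyset=1$), the variables $y_{S_1|\dots|S_\ell}$ over partitions of $S$ are nonnegative and sum to $1$. Hence they form a probability distribution $\mathcal{D}$ over partitions of $S$. Applying the same constraints with $T=\set{a,b}\subseteq S$, the probability under $\mathcal{D}$ that $a$ and $b$ lie in the same part equals $y_{ab}$. So I can treat $y_{uv_i}$ and $y_{v_iv_j}$ as the marginal probabilities of an honest random partition of $S$.

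Now define the random variable $N=|\set{i : v_i \text{ is in the part of } u}|\in\set{0,\dots,5}$. Linearity of expectation gives $\mathbb{E}[N]=\sum_i y_{uv_i}\ge 5(0.75-\slackBad{})=3.725$. Every pair $v_iv_j$ with both endpoints in $u$'s part is a pair lying in the same part, so the number $M$ of pairs $v_iv_j$ lying together satisfies $M\ge\binom{N}{2}$ pointwise. Therefore
\[
\mathbb{E}\Bigl[\tbinom{N}{2}\Bigr]\le \mathbb{E}[M]=\sum_{i<j}y_{v_iv_j}\le 10(0.5+\slackBad{})=5.05 .
\]

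The remaining step, which I expect to be the only real obstacle, is a convexity lower bound on $\mathbb{E}[\binom{N}{2}]$ for an integer-valued $N$ with mean at least $3.725$. Let $\phi$ be the piecewise-linear interpolation of $n\mapsto\binom{n}{2}$ at the integers. This $\phi$ is convex and nondecreasing on $[0,5]$, and it agrees with $\binom{N}{2}$ because $N$ is an integer. Jensen's inequality then gives $\mathbb{E}[\binom{N}{2}]=\mathbb{E}[\phi(N)]\ge\phi(\mathbb{E}[N])\ge\phi(3.725)$. On $[3,4]$ we have $\phi(t)=3+3(t-3)$, so $\phi(3.725)=5.175>5.05$. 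This contradicts the upper bound above, so no $K_5$ exists in $G_u$. The margin is comfortable for $\slackBad{}=5\times10^{-3}$: the argument only needs $3+3(0.75-5\slackBad{})>5+10\slackBad{}$, which holds whenever $\slackBad{}<1/100$.
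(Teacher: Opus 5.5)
Your proof is correct, and it organizes the Sherali--Adams argument differently from the paper. The paper manipulates named SA variables. It lower-bounds $y_{uv_1}$ by the four disjoint events $uv_1|v_2$, $uv_1v_2|v_3$, $uv_1v_2v_3|v_4$, $uv_1v_2v_3v_4|v_5$. It then shows $y_{uv_j|v_i}\ge 0.25-2\slackBad{}$, and that $v_i,v_j$ are both separated from $u$ with mass at most $3\slackBad{}$. This yields $y_{uv_1}\ge 1-32\slackBad{}>0.75+\slackBad{}$. You instead aggregate everything into the single count $N$ and use convexity of the integer-interpolated $\binom{n}{2}$.

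Underneath, the two arguments rest on the same linear inequality. Your chord on $[3,4]$ is the pointwise bound $\binom{N}{2}\ge 3N-6$, which gives $3\sum_i y_{uv_i}-\sum_{i<j}y_{v_iv_j}\le 6$. The paper's chain of events amounts to the Bonferroni bound $1\ge\sum_i\Pr[B_i]-\sum_{i<j}\Pr[B_i\cap B_j]$, where $B_i$ is the event that $v_i$ is not in $u$'s part. Substituting $\Pr[B_i\cap B_j]\le 1-y_{uv_i}-y_{uv_j}+y_{v_iv_j}$ turns this into the same inequality.

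What your route buys is that you never bound terms one at a time. You need only the one-sided hypotheses $y_{uv_i}\ge 0.75-\slackBad{}$ and $y_{v_iv_j}\le 0.5+\slackBad{}$. The paper uses both endpoints of the interval for $y_{uv_i}$, and the upper endpoint is what it ultimately contradicts. You also get the threshold $\slackBad{}<1/100$ instead of the paper's $\slackBad{}<1/132$. The paper's route, in turn, stays entirely at the level of explicit SA identities and union bounds, without setting up the random-partition interpretation or Jensen. Both need the full $r=6$ local distribution on $\set{u,v_1,\dots,v_5}$.

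You were also right to read the edge condition of $E_u$ as $y_{vw}\in 0.5\pm\slackBad{}$; the paper's proof uses it in that form.
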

\begin{proof}
Assume there exist nodes $v_1, v_2, v_3, v_4, v_5 \in V_u$ that are pairwise connected in $G_u$.
We want to show that this implies $y_{uv_1} > 0.75 + \slackBad{}$, which is a contradiction.

We have $y_{uv_1} \ge y_{uv_1|v_2} + y_{uv_1v_2|v_3} + y_{uv_1v_2v_3|v_4} + y_{uv_1v_2v_3v_4|v_5}$.
We lower bound each of these terms.

\begin{claim} \label{clm:prw}
For any $i,j\in [5], i\ne j$ we have $0.25 - 2\slackBad{} \le y_{uv_i|v_j} \le 0.25 + \slackBad{}$.
\end{claim}
\begin{proof}
For the lower bound: $y_{uv_i|v_j} = y_{uv_i} - y_{uv_iv_j} \ge y_{uv_i} - y_{v_iv_j} \ge (0.75-\slackBad{}) - (0.5+\slackBad{})$.
For the upper bound we have $y_{uv_i|v_j} \le 1 - y_{uv_j} \le 0.25 + \slackBad{}$.
\end{proof}
\begin{claim}
For any $i\in \set{3,4,5}$ we have $y_{uv_1\ldots v_{i-1}|v_i} \ge 0.25 - (3i-2)\slackBad{}$.
\end{claim}
\begin{proof}
We have $y_{uv_1\ldots v_{i-1}|v_i} \ge y_{u|v_i} - \sum_{j=1}^{i-1}(y_{u|v_iv_j} + y_{u|v_i|v_j})$, by union bound as $u|v_iv_j$ and $u|v_i|v_j$ are exactly the events that neither $v_i$ nor $v_j$ is with $u$.

It thus suffices to bound $y_{u|v_iv_j} + y_{u|v_i|v_j}$. Notice that $y_{uv_j|v_i} + y_{u|v_iv_j} + y_{u|v_i|v_j} = y_{u|v_i} = 1 - y_{uv_i} \le 0.25 + \slackBad{}$.
But by \Cref{clm:prw} we have $0.25 - 2\slackBad{} \le y_{uv_j|v_i}$, and therefore $y_{u|v_iv_j} + y_{u|v_i|v_j} \le 3\slackBad{}$.

We conclude that $$y_{uv_1\ldots v_{i-1}|v_i} \ge y_{u|v_i} - \sum_{j=1}^{i-1}(y_{u|v_iv_j} + y_{u|v_i|v_j}) \ge y_{u|v_i} - 3(i-1)\slackBad{} \ge 0.25 - (3i-2)\slackBad{}.$$
\end{proof}

The two claims prove our lemma, as they imply $$y_{uv_1} \ge y_{uv_1|v_2} + y_{uv_1v_2|v_3} + y_{uv_1v_2v_3|v_4} + y_{uv_1v_2v_3v_4|v_5} \ge 1 - 32 \slackBad{} > 0.75 + \slackBad{}.$$
\end{proof}

We are now ready to prove that $\sum_{t\in \binom{V}{3}\cup\binom{V}{2}} h(t) \ge 0$.
\begin{lemma} \label{lem:hPositive}
$\sum_{t\in \binom{V}{3}\cup\binom{V}{2}} h(t) \ge 0$.
\end{lemma}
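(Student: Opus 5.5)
Every bad triangle has a unique center (the apex $u$ with two near-$0.25$ legs), so I will organize the count by center. Fix $u$ and work in $G_u$. A bad triangle $uvw$ centered at $u$ is exactly an edge $vw\in E_u$. The chargeable triangles involving $u$ are the degenerate triangles $uv$ with $v\in V_u$, and the triangles $uvw$ with $v,w\in V_u$ and $vw\notin E_u$, i.e.\ non-edges of $G_u$. A chargeable triangle centered at $u$ is determined by $u$ and its pair of near-$0.25$ legs. Because of the slack, at most two vertices could serve as centers of a single triple, so it is attributed to at most a constant number of centers. Degenerate chargeable triangles can be attributed to both endpoints, and I will simply give each center half the charge, $4.5\charge{}$ per $v\in V_u$.

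Summing over centers, it then suffices to show the following for every $u$, with $k=|V_u|$ and $m=|E_u|$:
\[
-\charge{}\, m + 9\charge{}\Bigl(\tbinom{k}{2}-m\Bigr)+4.5\,\charge{}\,k \;\ge\; 0 .
\]
Every bad triangle is counted exactly once, since its center is unique. Every chargeable triangle is counted at most as often as the charge it actually carries. I will check the double-counting carefully: if a triple $uvw$ were chargeable centered both at $u$ and at $v$, its weight $9\charge{}$ must cover both contributions. If needed I will instead use the weaker per-center charge $3\charge{}$ in place of $9\charge{}$, which still works below.

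The key step uses that $G_u$ is $K_5$-free (the preceding lemma). By Turán's theorem, $m\le \tfrac{3}{8}k^2$. The non-edges number $\binom{k}{2}-m\ge \tfrac{k^2}{8}-\tfrac{k}{2}$. Hence
\[
9\Bigl(\tbinom{k}{2}-m\Bigr)+4.5k \;\ge\; \tfrac{9}{8}k^2 \;\ge\; \tfrac{3}{8}k^2 \;\ge\; m .
\]
Here the degenerate charges exactly absorb the $-k/2$ loss in the Turán bound for small $k$. This closes the per-center inequality, and summing over $u$ gives $\sum_t h(t)\ge 0$.

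The main obstacle I expect is the bookkeeping rather than the extremal bound: making precise which vertex a chargeable triple is attributed to, and confirming that each charge $9\charge{}$ (and each degenerate charge) is not over-spent across centers. The slack in the numbers is large, since $\tfrac{9}{8}$ versus $\tfrac{3}{8}$ leaves a factor of $3$, so any attribution with at most three uses per chargeable triangle suffices.
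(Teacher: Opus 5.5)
Your proposal is correct and is essentially the paper's argument: Turán's bound $|E_u|\le \tfrac38|V_u|^2$ on the $K_5$-free graph $G_u$ for each center $u$, with each bad triangle counted exactly once via its unique center and each chargeable triangle counted at most three times. One correction: your claim that at most two vertices can be centers of a chargeable triple is false (if all three LP distances lie in $0.25\pm\slackBad{}$, the triple is chargeable at all three vertices, and in fact the number of centers is always $1$ or $3$), so the $9\charge{}$-per-center version overspends; your fallback of $3\charge{}$ per center is the version you need, and it works since $3\bigl(\tfrac{k^2}{8}-\tfrac{k}{2}\bigr)+4.5k\ge \tfrac38k^2\ge m$.
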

\begin{proof}
Notice that there is a natural bijection between bad triangles centered at $u$ and edges in $G_u$, as well as a natural bijection between non-degenerate chargeable triangles containing $u$ and non-edges in $G_u$.
Finally, there is a bijection between degenerate chargeable triangles containing $u$ and vertices of $G_u$.

Now fix any $u$. 
As $G_u$ is $K_5$-free, by Turan's theorem we have $|E_u| \le 0.75 \frac{|V_u|^2}{2} \le 0.75\binom{|V_u|}{2} + 0.75 \frac{|V_u|}{2}$.
This means that the ratio $\frac{|E_u|}{|\binom{V_u}{2} \setminus E_u| + |V_u|}$ is at most 3.
This is the ratio of bad triangles centered at $u$ over the chargeable triangles containing $u$.

Now notice that $\sum_u |E_u|$ is exactly the number of bad triangles, while $\sum_u |\binom{V_u}{2}\setminus E_u|+ |V_u|$ counts chargeable triangles and may at most triple count a chargeable triangle.
Therefore, the ratio of bad triangles over chargeable triangles is at most $3\cdot 3 = 9$.

We conclude that $\sum_{t\in \binom{V}{3}\cup\binom{V}{2}} h(t) \ge 0$, as the $h$-value of a chargeable triangle is $9$ times larger than the $h$-value of a bad triangle.
\end{proof}

We now overestimate the actual cost of the algorithm, by adding $\sum_{t\in \binom{V}{3}\cup\binom{V}{2}} h(t) \ge 0$.
Therefore, if we repeat the Triangle-Based analysis, the cost of a bad triangle $uvw$ is decreased by $-\charge{}$, while the cost of chargeable triangles is increased by $9\charge{}$.
The cost of all other triangles remains the same.

\mainTheorem*
\begin{proof}
By \Cref{lem:triangle}, the approximation factor is upper bounded by 
$$\frac{\sum_{uvw\in\binom{V}{3}} \COST{u}{v}[w] + \sum_{uv\in\binom{V}{2}} \COST{u}{v}}{\sum_{uvw\in\binom{V}{3}} \LP{u}{v}[w] + \sum_{uv\in\binom{V}{2}} \LP{u}{v}}.$$
By \Cref{lem:hPositive}, this is upper bounded by 
$$\frac{\sum_{uvw\in\binom{V}{3}} (\COST{u}{v}[w]+h(uvw)) + \sum_{uv\in\binom{V}{2}} (\COST{u}{v}+h(uv))}{\sum_{uvw\in\binom{V}{3}} \LP{u}{v}[w] + \sum_{uv\in\binom{V}{2}} \LP{u}{v}}.$$

We partition the triangles into classes based on the multiset of the supernodes their endpoints belong to.
Let $\mathcal{T}$ be the triangles in such a category, then it suffices to upper bound $\frac{\sum_{T\in \mathcal{T}} (\Cost{T}+h(T))}{\sum_{T\in \mathcal{T}} \Lp{T}}$ for every class; the maximum upper bound obtained also upper bounds the approximation factor.

If the multiset of a class has at most two distinct supernodes, then \Cref{lem:edgesSame,lemma:degenerate-bound-1} show an upper bound of $2.1\le \frac{16}{7}-\gamma$.
If it has three distinct supernodes, then \Cref{lem:edgesSame}, along with 
% \Cref{lemma:degenerate-bound-1,lemma:---_triangle,lemma:--+_triangle,lemma:-++_triangle,lemma:+++_triangle}
\Cref{lemma:---_triangle,lemma:--+_triangle,lemma:-++_triangle,lemma:+++_triangle}
prove an upper bound of $\frac{16}{7}-\gamma$.
\end{proof}

\clearpage
\bibliographystyle{alpha}
\bibliography{bib}

@article{pivoting,
  author    = {Nir Ailon and
               Moses Charikar and
               Alantha Newman},
  title     = {Aggregating inconsistent information: Ranking and clustering},
  journal   = {J. {ACM}},
  volume    = {55},
  number    = {5},
  pages     = {23:1--23:27},
  year      = {2008},
  url       = {https://doi.org/10.1145/1411509.1411513},
  doi       = {10.1145/1411509.1411513},
  bibsource = {dblp computer science bibliography, https://dblp.org},
  note      = {Announced in STOC 2005}
}

@article{charikarTree,
  author       = {Nir Ailon and
                  Moses Charikar},
  title        = {Fitting Tree Metrics: Hierarchical Clustering and Phylogeny},
  journal      = {{SIAM} J. Comput.},
  volume       = {40},
  number       = {5},
  pages        = {1275--1291},
  year         = {2011},
  url          = {https://doi.org/10.1137/100806886},
  doi          = {10.1137/100806886},
  note          = {Announced at FOCS'05},
}

@article{Bansal,
  author    = {Nikhil Bansal and
               Avrim Blum and
               Shuchi Chawla},
  title     = {Correlation Clustering},
  journal   = {Mach. Learn.},
  volume    = {56},
  number    = {1-3},
  pages     = {89--113},
  year      = {2004},
  url       = {https://doi.org/10.1023/B:MACH.0000033116.57574.95},
  doi       = {10.1023/B:MACH.0000033116.57574.95},
  bibsource = {dblp computer science bibliography, https://dblp.org}
}

@article{dnaVertexDeletion,
  author       = {Paola Bonizzoni and
                  Gianluca Della Vedova and
                  Riccardo Dondi and
                  Giancarlo Mauri},
  title        = {Fingerprint Clustering with Bounded Number of Missing Values},
  journal      = {Algorithmica},
  volume       = {58},
  number       = {2},
  pages        = {282--303},
  year         = {2010},
  url          = {https://doi.org/10.1007/s00453-008-9265-0},
  doi          = {10.1007/S00453-008-9265-0},
  bibsource    = {dblp computer science bibliography, https://dblp.org}
}

@inproceedings{clusterLP,
  author       = {Nairen Cao and
                  Vincent Cohen{-}Addad and
                  Euiwoong Lee and
                  Shi Li and
                  Alantha Newman and
                  Lukas Vogl},
  editor       = {Bojan Mohar and
                  Igor Shinkar and
                  Ryan O'Donnell},
  title        = {Understanding the Cluster Linear Program for Correlation Clustering},
  booktitle    = {Proceedings of the 56th Annual {ACM} Symposium on Theory of Computing,
                  {STOC} 2024, Vancouver, BC, Canada, June 24-28, 2024},
  pages        = {1605--1616},
  publisher    = {{ACM}},
  year         = {2024},
  url          = {https://doi.org/10.1145/3618260.3649749},
  doi          = {10.1145/3618260.3649749},
}

@inproceedings{commDet,
  author    = {Yudong Chen and
               Sujay Sanghavi and
               Huan Xu},
  editor    = {Peter L. Bartlett and
               Fernando C. N. Pereira and
               Christopher J. C. Burges and
               L{\'{e}}on Bottou and
               Kilian Q. Weinberger},
  title     = {Clustering Sparse Graphs},
  booktitle = {Advances in Neural Information Processing Systems 25: 26th Annual
               Conference on Neural Information Processing Systems 2012. Proceedings
               of a meeting held December 3-6, 2012, Lake Tahoe, Nevada, United States},
  pages     = {2213--2221},
  year      = {2012},
  url       = {https://proceedings.neurips.cc/paper/2012/hash/1e6e0a04d20f50967c64dac2d639a577-Abstract.html},
  bibsource = {dblp computer science bibliography, https://dblp.org}
}

@article{cutRadius,
  author    = {Moses Charikar and
               Venkatesan Guruswami and
               Anthony Wirth},
  title     = {Clustering with qualitative information},
  journal   = {J. Comput. Syst. Sci.},
  volume    = {71},
  number    = {3},
  pages     = {360--383},
  year      = {2005},
  url       = {https://doi.org/10.1016/j.jcss.2004.10.012},
  doi       = {10.1016/j.jcss.2004.10.012},
  bibsource = {dblp computer science bibliography, https://dblp.org},
  note      = {Announced in FOCS 2003}
}

@inproceedings{sublinearMikkelSTOC,
  author       = {Nairen Cao and
                  Vincent Cohen{-}Addad and
                  Shi Li and
                  Euiwoong Lee and
                  David Rasmussen Lolck and
                  Alantha Newman and
                  Mikkel Thorup and
                  Lukas Vogl and
                  Shuyi Yan and
                  Hanwen Zhang},
  title        = {Solving the Correlation Cluster {LP} in Nearly Linear Time},
  booktitle    = {Proceedings of the 56th Annual {ACM} Symposium on Theory of Computing,
                  {STOC} 2025, Prague, Czech Republic, June 23-27, 2025},
  publisher    = {{ACM}},
  year         = {2025},
  note          = {Accepted at STOC 2025 (not published yet)}
}

@inproceedings{dynNik,
  author       = {Vincent Cohen{-}Addad and
                  Silvio Lattanzi and
                  Andreas Maggiori and
                  Nikos Parotsidis},
  title        = {Dynamic Correlation Clustering in Sublinear Update Time},
  booktitle    = {Forty-first International Conference on Machine Learning, {ICML} 2024,
                  Vienna, Austria, July 21-27, 2024},
  publisher    = {OpenReview.net},
  year         = {2024},
  url          = {https://openreview.net/forum?id=3YG55Lbcnr},
  bibsource    = {dblp computer science bibliography, https://dblp.org}
}

@inproceedings{parallel,
  author    = {Vincent Cohen{-}Addad and
               Silvio Lattanzi and
               Slobodan Mitrovic and
               Ashkan Norouzi{-}Fard and
               Nikos Parotsidis and
               Jakub Tarnawski},
  editor    = {Marina Meila and
               Tong Zhang},
  title     = {Correlation Clustering in Constant Many Parallel Rounds},
  booktitle = {Proceedings of the 38th International Conference on Machine Learning,
               {ICML} 2021, 18-24 July 2021, Virtual Event},
  series    = {Proceedings of Machine Learning Research},
  volume    = {139},
  pages     = {2069--2078},
  publisher = {{PMLR}},
  year      = {2021},
  url       = {http://proceedings.mlr.press/v139/cohen-addad21b.html},
  bibsource = {dblp computer science bibliography, https://dblp.org}
}

@inproceedings{sub2,
  author       = {Vincent Cohen{-}Addad and
                  Euiwoong Lee and
                  Alantha Newman},
  title        = {Correlation Clustering with {Sherali-Adams}},
  booktitle    = {63rd {IEEE} Annual Symposium on Foundations of Computer Science, {FOCS}
                  2022, Denver, CO, USA, October 31 - November 3, 2022},
  pages        = {651--661},
  publisher    = {{IEEE}},
  year         = {2022},
  url          = {https://doi.org/10.1109/FOCS54457.2022.00068},
  doi          = {10.1109/FOCS54457.2022.00068},
  bibsource    = {dblp computer science bibliography, https://dblp.org}
}

@inproceedings{apx173,
  author       = {Vincent Cohen{-}Addad and
                  Euiwoong Lee and
                  Shi Li and
                  Alantha Newman},
  title        = {Handling Correlated Rounding Error via Preclustering: {A} 1.73-approximation
                  for Correlation Clustering},
  booktitle    = {64th {IEEE} Annual Symposium on Foundations of Computer Science, {FOCS}
                  2023, Santa Cruz, CA, USA, November 6-9, 2023},
  pages        = {1082--1104},
  publisher    = {{IEEE}},
  year         = {2023},
  url          = {https://doi.org/10.1109/FOCS57990.2023.00065},
  doi          = {10.1109/FOCS57990.2023.00065},
  bibsource    = {dblp computer science bibliography, https://dblp.org}
}

@inproceedings{combinatorialMikkel,
  author       = {Vincent Cohen{-}Addad and
                  David Rasmussen Lolck and
                  Marcin Pilipczuk and
                  Mikkel Thorup and
                  Shuyi Yan and
                  Hanwen Zhang},
  editor       = {Bojan Mohar and
                  Igor Shinkar and
                  Ryan O'Donnell},
  title        = {Combinatorial Correlation Clustering},
  booktitle    = {Proceedings of the 56th Annual {ACM} Symposium on Theory of Computing,
                  {STOC} 2024, Vancouver, BC, Canada, June 24-28, 2024},
  pages        = {1617--1628},
  publisher    = {{ACM}},
  year         = {2024},
  url          = {https://doi.org/10.1145/3618260.3649712},
  doi          = {10.1145/3618260.3649712},
}

@article{barrierPseudometric,
  author       = {Dahoon Lee and
                  Chenglin Fan and
                  Euiwoong Lee},
  title        = {Improved Approximation Algorithms for Chromatic and Pseudometric-Weighted
                  Correlation Clustering},
  journal      = {CoRR},
  volume       = {abs/2505.21939},
  year         = {2025},
  url          = {https://doi.org/10.48550/arXiv.2505.21939},
  doi          = {10.48550/ARXIV.2505.21939},
  eprinttype    = {arXiv},
  eprint       = {2505.21939},
}

@article{chromaticClusterLP,
  author       = {Dahoon Lee and
                  Chenglin Fan and
                  Euiwoong Lee},
  title        = {1.64-Approximation for Chromatic Correlation Clustering via Chromatic
                  Cluster {LP}},
  journal      = {CoRR},
  volume       = {abs/2507.15417},
  year         = {2025},
  url          = {https://doi.org/10.48550/arXiv.2507.15417},
  doi          = {10.48550/ARXIV.2507.15417},
  eprinttype    = {arXiv},
  eprint       = {2507.15417},
}

@article{robust,
  author       = {Martin Farach and
                  Sampath Kannan and
                  Tandy J. Warnow},
  title        = {A Robust Model for Finding Optimal Evolutionary Trees},
  journal      = {Algorithmica},
  volume       = {13},
  number       = {1/2},
  pages        = {155--179},
  year         = {1995},
  note         = {Announced at STOC 1993},
}

@article{EvangelosSTACS,
  author       = {Nick Fischer and
                  Jonas Klausen and
                  Evangelos Kipouridis and
                  Mikkel Thorup},
  title        = {A Faster Algorithm for Constrained Correlation Clustering},
  journal      = {CoRR},
  volume       = {abs/2501.03154},
  year         = {2024},
  url          = {https://doi.org/10.48550/arXiv.2501.03154},
  doi          = {10.48550/ARXIV.2501.03154},
  eprinttype    = {arXiv},
  eprint       = {2501.03154},
  note          = {To appear at STACS 2025}
}

@misc{cao2026clusterdeletionhardapproximate,
      title={Cluster Deletion is as Hard to Approximate as Vertex Cover}, 
      author={Yixin Cao and Ying Xu},
      year={2026},
      eprint={2608.04883},
      archivePrefix={arXiv},
      primaryClass={cs.DS},
      url={https://arxiv.org/abs/2608.04883}, 
}

@article{clusterDeletionSpecialClassesBoundedClique,
  author       = {Nicola Galesi and
                  Tony Huynh and
                  Fariba Ranjbar},
  title        = {Cluster deletion and clique partitioning in graphs with bounded clique
                  number},
  journal      = {CoRR},
  volume       = {abs/2505.00922},
  year         = {2025},
  url          = {https://doi.org/10.48550/arXiv.2505.00922},
  doi          = {10.48550/ARXIV.2505.00922},
  eprinttype    = {arXiv},
  eprint       = {2505.00922},
  bibsource    = {dblp computer science bibliography, https://dblp.org}
}

@article{local,
  author    = {Gregory J. Puleo and
               Olgica Milenkovic},
  title     = {Correlation Clustering with Constrained Cluster Sizes and Extended
               Weights Bounds},
  journal   = {{SIAM} J. Optim.},
  volume    = {25},
  number    = {3},
  pages     = {1857--1872},
  year      = {2015},
  url       = {https://doi.org/10.1137/140994198},
  doi       = {10.1137/140994198},
  bibsource = {dblp computer science bibliography, https://dblp.org}
}

@article{deterministicPivoting,
  author    = {Anke van Zuylen and
               David P. Williamson},
  title     = {Deterministic Pivoting Algorithms for Constrained Ranking and Clustering
               Problems},
  journal   = {Math. Oper. Res.},
  volume    = {34},
  number    = {3},
  pages     = {594--620},
  year      = {2009},
  url       = {https://doi.org/10.1287/moor.1090.0385},
  doi       = {10.1287/moor.1090.0385},
  bibsource = {dblp computer science bibliography, https://dblp.org},
  note      = {Announced in SODA 2007},
}

@inproceedings{differentialPrivacy,
  author    = {Mark Bun and
               Marek Eli{\'{a}}s and
               Janardhan Kulkarni},
  editor    = {Marina Meila and
               Tong Zhang},
  title     = {Differentially Private Correlation Clustering},
  booktitle = {Proceedings of the 38th International Conference on Machine Learning,
               {ICML} 2021, 18-24 July 2021, Virtual Event},
  series    = {Proceedings of Machine Learning Research},
  volume    = {139},
  pages     = {1136--1146},
  publisher = {{PMLR}},
  year      = {2021},
  url       = {http://proceedings.mlr.press/v139/bun21a.html},
  bibsource = {dblp computer science bibliography, https://dblp.org}
}

@inproceedings{deterministic,
  author    = {Nate Veldt},
  editor    = {Kamalika Chaudhuri and
               Stefanie Jegelka and
               Le Song and
               Csaba Szepesv{\'{a}}ri and
               Gang Niu and
               Sivan Sabato},
  title     = {Correlation Clustering via Strong Triadic Closure Labeling: Fast Approximation
               Algorithms and Practical Lower Bounds},
  booktitle = {International Conference on Machine Learning, {ICML} 2022, 17-23 July
               2022, Baltimore, Maryland, {USA}},
  series    = {Proceedings of Machine Learning Research},
  volume    = {162},
  pages     = {22060--22083},
  publisher = {{PMLR}},
  year      = {2022},
  url       = {https://proceedings.mlr.press/v162/veldt22a.html},
  bibsource = {dblp computer science bibliography, https://dblp.org}
}

@inproceedings{veldtFaster2,
  author       = {Vicente Balmaseda and
                  Ying Xu and
                  Yixin Cao and
                  Nate Veldt},
  title        = {Combinatorial Approximations for Cluster Deletion: Simpler, Faster,
                  and Better},
  booktitle    = {Forty-first International Conference on Machine Learning, {ICML} 2024,
                  Vienna, Austria, July 21-27, 2024},
  publisher    = {OpenReview.net},
  year         = {2024},
  url          = {https://openreview.net/forum?id=FpbKoIPHxb},
  bibsource    = {dblp computer science bibliography, https://dblp.org}
}

@inproceedings{image2,
  author    = {Julian Yarkony and
               Alexander T. Ihler and
               Charless C. Fowlkes},
  editor    = {Andrew W. Fitzgibbon and
               Svetlana Lazebnik and
               Pietro Perona and
               Yoichi Sato and
               Cordelia Schmid},
  title     = {Fast Planar Correlation Clustering for Image Segmentation},
  booktitle = {Computer Vision - {ECCV} 2012 - 12th European Conference on Computer
               Vision, Florence, Italy, October 7-13, 2012, Proceedings, Part {VI}},
  series    = {Lecture Notes in Computer Science},
  volume    = {7577},
  pages     = {568--581},
  publisher = {Springer},
  year      = {2012},
  url       = {https://doi.org/10.1007/978-3-642-33783-3\_41},
  doi       = {10.1007/978-3-642-33783-3\_41},
  bibsource = {dblp computer science bibliography, https://dblp.org}
}

@inproceedings{autoLabelAgrawal,
  author    = {Rakesh Agrawal and
               Alan Halverson and
               Krishnaram Kenthapadi and
               Nina Mishra and
               Panayiotis Tsaparas},
  editor    = {Ricardo Baeza{-}Yates and
               Paolo Boldi and
               Berthier A. Ribeiro{-}Neto and
               Berkant Barla Cambazoglu},
  title     = {Generating labels from clicks},
  booktitle = {Proceedings of the Second International Conference on Web Search and
               Web Data Mining, {WSDM} 2009, Barcelona, Spain, February 9-11, 2009},
  pages     = {172--181},
  publisher = {{ACM}},
  year      = {2009},
  url       = {https://doi.org/10.1145/1498759.1498824},
  doi       = {10.1145/1498759.1498824},
  bibsource = {dblp computer science bibliography, https://dblp.org}
}

@inproceedings{autoLabelChakrabarti,
  author    = {Deepayan Chakrabarti and
               Ravi Kumar and
               Kunal Punera},
  editor    = {Jinpeng Huai and
               Robin Chen and
               Hsiao{-}Wuen Hon and
               Yunhao Liu and
               Wei{-}Ying Ma and
               Andrew Tomkins and
               Xiaodong Zhang},
  title     = {A graph-theoretic approach to webpage segmentation},
  booktitle = {Proceedings of the 17th International Conference on World Wide Web,
               {WWW} 2008, Beijing, China, April 21-25, 2008},
  pages     = {377--386},
  publisher = {{ACM}},
  year      = {2008},
  url       = {https://doi.org/10.1145/1367497.1367549},
  doi       = {10.1145/1367497.1367549},
  bibsource = {dblp computer science bibliography, https://dblp.org}
}

@inproceedings{image1,
  author    = {Sungwoong Kim and
               Sebastian Nowozin and
               Pushmeet Kohli and
               Chang Dong Yoo},
  editor    = {John Shawe{-}Taylor and
               Richard S. Zemel and
               Peter L. Bartlett and
               Fernando C. N. Pereira and
               Kilian Q. Weinberger},
  title     = {Higher-Order Correlation Clustering for Image Segmentation},
  booktitle = {Advances in Neural Information Processing Systems 24: 25th Annual
               Conference on Neural Information Processing Systems 2011. Proceedings
               of a meeting held 12-14 December 2011, Granada, Spain},
  pages     = {1530--1538},
  year      = {2011},
  url       = {https://proceedings.neurips.cc/paper/2011/hash/98d6f58ab0dafbb86b083a001561bb34-Abstract.html},
  bibsource = {dblp computer science bibliography, https://dblp.org}
}

@inproceedings{duplicate,
  author    = {Arvind Arasu and
               Christopher R{\'{e}} and
               Dan Suciu},
  editor    = {Yannis E. Ioannidis and
               Dik Lun Lee and
               Raymond T. Ng},
  title     = {Large-Scale Deduplication with Constraints Using Dedupalog},
  booktitle = {Proceedings of the 25th International Conference on Data Engineering,
               {ICDE} 2009, March 29 2009 - April 2 2009, Shanghai, China},
  pages     = {952--963},
  publisher = {{IEEE} Computer Society},
  year      = {2009},
  url       = {https://doi.org/10.1109/ICDE.2009.43},
  doi       = {10.1109/ICDE.2009.43},
  bibsource = {dblp computer science bibliography, https://dblp.org}
}

@inproceedings{NearOptimal2,
  author    = {Shuchi Chawla and
               Konstantin Makarychev and
               Tselil Schramm and
               Grigory Yaroslavtsev},
  editor    = {Rocco A. Servedio and
               Ronitt Rubinfeld},
  title     = {Near Optimal {LP} Rounding Algorithm for Correlation Clustering on
               Complete and Complete $k$-partite Graphs},
  booktitle = {Proceedings of the Forty-Seventh Annual {ACM} on Symposium on Theory
               of Computing, {STOC} 2015, Portland, OR, USA, June 14-17, 2015},
  pages     = {219--228},
  publisher = {{ACM}},
  year      = {2015},
  url       = {https://doi.org/10.1145/2746539.2746604},
  doi       = {10.1145/2746539.2746604},
  bibsource = {dblp computer science bibliography, https://dblp.org}
}

@article{clusteringEnsembles,
  author    = {Francesco Bonchi and
               Aristides Gionis and
               Antti Ukkonen},
  title     = {Overlapping correlation clustering},
  journal   = {Knowl. Inf. Syst.},
  volume    = {35},
  number    = {1},
  pages     = {1--32},
  year      = {2013},
  url       = {https://doi.org/10.1007/s10115-012-0522-9},
  doi       = {10.1007/s10115-012-0522-9},
  bibsource = {dblp computer science bibliography, https://dblp.org}
}

@inproceedings{sublinear,
  author       = {Sepehr Assadi and
                  Chen Wang},
  editor       = {Mark Braverman},
  title        = {Sublinear Time and Space Algorithms for Correlation Clustering via
                  Sparse-Dense Decompositions},
  booktitle    = {13th Innovations in Theoretical Computer Science Conference, {ITCS}
                  2022, January 31 - February 3, 2022, Berkeley, CA, {USA}},
  series       = {LIPIcs},
  volume       = {215},
  pages        = {10:1--10:20},
  publisher    = {Schloss Dagstuhl - Leibniz-Zentrum f{\"{u}}r Informatik},
  year         = {2022},
  url          = {https://doi.org/10.4230/LIPIcs.ITCS.2022.10},
  doi          = {10.4230/LIPICS.ITCS.2022.10},
  bibsource    = {dblp computer science bibliography, https://dblp.org}
}

@inproceedings{BehMany,
  author       = {Soheil Behnezhad and
                  Moses Charikar and
                  Weiyun Ma and
                  Li{-}Yang Tan},
  title        = {Almost 3-Approximate Correlation Clustering in Constant Rounds},
  booktitle    = {63rd {IEEE} Annual Symposium on Foundations of Computer Science, {FOCS}
                  2022, Denver, CO, USA, October 31 - November 3, 2022},
  pages        = {720--731},
  publisher    = {{IEEE}},
  year         = {2022},
  url          = {https://doi.org/10.1109/FOCS54457.2022.00074},
  doi          = {10.1109/FOCS54457.2022.00074},
}

@inproceedings{BehSingle,
  author       = {Soheil Behnezhad and
                  Moses Charikar and
                  Weiyun Ma and
                  Li{-}Yang Tan},
  editor       = {Nikhil Bansal and
                  Viswanath Nagarajan},
  title        = {Single-Pass Streaming Algorithms for Correlation Clustering},
  booktitle    = {Proceedings of the 2023 {ACM-SIAM} Symposium on Discrete Algorithms,
                  {SODA} 2023, Florence, Italy, January 22-25, 2023},
  pages        = {819--849},
  publisher    = {{SIAM}},
  year         = {2023},
  url          = {https://doi.org/10.1137/1.9781611977554.ch33},
  doi          = {10.1137/1.9781611977554.CH33},
}

@inproceedings{dynamicStream,
  author       = {M{\'{e}}lanie Cambus and
                  Fabian Kuhn and
                  Etna Lindy and
                  Shreyas Pai and
                  Jara Uitto},
  editor       = {David P. Woodruff},
  title        = {A {(3} + {\(\varepsilon\)})-Approximate Correlation Clustering Algorithm
                  in Dynamic Streams},
  booktitle    = {Proceedings of the 2024 {ACM-SIAM} Symposium on Discrete Algorithms,
                  {SODA} 2024, Alexandria, VA, USA, January 7-10, 2024},
  pages        = {2861--2880},
  publisher    = {{SIAM}},
  year         = {2024},
  url          = {https://doi.org/10.1137/1.9781611977912.101},
  doi          = {10.1137/1.9781611977912.101},
  bibsource    = {dblp computer science bibliography, https://dblp.org}
}

@inproceedings{sub3parallel,
  author       = {Nairen Cao and
                  Shang{-}En Huang and
                  Hsin{-}Hao Su},
  editor       = {David P. Woodruff},
  title        = {Breaking 3-Factor Approximation for Correlation Clustering in Polylogarithmic
                  Rounds},
  booktitle    = {Proceedings of the 2024 {ACM-SIAM} Symposium on Discrete Algorithms,
                  {SODA} 2024, Alexandria, VA, USA, January 7-10, 2024},
  pages        = {4124--4154},
  publisher    = {{SIAM}},
  year         = {2024},
  url          = {https://doi.org/10.1137/1.9781611977912.143},
  doi          = {10.1137/1.9781611977912.143},
  bibsource    = {dblp computer science bibliography, https://dblp.org}
}

@inproceedings{MakarySingle,
  author       = {Konstantin Makarychev and
                  Sayak Chakrabarty},
  editor       = {Alice Oh and
                  Tristan Naumann and
                  Amir Globerson and
                  Kate Saenko and
                  Moritz Hardt and
                  Sergey Levine},
  title        = {Single-Pass Pivot Algorithm for Correlation Clustering. {K}eep it simple!},
  booktitle    = {Advances in Neural Information Processing Systems 36: Annual Conference
                  on Neural Information Processing Systems 2023, NeurIPS 2023, New Orleans,
                  LA, USA, December 10 - 16, 2023},
  year         = {2023},
  url          = {http://papers.nips.cc/paper\_files/paper/2023/hash/149ad6e32c08b73a3ecc3d11977fcc47-Abstract-Conference.html},
  bibsource    = {dblp computer science bibliography, https://dblp.org}
}

@article{vc-hard-to-approx,
  author       = {Subhash Khot and
                  Oded Regev},
  title        = {Vertex cover might be hard to approximate to within $2 - \varepsilon$},
  journal      = {J. Comput. Syst. Sci.},
  volume       = {74},
  number       = {3},
  pages        = {335--349},
  year         = {2008},
  url          = {https://doi.org/10.1016/j.jcss.2007.06.019},
  doi          = {10.1016/J.JCSS.2007.06.019}
}

@article{fpt,
  author    = {Fedor V. Fomin and
               Stefan Kratsch and
               Marcin Pilipczuk and
               Michal Pilipczuk and
               Yngve Villanger},
  title     = {Tight bounds for parameterized complexity of Cluster Editing with
               a small number of clusters},
  journal   = {J. Comput. Syst. Sci.},
  volume    = {80},
  number    = {7},
  pages     = {1430--1447},
  year      = {2014},
  url       = {https://doi.org/10.1016/j.jcss.2014.04.015},
  doi       = {10.1016/j.jcss.2014.04.015},
  bibsource = {dblp computer science bibliography, https://dblp.org}
}

@inproceedings{ugc,
  author       = {Subhash Khot},
  editor       = {John H. Reif},
  title        = {On the power of unique 2-prover 1-round games},
  booktitle    = {Proceedings on 34th Annual {ACM} Symposium on Theory of Computing,
                  May 19-21, 2002, Montr{\'{e}}al, Qu{\'{e}}bec, Canada},
  pages        = {767--775},
  publisher    = {{ACM}},
  year         = {2002},
  url          = {https://doi.org/10.1145/509907.510017},
  doi          = {10.1145/509907.510017}
}

@InProceedings{kalavas,
  title = 	 {Towards Better-than-2 Approximation for Constrained Correlation Clustering},
  author =       {Kalavas, Andreas and Kipouridis, Evangelos and Varma, Nithin},
  booktitle = 	 {Proceedings of the 42nd International Conference on Machine Learning},
  pages = 	 {28721--28734},
  year = 	 {2025},
  editor = 	 {Singh, Aarti and Fazel, Maryam and Hsu, Daniel and Lacoste-Julien, Simon and Berkenkamp, Felix and Maharaj, Tegan and Wagstaff, Kiri and Zhu, Jerry},
  volume = 	 {267},
  series = 	 {Proceedings of Machine Learning Research},
  month = 	 {13--19 Jul},
  publisher =    {PMLR},
  url = 	 {https://proceedings.mlr.press/v267/kalavas25a.html}
}

@article{disambiguation,
  author    = {Dmitri V. Kalashnikov and
               Zhaoqi Chen and
               Sharad Mehrotra and
               Rabia Nuray{-}Turan},
  title     = {Web People Search via Connection Analysis},
  journal   = {{IEEE} Trans. Knowl. Data Eng.},
  volume    = {20},
  number    = {11},
  pages     = {1550--1565},
  year      = {2008},
  url       = {https://doi.org/10.1109/TKDE.2008.78},
  doi       = {10.1109/TKDE.2008.78},
  bibsource = {dblp computer science bibliography, https://dblp.org}
}

@article{Ma3Infinity,
  author       = {Bin Ma and
                  Lusheng Wang and
                  Louxin Zhang},
  title        = {Fitting Distances by Tree Metrics with Increment Error},
  journal      = {J. Comb. Optim.},
  volume       = {3},
  number       = {2-3},
  pages        = {213--225},
  year         = {1999},
  url          = {https://doi.org/10.1023/A:1009837726913},
  doi          = {10.1023/A:1009837726913},
  bibsource    = {dblp computer science bibliography, https://dblp.org}
}

@inproceedings{communityVeldt,
  author       = {Nate Veldt and
                  David F. Gleich and
                  Anthony Wirth},
  editor       = {Pierre{-}Antoine Champin and
                  Fabien Gandon and
                  Mounia Lalmas and
                  Panagiotis G. Ipeirotis},
  title        = {A Correlation Clustering Framework for Community Detection},
  booktitle    = {Proceedings of the 2018 World Wide Web Conference on World Wide Web,
                  {WWW} 2018, Lyon, France, April 23-27, 2018},
  pages        = {439--448},
  publisher    = {{ACM}},
  year         = {2018},
  url          = {https://doi.org/10.1145/3178876.3186110},
  doi          = {10.1145/3178876.3186110},
  bibsource    = {dblp computer science bibliography, https://dblp.org}
}

@inproceedings{CrosslingualGaelZ07,
  author       = {Jurgen Van Gael and
                  Xiaojin Zhu},
  editor       = {Manuela M. Veloso},
  title        = {Correlation Clustering for Crosslingual Link Detection},
  booktitle    = {{IJCAI} 2007, Proceedings of the 20th International Joint Conference
                  on Artificial Intelligence, Hyderabad, India, January 6-12, 2007},
  pages        = {1744--1749},
  year         = {2007},
  url          = {http://ijcai.org/Proceedings/07/Papers/282.pdf},
  bibsource    = {dblp computer science bibliography, https://dblp.org}
}

@inproceedings{VeldtConstrained,
  author       = {Nate Veldt},
  title        = {A Simple and Fast (3+$\varepsilon$)-approximation for Constrained Correlation Clustering},
  booktitle    = {9th Symposium on Simplicity in Algorithms, SOSA@SODA 2026, Vancouver, Canada, January 12-13, 2026},
  publisher    = {{SIAM}},
  year         = {2026}
}

@inproceedings{l0TreeLower,
  author       = {Debarati Das and Evangelos Kipouridis and Joachim Spoerhase},
  title        = {Tree Violation Distance under Constraints},
  booktitle    = {9th Symposium on Simplicity in Algorithms, SOSA@SODA 2026, Vancouver, Canada, January 12-13, 2026},
  publisher    = {{SIAM}},
  year         = {2026}
}

@inproceedings{cons1,
  title={Constrained $k$-means clustering with background knowledge},
  author={Wagstaff, Kiri and Cardie, Claire and Rogers, Seth and Schr{\"o}dl, Stefan and others},
  booktitle={ICML},
  volume={1},
  pages={577--584},
  year={2001}
}

@inproceedings{Cons3,
  title={Flexible constrained spectral clustering},
  author={Wang, Xiang and Davidson, Ian},
  booktitle={Proceedings of the 16th ACM SIGKDD international conference on Knowledge discovery and data mining},
  pages={563--572},
  year={2010}
}

@inproceedings{cons4,
  title={Fuzzy clustering with spatial constraints},
  author={Pham, Dzung L},
  booktitle={Proceedings. International Conference on Image Processing},
  volume={2},
  pages={II--II},
  year={2002},
  organization={IEEE}
}

\clearpage
\appendix
\section{\texorpdfstring{Properties of $f^+,f^-$}{Properties of f+, f-}} \label{app:proofsForFunctions}
In this section we present several properties of $f^+$ and $f^-$:

$$
f^+(x) = \begin{cases}
	\sqrt{1 - \frac{20}{9} x} & \text{if } x < 0.25 \\
	\frac{13}{12} - \frac{5}{3} x & \text{if } x < 0.5,
	\qquad
	f^-(x) = \begin{cases}
		1 - x & \text{if } x < 0.5 \\
		0 & \text{if } x \geq 0.5
	\end{cases} \\
	0 & \text{if } x \geq 0.5
\end{cases}
$$

\derCont*
\begin{proof}
	The only potential junction within $[0,0.5)$ is at $x=0.25$. We verify the values and one-sided derivatives match.
	For the values,
	\begin{align*}
		f^+(0.25) &= \frac{13}{12} - \frac{5}{3}\cdot 0.25 \;=\; \frac{2}{3},\\
		f^+(0.25^-) &= \lim_{x\to 0.25^-} \sqrt{1 - \frac{20}{9}x}
		= \sqrt{1 - \frac{20}{9}\cdot 0.25} \;=\; \frac{2}{3}.
	\end{align*}
	For the derivatives, for $x\in[0,0.25)$ we have
	$$
	f'^+(x) \;=\; -\frac{10}{9}\cdot\frac{1}{\sqrt{\,1-\frac{20}{9}x\,}},
	$$
	and for $x\in[0.25,0.5)$ we have $f'^+(x)=-\frac{5}{3}$. Hence
	$$
	f'^+(0.25^-) \;=\; \lim_{x\to 0.25^-} \frac{-10}{9\sqrt{1-\frac{20}{9}x}}
	\;=\; \frac{-10}{9\cdot \frac{2}{3}} \;=\; -\frac{5}{3} \;=\; f'^+(0.25).
	$$
	Thus $f^+$ and its first derivative are continuous on $[0,0.5)$.
\end{proof}

\funcIneq*
\begin{proof}
	$(i)$ By definition of $f^+(x)$, it is either equal to $\sqrt{1-\frac{20}{9} x}$ or $\frac{13}{12} - \frac{5}{3}x$, therefore it suffices to prove for any $x\in [0, 0.5)$
	$$
	\sqrt{1-\frac{20}{9} x} \leq \frac{13}{12} - \frac{5}{3}x.
	$$
	Both sides are nonnegative, therefore it suffices to prove :
	$$
	1 - \frac{20}{9}x \;\le\; \Bigl(\frac{13}{12} - \frac{5}{3}x\Bigr)^2.
	$$
	Expanding gives
	$$
	\Bigl(\frac{13}{12} - \frac{5}{3}x\Bigr)^2 - \Bigl(1 - \frac{20}{9}x\Bigr)
	= \frac{25}{9}\Bigl(x - \frac{1}{4}\Bigr)^2 \;\ge\; 0,
	$$
	as required.
	
	\noindent$(ii)$ The branches $\sqrt{1 - \frac{20}{9}x}$ and $\frac{13}{12} - \frac{5}{3}x$ are concave. By Lemma~\ref{lemma:der-cont}, $f^+$ is concave on $[0,0.5)$. Hence for $x\in[0,0.5)$,
	$$
	f^+(x) \;\ge\; 2x\cdot f^+(0.5^-) + (1-2x)\cdot f^+(0)
	\;=\; 2x\cdot\frac{1}{4} + (1-2x)\cdot 1
	\;=\; 1 - \frac{3}{2}x,
	$$
	and the tangent bound at $x=0$ gives
	$$
	f^+(x) \;\le\; f^+(0) + x\,f'^+(0) \;=\; 1 - \frac{10}{9}x.
	$$
\end{proof}

\fProduct*
\begin{proof}
    Note that for $x\in [0, 0.25)$ we have
    $$
    f'^+(x) = -\frac{10}{9} \cdot \frac{1}{\sqrt{1 - \frac{20}{9} x}},
    $$
    which is a decreasing function and is negative on the entire domain. For $x \in [0.25, 0.5)$, by definition of $f^+$, we have $f'^+(x) = \frac{-5}{3}$. By \Cref{lemma:der-cont}, $f'^+$ is continuous. Consequently $f'^+$ is a continuous decreasing function which is negative on the entire domain. We have
    $$
    \Bigl(\ln (f^+(x))\Bigr)' = \frac{f'^+(x)}{f^+(x)}.
    $$
    Since $f'^+(x)$ is decreasing and negative, its absolute value is increasing. By definition of $f^+(x)$, it is decreasing and positive, so its absolute value is also decreasing. Therefore, $\left\vert \frac{f'^+(x)}{f^+(x)}\right\vert$ is increasing, but since $f'^+(x)$ is negative, we conclude that the first derivative of $\ln (f^+(x))$ is decreasing, and hence it is concave.
	\REM{We divide into subcases.
	\emph{$(i)$ Both $x,y\ge 0.25$.} Then
	\begin{align*}
		f^+(x)f^+(y)
		&= \Bigl(\frac{13}{12} - \frac{5}{3}x\Bigr)\Bigl(\frac{13}{12} - \frac{5}{3}y\Bigr)
		= \frac{13^2}{12^2} - \frac{13\cdot 5}{12\cdot 3}(x+y) + \frac{25}{9}xy,\\
		f^+\Bigl(\frac{x+y}{2}\Bigr)^2
		&= \Bigl(\frac{13}{12} - \frac{5}{6}(x+y)\Bigr)^2
		= \frac{13^2}{12^2} - \frac{13\cdot 5}{12\cdot 3}(x+y) + \frac{25}{9}\Bigl(\frac{x+y}{2}\Bigr)^2,
	\end{align*}
	and $(\frac{x+y}{2})^2\ge xy$ proves the claim.
	\emph{$(ii)$ One of $x,y$ is $\ge 0.25$ and the other $<0.25$.} WLOG $x\ge 0.25>y$. By Lemma~\ref{lemma:func-ineq}$(i)$,
	\begin{align*}
		f^+(x)f^+(y)
		&\le \Bigl(\frac{13}{12} - \frac{5}{3}x\Bigr)\Bigl(\frac{13}{12} - \frac{5}{3}y\Bigr) \\
		&= \frac{13^2}{12^2} - \frac{13\cdot 5}{12\cdot 3}(x+y) + \frac{25}{9}xy
		\;\le\;
		\frac{13^2}{12^2} - \frac{13\cdot 5}{12\cdot 3}(x+y) + \frac{25}{9}\Bigl(\frac{x+y}{2}\Bigr)^2 \\
		&= f^+\Bigl(\frac{x+y}{2}\Bigr)^2.
	\end{align*}
	\emph{$(iii)$ Both $x, y \leq 0.25$.} It suffices to prove
	$$
	f^+(x)^2 f^+(y)^2 \leq f^+(\frac{x+y}{2})^4.
	$$
	Expanding gives
	\begin{align*}
		f^+(x)^2 f^+(y)^2 & = (1 - \frac{20}{9} x) (1 - \frac{20}{9} y) \\
		& = 1 - \frac{20}{9} (x + y) + \frac{400}{81} xy \\
		& \leq 1 - \frac{20}{9} (x + y) + \frac{400}{81} (\frac{x+y}{2})^2 = f^+(\frac{x+y}{2})^4.
	\end{align*}
	Combining these results, the proof is complete.}
\end{proof}

\funcInequality*
\begin{proof}
    Note that if $x \in [0.5, 1]$, both $f^+(x)$, and $f^-(x)$ are zero and the inequality follows. If $x \in [0.25, 0.5)$, by definition of $f^+, f^-$, and $x \geq 0.25$, we have
    $$
    f^+(x) = \frac{13}{12} - \frac{5}{3} x = (1 - x) + (\frac{1}{12} - \frac{2}{3} x) \leq (1 - x) + (\frac{1}{12} - \frac{2}{12}) \leq (1 - x) = f^-(x).
    $$
    For the final case $x \in [0, 0.25)$, we must prove $\sqrt{1 - \frac{20}{9} x} \leq 1 - x$. By squarring both sides, this is equivalent to
    $$
    1 - \frac{20}{9} x \leq 1 - 2x + x^2,
    $$
    which simplifies to
    $$
    0 \leq \frac{2}{9} x + x^2,
    $$
    and this holds.
\end{proof}

\end{document}